\documentclass{amsart}

\usepackage[top=1in, bottom=1in, left=1in, right=1in]{geometry}
\usepackage{lineno}
\usepackage[english]{babel}
\usepackage{framed}
\usepackage[normalem]{ulem}
\usepackage{mathtools}
\usepackage{amsmath}
\usepackage{amsthm}
\usepackage{amssymb}
\usepackage{amsfonts}
\usepackage{a4wide}
\usepackage{booktabs,tabularx,array}
\usepackage{enumerate}
\usepackage{appendix}
\usepackage{stmaryrd}
\usepackage{hyperref}
\usepackage{url}
\usepackage{latexsym}
\usepackage{indentfirst}
\usepackage{placeins}
\usepackage{booktabs}
\usepackage{algorithm}
\usepackage[noend]{algpseudocode}
\usepackage{array, makecell}
\usepackage{color,xcolor,colortbl}
\usepackage{subfig}
\usepackage{graphicx,color} 
\usepackage{diagbox}
\usepackage[capitalize]{cleveref}
\usepackage{braket}
\usepackage{nicematrix}
\usepackage{tikz}
\usetikzlibrary{quantikz2}
\usepackage{mathrsfs}

\theoremstyle{plain}
\newtheorem{theorem}{Theorem}[section]
\newtheorem{lemma}[theorem]{Lemma}
\newtheorem{prop}[theorem]{Proposition}
\newtheorem{cor}[theorem]{Corollary}

\theoremstyle{definition}
\newtheorem{defn}{Definition}[section]

\theoremstyle{remark}
\newtheorem{rmk}[theorem]{Remark}

\DeclareMathOperator{\B}{B}
\DeclareMathOperator{\U}{U}
\DeclareMathOperator{\PU}{PU}

\newcommand{\M}{\operatorname{M}}

\newcommand{\Sp}{\operatorname{Sp}}
\newcommand{\SL}{\operatorname{SL}}
\newcommand{\GL}{\operatorname{GL}}

\newcommand{\Cliff}{\mathcal C}
\newcommand{\SC}{\mathcal{SC}}

\definecolor{gscLine}{HTML}{225EA8}
\definecolor{gscFill}{HTML}{DEEBF7}
\definecolor{piLine}{HTML}{2B8CBE}
\definecolor{piFill}{HTML}{A6DCEF}
\definecolor{scLine}{HTML}{D95F02}
\definecolor{scFill}{HTML}{FDD0A2}
\definecolor{formLine}{HTML}{E68600}
\definecolor{formFill}{HTML}{FEE8B0}
\definecolor{lscLine}{HTML}{6A51A3}
\definecolor{lscFill}{HTML}{DADAEB}

\tikzset{
  set boundary/.style={line width=1.15pt},
  set label/.style={align=center, font=\normalsize},
  main label/.style={align=center, font=\normalsize\bfseries}
}

\begin{document}
\raggedbottom

\title[Semi-Cliffordness of the Clifford hierarchy]{Semi-Cliffordness of the Clifford hierarchy for a single qudit in composite dimensions}

\author{Yifei Qi}
\address{\'Ecole polytechnique f\'ed\'erale de Lausanne (EPFL), Lausanne, Switzerland}
\curraddr{}
\email{yifei.qi@epfl.ch}
\thanks{}

\author{Rahul Sarkar}
\address{Department of Mathematics, University of California, Berkeley, CA 94720}
\curraddr{}
\email{rsarkar@berkeley.edu}
\thanks{}

\subjclass[2020]{Primary: 81P65. Secondary: 20H25.}

\keywords{Clifford hierarchy, Semi-Cliffords, Qudits.}

\date{}

\dedicatory{}

\commby{}
\begin{abstract}
An important question in quantum information theory asks whether every gate in the Clifford hierarchy is semi-Clifford or not, as such gates admit resource-efficient implementations via gate teleportation. Extending recent results for prime dimensions~\cite{silva_clifford_2025}, we prove that every hierarchy gate on a single qudit of dimension \(d\) is semi-Clifford, if and only if \(d\) is square-free. In composite dimensions, \(\mathbb{Z}_d^2\) is a symplectic module rather than a vector space, motivating a distinction between four types of gates. Semi-Clifford gates ($\mathcal{SC}$) are the ones that become diagonal under left and right multiplication by Clifford gates, while a broader class $\mathcal{N}$ admits a permutation diagonal form, under left and right multiplication by Cliffords. Lagrangian semi-Clifford gates ($\mathcal{LSC}$) conjugate some maximal abelian Pauli subgroup to another, whereas generalized semi-Clifford gates ($\mathcal{GSC}$) map some maximal abelian Pauli subalgebra to another under conjugation. In square-free dimensions, we have \(\mathcal{SC}=\mathcal{LSC}\) and \(\mathcal{N}=\mathcal{GSC}\), but when \(d\) is not square-free, these equivalences can fail, as Lagrangian submodules can become non-free. We also show that every third-level gate of the one-qudit Clifford hierarchy is a generalized semi-Clifford in arbitrary dimensions.
\end{abstract}

\maketitle

\section{Introduction}

Quantum error correction is essential for realizing large-scale quantum computation, since quantum information must remain protected while logical operations are performed~\cite{preskill_quantum_2018}. Pauli operators, generated by the shift and phase operators, form an orthogonal basis for the space of linear operators (Definition~\ref{def:pauli_weyl}), so any error operator can be expressed as a linear combination of them. Clifford gates are precisely the unitaries that preserve this structure under conjugation, mapping each Pauli operator to another Pauli operator. This property simplifies error tracking through Clifford circuits and underlies their central role in fault-tolerant quantum computation. However, Clifford gates alone are not universal, and the non-Clifford operations are generally more costly to realize fault-tolerantly~\cite{Gottesman_FT_1998, Eastin_Knill_2009, Bravyi_Kitaev_2005}. A standard approach to implement such non-Clifford gates is quantum gate teleportation~\cite{Gottesman_1999}. In this protocol, a gate \(U\) is implemented using a gate-dependent resource state, and the measurement outcome produces a Pauli byproduct \(P\). Correcting this byproduct requires implementing \(UPU^*\). This recursive reduction motivates the Clifford hierarchy, that is, for \(k\geq 2\),
\(\mathcal{C}_k =\left\{U : UPU^*\in\mathcal{C}_{k-1}\text{ for every Pauli operator } P\right\}\), where $U$ is also unitary (Definition~\ref{def:clifford-hierarchy}).
In particular, the first and second hierarchies are the Pauli and Clifford groups, respectively. Thus, gate teleportation simplifies the implementation of a non-Clifford gate by shifting the difficulty to preparing a gate-dependent resource state.  Within this hierarchy, semi-Clifford gates (Definition~\ref{def:semi-clifford}) are those that can be made diagonal by Clifford pre- and post-processing. This additional structure permits more resource-efficient teleportation schemes. More specifically, whereas standard teleportation uses a Bell measurement and two ancillary qubits per teleported qubit, diagonal gates (Definition~\ref{def:diagonal-gates}) use a one-bit teleportation that reduces the ancillary overhead~\cite{Zhou_2000}.

Although the preceding discussion has focused primarily on qubits, higher-dimensional qudit systems have attracted increasing interest due to their potential advantages in quantum computation, simulation, and physical implementation; see \cite{Wang_2020} for a review. The Pauli--Clifford framework underlying quantum error correction has likewise been extended from qubits to qudits, both in prime dimensions and in arbitrary dimensions~\cite {Gottesman1999FaultTolerant, Hostens_2005}. More recently, stabilizer-code constructions tailored specifically to composite dimensions have highlighted the ring-theoretic and symplectic structures required in this setting~\cite{Gunderman_2025}.

\subsection{Related work}
The studies between Clifford hierarchy and semi-Clifford gates were first done for qubits (\(d=2\)) and subsequently extended to higher-dimensional qudits (\(d>2\)) \cite{silva_efficient_2021}, raising a more general question to determine the triples \((d,n,k)\) for which every hierarchy gate on \(n\) qudits of dimension \(d\) at the \(k\)th level of the Clifford hierarchy is semi-Clifford. 

Recent work on the single-qudit Clifford hierarchy in prime dimension shows that every hierarchy gate is semi-Clifford~\cite{silva_clifford_2025}. In contrast, multi-qubit hierarchy gates need not be semi-Clifford, motivating the introduction of generalized semi-Clifford operators~\cite{zeng_semi-clifford_2008}. In the multi-qubit setting, semi-Cliffordness is defined by the requirement that a unitary conjugate some maximal abelian Pauli subgroup onto another. This is equivalent to the normal form $U=C_1DC_2$, where $C_1,C_2$ are Clifford operators and $D$ is diagonal. Generalized semi-Cliffordness requires only that the complex linear span of some maximal abelian Pauli subgroup be conjugated onto another such span; its corresponding normal form is $U=C_1\Pi DC_2$, where $\Pi$ is a computational-basis permutation~\cite{zeng_semi-clifford_2008}. The span condition preserves the associated maximal abelian operator algebra without requiring individual Pauli operators to remain Pauli operators under conjugation. These characterizations also extend to prime-dimensional qudits~\cite{silva_efficient_2021}. Further structural results on semi-Clifford and generalized semi-Clifford gates within the qubit hierarchy were obtained in \cite{Anderson_2024}, while permutation gates have more recently been studied through affine-equivalence methods~\cite{anderson2025affineequivalence}. For general qudit dimensions, the subgroup and algebra conditions must be distinguished from their respective normal forms, since maximal abelian Pauli subgroups need not all be equivalent under the conjugation by Clifford gates. To avoid ambiguity, we reserve \emph{semi-Clifford} for $U=C_1DC_2$ and call the subgroup-conjugation condition \emph{Lagrangian semi-Cliffordness}. We retain \emph{generalized semi-Clifford} for the span-conjugation condition and refer separately to $U=C_1\Pi DC_2$ as the \emph{permutation diagonal normal form}. Table~\ref{tab:known-boundaries} summarizes the known
boundaries and the results established in this work.
\begin{table}[t]
\centering
\footnotesize
\setlength{\tabcolsep}{4pt}
\renewcommand{\arraystretch}{1.12}
\begin{tabularx}{\textwidth}{
    @{} c c c >{\raggedright\arraybackslash}X l @{}}
\toprule
\(d\) & \(n\) & Level/sector & Conclusion & Reference \\
\midrule

\(2\)
& \(1,2\)
& all \(k\)
& Every hierarchy gate is \(\mathcal{SC}\).
& \cite{zeng_semi-clifford_2008} \\

\(2\)
& \(3\)
& \(k=3\)
& Every hierarchy gate is \(\mathcal{SC}\).
& \cite{zeng_semi-clifford_2008} \\

\(2\)
& \(4\)
& \(k=3\)
& Every hierarchy gate is \(\mathcal{SC}\).
& \cite{anderson2025affineequivalence} \\

\(2\)
& \(n\ge 7\)
& \(k=3\)
& Non-\(\mathcal{SC}\) hierarchy gates exist, beginning with the seven-qubit Gottesman--Mochon counterexample.
& \cite{beigi2009c3semicliffordgeneralizedsemiclifford} \\

\(2\)
& arbitrary
& \(k=3\)
& Every hierarchy gate is \(\mathcal{GSC}\).
& \cite{beigi2009c3semicliffordgeneralizedsemiclifford,Pllaha_2020} \\

\(2\)
& \(n>2\)
& \(k>3\)
& Non-\(\mathcal{SC}\) hierarchy gates exist.
& \cite{zeng_semi-clifford_2008} \\

\(2\)
& \(n=5\)
& \(k=5\)
& The generalized semi-Clifford conjecture is false.
& \cite{desilva2026generalisedsemicliffordconjecturefalse} \\

prime
& arbitrary
& diagonal, all \(k\)
& The diagonal hierarchy is completely classified.
& \cite{Cui_2017} \\

prime
& \(1\); also \(n=2\) for \(p=3\)
& \(k=3\)
& Every hierarchy gate is \(\mathcal{SC}\) and admits efficient semi-Clifford teleportation.
& \cite{silva_efficient_2021} \\

odd prime
& \(n\le 2\)
& \(k=3\)
& Every hierarchy gate is \(\mathcal{SC}\).
& \cite{chen2024characterisingsemicliffordgatesusing} \\

odd prime
& \(1\)
& all \(k\)
& Every hierarchy gate is \(\mathcal{SC}\); each level has a unique normal form and an explicit cardinality.
& \cite{silva_clifford_2025} \\

prime
& arbitrary
& \(k\leq 4\)
& Every hierarchy gate is \(\mathcal{GSC}\).
&\cite{marcus2026generalizedsemicliffordconjectureholds}\\

all \(d\geq2\)
& \(1\)
& all \(k\)
& Every hierarchy gate is semi-Clifford if and only if
  \(d\) is square-free.
& Theorem~\ref{thm:semi-clifford-equivalent} \\

all \(d\geq2\)
& \(1\)
& \(k=3\)
& Every hierarchy gate is generalized semi-Clifford.
& Theorem~\ref{thm:level-three-gsc-arbitrary} \\

\bottomrule
\end{tabularx}
\caption{Known semi-Clifford boundaries. \(\mathcal{SC}\) and \(\mathcal{GSC}\) denote semi-Clifford and generalized semi-Clifford, respectively.}
\label{tab:known-boundaries}
\end{table}

In this work, we consider an arbitrary dimension \(d\), with the underlying Pauli–Clifford structure governed by the symplectic module \(\mathbb Z_d^2\). Its symplectic form encodes Pauli commutation relations, while its Lagrangian geometry underlies the semi-Clifford structure. To handle composite dimensions, we use projective geometry over finite rings, matrix reduction, and finite character duality as ring-theoretic substitutes for the vector-space methods available in prime dimensions~\cite{tignol1986symplectic,havlicek_projective_2008,albouy2008matrix,brown1993matrices,conrad_characters_2007}.

\subsection{Main results}
\label{ssec:main-results}
In Section~\ref{ssec:semi_clifford_generalized}, we first provide equivalent conditions for four types of gates, namely semi-Clifford gates \(\mathcal{SC}\), gates admitting permutation diagonal form \(\mathcal N\), Lagrangian semi-Clifford gates \(\mathcal{LSC}\), and generalized semi-Clifford gates \(\mathcal{GSC}\) (Lemma~\ref{lem:unimodular-direction-criteria} and Lemma~\ref{lem:pauli-conjugation-equivalence}). Furthermore, we summarize these relations for the four types of gates in Figure~\ref{fig:semi-clifford-venn}.

We then establish a sharp dimensional criterion for semi-Cliffordness, that is, every gate in the single-qudit Clifford hierarchy is semi-Clifford if and only if \(d\) is square-free (Theorem~\ref{thm:semi-clifford-equivalent}). The square-free case (Lemma~\ref{lem:squarefree-semiclifford}) follows from a tensor-product factorization of every hierarchy level into its prime-power components (Lemma~\ref{lem:coprime-hierarchy-factorization}), together with the known prime-dimensional results in~\cite{zeng_semi-clifford_2008,silva_clifford_2025}. In square-free dimensions,  we also prove that the subgroup and algebra conjugation conditions are equivalent to the corresponding normal forms, that is, \(\mathcal{LSC}=\mathcal{SC}\) and \(\mathcal{GSC}=\mathcal N\)(Theorem~\ref{thm:square-free-class-equivalence}).  In contrast, when \(d\) is not square-free, non-free Lagrangians can break these equivalences between semi-Clifford and Clifford hierarchy gates, as demonstrated by the dimension-nine examples in Section~\ref{ssec:d9-prototype}. This construction generalizes to a third-level gate that is not semi-Clifford in a single qudit when \(d\) is not square free~\ref{lem:explicit-clifford-shears}. Besides, we derive a more complex relation among the four types of gates when \(d\) is not square-free in Figure~\ref{fig:dim9-relations}.

Finally, we show that every third-level gate on a single qudit is generalized semi-Clifford in arbitrary dimension $d$ (Theorem~\ref{thm:level-three-gsc-arbitrary}). 

\subsection*{Acknowledgments} Y.Q. was supported by The Chinese University of Hong Kong, Shenzhen. She also acknowledges the UC Berkeley Concurrent Enrollment Program, through which this collaboration was initiated. R.S. was supported by the U.S. Department of Energy, Office of Science, Accelerated Research in Quantum Computing Centers, Quantum Utility through Advanced
Computational Quantum Algorithms, grant no. DE-SC0025572. 

\subsection*{On AI use} The formal planning, main results, mathematical constructions, and composition for this document were completed by the authors. During the exploratory stages of this work, AI tools were consulted asynchronously as supplemental aids; for example, the reasoning agent Rethlas \cite{ju2026automated} helped identify candidate structures for the dimension-9 counterexample. All AI outputs were rigorously reviewed, independently verified, and implemented serially.
\section{Preliminaries}
\label{sec:prelim}

Let $\mathbb{N}$ denote the set of positive integers, $\mathbb{Z}$ denote the set of integers, and $\mathbb Q$ denote the set of rational numbers. Fix an integer \(d\ge 2\), which will denote the \textit{qudit dimension}, and write its prime factorization as $d\coloneqq \prod_{j=1}^s p_j^{\alpha_j}$, where \(p_1,\dots,p_s\) are distinct primes, \(\alpha_j\in \mathbb N\) for every \(j\), and \(s\) denotes the number of distinct prime divisors of \(d\). We write \(\mathbb{Z}_d:= \mathbb Z / d \mathbb Z\) and \(\mathbb{Z}_d^\times\) for its group of units. We denote the Hilbert space of the qudit by \(\mathcal H := \mathbb C^d\), with computational basis \(\{\, |x\rangle : x\in \mathbb{Z}_d\,\}\), and set \(\omega := e^{2\pi i/d}\) and \(\tau:= e^{\pi i/d}\). The dimension \(d\) is \textit{square-free} when every \(\alpha_j=1\), and \textit{non-square-free} when \(\alpha_j\ge 2\) for at least one \(j\). For each prime-power factor, we denote \(\mathcal{H}_j:=\mathbb C^{q_j}\) and \(\omega_j:=e^{2\pi i/q_j}\).

We denote the set of linear operators on \(\mathcal H\) as \(\B(\mathcal H) :=\operatorname{End}_{\mathbb C}(\mathcal{H})\). We equip \(\B(\mathcal H)\) with the Hilbert--Schmidt inner product \(\langle A,B\rangle_{\mathrm{HS}}:=\operatorname{Tr}(A^*B)\), where \(\operatorname{Tr}(\cdot)\) denotes the trace and \(A^*\) denotes the adjoint. The identity matrix will be denoted \(I\) of the relevant dimension.  We write \(\U(\mathcal{H})\) for the \textit{unitary group} on $\mathcal{H}$ and use the term \emph{gate} to mean a unitary operator on \(\mathcal{H}\). We denote \(\mathbb T := \{z\in \mathbb C: |z|=1\}\) for the \textit{circle group}, and $\PU(\mathcal{H}) := \U(\mathcal{H}) / \mathbb{T} I$ for the \textit{projective unitary group} as the quotient group. The projective class of a unitary \(U\) is denoted by \([U]\). 

For a set \(S\), we denote its cardinality as \(|S|\). The \emph{order} of a group \(G\) is its cardinality \(|G|\), and the order of an element \(g\in G\) is the cardinality of the cyclic subgroup \(\langle g\rangle\); an element has infinite order if this subgroup is infinite. For a prime \(p\), a finite group is called a \(p\)-group if its order is a power of \(p\). 

\subsection{Symplectic Modules}
\label{ssec:symplectic-modules}

Let \(R\) be a commutative ring with identity. For \(k\in\mathbb N\), we write \(\M_k(R)\) for the ring of \(k\times k\) matrices over \(R\), \(\GL_k(R)\) for its group of invertible matrices, and \(\SL_k(R)\) for the subgroup of matrices with determinant one. Let \(\mathcal R\) be an \(R\)-module. A subset \(\{v_1,\ldots,v_n\} \subseteq \mathcal R\) is called an \(R\)-basis if \(a_1v_1+\cdots+a_n v_n = 0\) implies \(a_1=\cdots=a_n=0\), where \(a_1,\ldots,a_n\in R\), and the subset is called a \emph{generating set} of a submodule \(\mathcal{R}' \subseteq \mathcal{R}\), if any element \(u \in \mathcal{R}'\) can be expressed as \(u=a_1v_1+\cdots+a_n v_n\), for some \(a_1,\ldots,a_n\in R\). We say that \(\mathcal R\) is a
\emph{free module} if it admits an \(R\)-basis \(\{v_1,\ldots,v_n\}\), and in this case we say that \(\mathcal R\) has \textit{rank} \(n\). In particular, \(R^n\) is free of rank \(n\) with its standard basis \(\{e_1,\cdots,e_n\}\). An \(R\)-bilinear map \(f\colon\mathcal R\times\mathcal R\to R\) is called an \emph{alternating form} if \(f(v,v)=0\) for every \(v\in\mathcal R\). We call \(f\) \emph{nondegenerate} if, for every \(v\in\mathcal R\), the condition \(f(v,v')=0\) for all \(v'\in\mathcal R\) implies \(v=0\). A nondegenerate alternating form is called a \emph{symplectic form}, and the pair \((\mathcal R,f)\) is then called a \emph{symplectic module}.

Let us fix a symplectic module \((\mathcal R,f)\). For an \(R\)-submodule \(L\subseteq\mathcal R\), we define its \emph{\(f\)-orthogonal complement} by \(L^\perp:=\{v\in\mathcal R: f(v,\ell)=0, \; \forall \ell\in L\}\), which is an \(R\)-submodule by bilinearity. We call \(L\) \emph{isotropic} if \(L\subseteq L^\perp\), and \emph{Lagrangian} if \(L=L^\perp\). Notice that a Lagrangian submodule \(L\) is maximally isotropic, that is, \(L\) is not properly contained in another isotropic submodule.

We now fix \(k\geq1\), and specialize to free symplectic modules of even rank \((R^{2k},f)\). An element \(v=(c_1,\ldots,c_{2k})^{\mathsf T}\in R^{2k}\) is \emph{unimodular} if its coordinates generate the unit ideal, that is, \(Rc_1+\cdots+Rc_{2k}=R\). Such an element generates a free submodule \(Rv\) with basis \(\{v\}\), because if \(rv=0\) for some \(r \in R\), then by definition there exist \(b_1,\cdots,b_{2k}\in R\) such that \(\sum_{j=i}^{2k} b_jc_j=1\), giving \(r=\sum_{j=1}^{2k} b_j(rc_j)=0\). Moreover, \(Rv\) is isotropic because \(f\) is alternating. Below, Lemma~\ref{lem:symplectic-size} gives a criterion for an isotropic submodule to be Lagrangian and further shows that every free Lagrangian submodule of \(R^{2k}\) has rank \(k\), and each of its basis elements is unimodular, for any basis. We first recall some basic definitions and facts. For a finite abelian group \(A\), let \(\operatorname{Char}(A)\) denote its \emph{character group}, consisting of all group homomorphisms from \(A\) to \(\mathbb T\), with the group operation being point-wise multiplication; thus in \(\operatorname{Char}(A)\), the product of the characters \(\chi_1\) and \(\chi_2\)  is defined as \((\chi_1\cdot\chi_2)(a):=\chi_1(a)\chi_2(a)\), for all \(a\in A\). Since \(A\) is isomorphic to \(\operatorname{Char}(A)\), we have \(|\operatorname{Char}(A)|=|A|\); see \cite[Theorem~3.13]{conrad_characters_2007}. For a subgroup \(B\subseteq A\), we define its \emph{annihilator} as \(B^\circ : =\{\eta\in\operatorname{Char}(A): \eta(b)=1, \, \forall b\in B\}\), which is a subgroup of \(\operatorname{Char}(A)\). Let \(\pi:A\to A/B\) be the quotient map. Each \(\chi\in B^\circ\) is constant on cosets of \(B\), so it determines a unique character \(\bar\chi\in\operatorname{Char}(A/B)\) by \(\bar\chi(a+B)=\chi(a)\), satisfying \(\chi=\bar\chi\circ\pi\). Conversely, composing any character \(\bar\chi\) of \(A/B\) with \(\pi\), gives a character of \(A\) that is trivial on \(B\), that is, \(\bar\chi\circ\pi\in B^\circ\). Both of the constructions are homomorphisms, yielding the canonical group isomorphism \(B^\circ\cong\operatorname{Char}(A/B)\).

Now given that \(R\) is finite, we say that a character on \((R,+)\) is \emph{faithful} if its kernel contains no nonzero ideal of \(R\), where an ideal of \(R\) is an additive subgroup of \(R\) closed under multiplication by every element of \(R\). When the ring is \(\mathbb Z_d\), it always has a faithful character \(\chi(x)=\omega^x\), for \(x\in\mathbb Z_d\). Given the symplectic module \((R^{2k},f)\), such a character induces a homomorphism
\[
\Psi_{f,\chi}\colon (R^{2k},+)\longrightarrow\operatorname{Char}((R^{2k},+)),
\qquad
\Psi_{f,\chi}(v)(w):=\chi\bigl(f(v,w)\bigr).
\]
For every submodule \(L\subseteq R^{2k}\) and
\(v\in R^{2k}\), the image \(f(v,L):=\{f(v,\ell):\ell\in L\}\) is an ideal of \(R\). Using the nondegeneracy of \(f\) and the faithfulness of \(\chi\), we obtain \(\ker\Psi_{f,\chi}=\{0\}\), so \(\Psi_{f,\chi}\) is injective. Since \(|\operatorname{Char}((R^{2k},+))|=|R^{2k}|\), it follows that \(\Psi_{f,\chi}\) is an isomorphism.  Faithfulness of \(\chi\) therefore gives \(\Psi_{f,\chi}(v)\in L^\circ\) (i.e. \(\chi(f(v,\ell))=1\ \text{for all }\ell\in L\)) if and only if \(f(v,\ell)=0, \ \text{for all }\ell\in L\). Its restriction \(\Psi_{f,\chi}|_{L^\perp}\) identifies \(L^\perp\) with \(L^\circ\), giving \(|L^\perp|=|L^\circ|\); compare \cite[Proposition~2.2]{tignol1986symplectic} and \cite[Exercise~7, Section~3]{conrad_characters_2007}. 

\begin{lemma}
\label{lem:symplectic-size}
Let \(R\) be a finite commutative ring with identity admitting a faithful character \(\chi:(R,+)\to\mathbb T\), and let \((R^{2k},f)\) be a symplectic \(R\)-module. Then every submodule \(L\subseteq R^{2k}\) satisfies \(|L|\,|L^\perp|=|R|^{2k}\). If \(L\) is isotropic, then \(|L|\leq|R|^k\), with equality if and only if \(L\) is Lagrangian. Moreover, if \(\{v_1,\ldots,v_k\}\) is an \(R\)-basis for a free Lagrangian \(L\), then each basis element is unimodular in \(R^{2k}\).
\end{lemma}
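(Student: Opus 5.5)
The plan rests on the identification established just before the lemma: the restriction \(\Psi_{f,\chi}|_{L^\perp}\) is an isomorphism onto \(L^\circ\), so \(|L^\perp|=|L^\circ|\). Together with the canonical isomorphism \(L^\circ\cong\operatorname{Char}(R^{2k}/L)\) and \(|\operatorname{Char}(A)|=|A|\) for finite abelian \(A\), this gives
\[
|L^\perp|=|L^\circ|=|R^{2k}/L|=|R|^{2k}/|L|,
\]
which is the first claim.

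For the isotropic statement, the inclusion \(L\subseteq L^\perp\) gives \(|L|\le|L^\perp|=|R|^{2k}/|L|\), hence \(|L|^2\le|R|^{2k}\), that is, \(|L|\le|R|^k\). Equality \(|L|=|R|^k\) holds if and only if \(|L|=|L^\perp|\). Since \(L\subseteq L^\perp\) and both sets are finite, this is equivalent to \(L=L^\perp\), i.e. to \(L\) being Lagrangian.

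For the last claim, suppose \(\{v_1,\dots,v_k\}\) is a basis of a free Lagrangian \(L\), and assume toward a contradiction that some \(v_i\) is not unimodular. Then the ideal \(I\) generated by the coordinates of \(v_i\) is proper, so it lies in a maximal ideal \(\mathfrak m\). The natural idea is to take \(r\neq0\) with \(r\mathfrak m=0\) and conclude \(rv_i=0\), contradicting freeness. Such an \(r\) exists because a finite commutative ring is a product of local Artinian rings, and in each local factor the socle is nonzero; taking \(r\) in the socle of the factor corresponding to \(\mathfrak m\) gives \(r\mathfrak m=0\). In particular, \(r\) annihilates \(I\), so \(rv_i=0\) with \(r\ne0\), violating linear independence of the basis.

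The step I expect to be the main obstacle is this annihilator claim, since it depends on the Artinian and socle structure of finite rings rather than on anything proved earlier in the paper. If I want to avoid that structure theory, I would try a counting route instead, using only what is set up above. Freeness gives \(|L|=|R|^k\) directly. The map \(x\mapsto(f(x,v_1),\dots,f(x,v_k))\) from \(R^{2k}\) to \(R^k\) has kernel \(L^\perp=L\), so it is surjective by the size count. Choose \(x\) with \(f(x,v_i)=1\). Writing \(f(x,v_i)\) as a linear combination of the coordinates of \(v_i\) then shows that those coordinates generate \(R\). This is cleaner, and I would present it as the main argument, keeping the socle argument only as a remark.
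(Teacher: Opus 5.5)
Your proposal is correct and follows the paper's route. You get the size identity from \(|L^\perp|=|L^\circ|=|\operatorname{Char}(R^{2k}/L)|\), and you get unimodularity by a counting argument showing a pairing map is surjective, so that some \(x\) satisfies \(f(x,v_i)=1\); the only differences are that the paper pairs against the single cyclic submodule \(Rv_j\) (using \(|Rv_j|=|R|\)) rather than against all of \(L\) via \(L^\perp=L\), and your socle argument is a valid but unnecessary alternative.
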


\begin{proof}
First we suppose that $L \subseteq R^{2k}$ is a submodule. By the preceding discussion, we have \(|L^\perp| =|L^\circ|=|\operatorname{Char}(R^{2k}/L)|  =|R|^{2k}/|L|\). If \(L\) is isotropic, then \(L\subseteq L^\perp\), so \(|L|\leq|R|^{k}\), with equality exactly when \(L=L^\perp\), that is, Lagrangian.

Suppose now that \(L\) is free and Lagrangian, with basis \(\{v_1,\ldots,v_r\}\), i.e. \(L=Rv_1\oplus\cdots\oplus Rv_r\). Then \(|R|^r=|L|=|R|^k\), so \(r=k\). For \(j=1,\dots,k\), consider the \(R\)-linear map \(\varphi_j:R^{2k}\longrightarrow R\) defined by \(\varphi_j(x):=f(x,v_j)\), whose kernel is \((Rv_j)^\perp\). Since \(Rv_j\cong R\) as abelian groups, we have \(|Rv_j| = |R|\), and then the first isomorphism theorem for groups shows that the image of $\varphi_j$ has cardinality \(|R|^{2k}/|(Rv_j)^\perp|\). Combining with the results of the previous paragraph, we get \(|R|^{2k}/|(Rv_j)^\perp|=|Rv_j|=|R|\), and therefore \(\varphi_j\) is surjective. Therefore, there exists $x \in R^{2k}$ such that \(f(x,v_j) = 1 \in R\), which proves unimodularity of \(v_j\).
\end{proof}

Next, still assuming the symplectic module \((R^{2k},f)\), we define \emph{symplectic group} to be a subgroup of invertible \(R\)-linear map \(\operatorname{Sp}_{2k}(R):=\{S\in\operatorname{GL}_{2k}(R):f(Sv,Sw)=f(v,w), \; \forall v,w\in R^{2k}\}\), that is, \(\operatorname{Sp}_{2k}(R)\) consists of invertible matrices that preserve the bilinear form \(f\). Any element of \(\operatorname{Sp}_{2k}(R)\) is called a \emph{symplectic matrix} and defines a group action on \(R^{2k}\). We also denote \(\operatorname{Lag}(R^{2k})\) to be the set of Lagrangian submodules of \(R^{2k}\). The following lemma describes the natural symplectic action on this set.

\begin{lemma}[Symplectic action on Lagrangian submodules]
\label{lem:symplectic-action-lagrangians}
Let \((R^{2k},f)\) be a symplectic module and \(S\in\operatorname{Sp}_{2k}(R)\).  Then for every \(R\)-submodule \(L\subseteq R^{2k}\), we have \((S(L))^\perp=S(L^\perp)\). Consequently, for a Lagrangian submodule \(L\), \(S\cdot L:=S(L)\) defines a left action of \(\operatorname{Sp}_{2k}(R)\) on \(\operatorname{Lag}(R^{2k})\). This action preserves the freeness of the Lagrangian submodule.
\end{lemma}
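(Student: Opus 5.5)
The plan is to prove the three assertions of Lemma~\ref{lem:symplectic-action-lagrangians} in order: the identity \((S(L))^\perp=S(L^\perp)\), then the group action on \(\operatorname{Lag}(R^{2k})\), then preservation of freeness. Throughout, \(S\) is invertible with \(S^{-1}\in\operatorname{Sp}_{2k}(R)\), since \(f(S^{-1}v,S^{-1}w)=f(SS^{-1}v,SS^{-1}w)=f(v,w)\).

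\emph{Step 1.} For the identity \((S(L))^\perp=S(L^\perp)\) I would check both inclusions directly. If \(u\in L^\perp\), then for every \(\ell\in L\) we have \(f(Su,S\ell)=f(u,\ell)=0\). Hence \(Su\in(S(L))^\perp\). Conversely, let \(v\in(S(L))^\perp\) and put \(u:=S^{-1}v\). Then \(f(u,\ell)=f(Su,S\ell)=f(v,S\ell)=0\) for all \(\ell\in L\), so \(u\in L^\perp\) and \(v=Su\in S(L^\perp)\). Note also that \(S(L)\) is an \(R\)-submodule because \(S\) is \(R\)-linear. Alternatively, one could get equality from one inclusion plus the cardinality formula \(|L|\,|L^\perp|=|R|^{2k}\) of Lemma~\ref{lem:symplectic-size}, but that needs finiteness, so the direct argument is preferable.

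\emph{Step 2.} For the action, take \(L\) Lagrangian. Step 1 gives \((S(L))^\perp=S(L^\perp)=S(L)\), so \(S(L)\in\operatorname{Lag}(R^{2k})\). The action axioms are immediate from the fact that applying linear maps to subsets composes: \(I(L)=L\) and \((S_1S_2)(L)=S_1(S_2(L))\).

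\emph{Step 3.} For freeness, suppose \(L\) has basis \(\{v_1,\dots,v_n\}\). Then \(\{Sv_1,\dots,Sv_n\}\) generates \(S(L)\) by linearity. It is independent: if \(\sum a_iSv_i=0\), then \(S(\sum a_iv_i)=0\), and injectivity of \(S\) gives \(\sum a_iv_i=0\), so every \(a_i=0\). Hence \(S(L)\) is free of the same rank. Applying the same argument to \(S^{-1}\) shows that non-free Lagrangians also go to non-free ones, so freeness is preserved in both directions.

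No step is a real obstacle. The only point needing care is the reverse inclusion in Step 1, where one must use invertibility of \(S\) (through \(S^{-1}\)) rather than just the preservation of \(f\).
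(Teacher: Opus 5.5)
Your proof is correct and follows the paper's argument: the paper also derives \((S(L))^\perp=S(L^\perp)\) from \(f\)-invariance and invertibility (via \(x\in(S(L))^\perp\) if and only if \(S^{-1}x\in L^\perp\)), then gets the Lagrangian action and the preservation of freeness from the fact that invertible linear maps carry bases to bases. You spell out the two inclusions and the basis argument in more detail than the paper does.
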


\begin{proof}
For a \(R\)-submodule \(L\), as \(S\) preserves \(f\), we have \(x\in(S(L))^\perp\) if and only if \(S^{-1}x\in L^\perp\). This proves the identity \((S(L))^\perp=S(L^\perp)\). Furthermore, if \(L\) is Lagrangian, then \((S(L))^\perp=S(L^\perp)=S(L)\) and this shows that \(S\) sends Lagrangian submodules to Lagrangian submodules. For the second part of the lemma, the action laws follow from composition, and the first part. Since, any element of \(\operatorname{GL}_{2k}(R)\) preserves the freeness of a \(R\)-submodule of \(R^{2k}\) upon left multiplication, the last statement also follows.
\end{proof}

We next classify the Lagrangian submodules of \(\mathbb Z_{p^\alpha}^2\) up to symplectic transformations, where \(p\) is prime and \(\alpha\in\mathbb N\), using the \emph{standard symplectic form} \(f(v,w):=v_2w_1-v_1w_2\). for \(v=(v_1,v_2)^\top\) and \(w=(w_1,w_2)^\top\) in \(\mathbb Z_{p^\alpha}^2\). We note that the symplectic group satisfies \(\operatorname{Sp}_2(\mathbb Z_{p^\alpha})=\SL_2(\mathbb Z_{p^\alpha})\), where the equality follows from the identity \(f(Av, Aw)=\det(A)f(v,w)\), for \(A\in \GL_2(\mathbb Z_{p^\alpha})\), and for all \(v,w\in \mathbb Z_{p^\alpha}^2\). In particular, the standard basis $e_1:=(1,0)^\top$ and $e_2:=(0,1)^\top$ satisfy \(f(e_1,e_2)=-1\). For each integer $0\le m\le\lfloor\alpha/2\rfloor$, we define \(L_m
:=p^m\mathbb Z_{p^\alpha}e_1
\oplus p^{\alpha-m}\mathbb Z_{p^\alpha}e_2
\subseteq\mathbb Z_{p^\alpha}^2\). Each $L_m$ is isotropic, because the symplectic pairing of any two of its elements is divisible by $p^\alpha$. Moreover, \(|L_m|=p^{\alpha-m}p^m=p^\alpha\), so Lemma~\ref{lem:symplectic-size} implies that $L_m$ is Lagrangian. We further notice that only $L_0$ is free, whereas those with $m>0$ are non-free. We summarize the classification that every Lagrangian submodule of $\mathbb Z_{p^\alpha}^2$ is symplectically equivalent to one of \(L_m\) in the following lemma (see also~\cite[Sec.~1]{Albouy_2009}).

\begin{lemma}
\label{lem:lagrangian-classification}
Consider the symplectic module $\mathbb Z_{p^\alpha}^2$, where $p$ is prime and $\alpha\in\mathbb N$, equipped with the standard symplectic form $f$. 
Then
\begin{enumerate}[(i)]
    \item Every Lagrangian submodule \(L\in\mathbb Z_{p^\alpha}^2\) is symplectically equivalent to some standard Lagrangian \(L_m\), that is, there exists a symplectic matrix \(S \in \operatorname{Sp}_2(\mathbb Z_{p^\alpha})\) such that \(L = S (L_m)\).
    \item A submodule \(L\subseteq \mathbb Z_{p^\alpha}^2\) is Lagrangian if and only if \(|L|=p^\alpha\).
    \item The total number of Lagrangian submodules of \(\mathbb Z_{p^\alpha}^2\) is \( |\operatorname{Lag}(\mathbb Z_{p^\alpha}^2)|=1+p+p^2+\cdots+p^\alpha.\)
\end{enumerate}
\end{lemma}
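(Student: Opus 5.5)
Part (ii) I take first, since it is essentially Lemma~\ref{lem:symplectic-size}. If \(L\) is Lagrangian then \(|L|=p^\alpha\) by that lemma. Conversely, suppose \(|L|=p^\alpha\). The key observation is that in rank two every submodule is isotropic. Take \(v,w\in L\). Since \(\mathbb Z_{p^\alpha}\) is local, any element of \(\mathbb Z_{p^\alpha}^2\) is \(p^a u\) with \(u\) unimodular, so write \(v=p^a u\). Complete \(u\) to a basis \(\{u,u'\}\) with \(f(u,u')=1\); this is possible because unimodular vectors extend to \(\SL_2\). In these coordinates \(f(v,w)\) is \(p^a\) times the \(u'\)-coefficient of \(w\).

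Isotropy of \(L\) is not automatic, so I would instead argue through the structure of \(L\) as a subgroup. By the structure theorem over the PID quotient, \(L\cong \mathbb Z_{p^{\alpha-a}}\oplus\mathbb Z_{p^{\alpha-b}}\). More concretely, Smith normal form over \(\mathbb Z_{p^\alpha}\) gives a matrix \(A\in\GL_2\), which can be taken in \(\SL_2\) after rescaling a column by a unit, such that
\[
L=A\bigl(p^a\mathbb Z_{p^\alpha}e_1\oplus p^b\mathbb Z_{p^\alpha}e_2\bigr),\qquad a\le b .
\]
Rescaling a column by a unit does not change the submodule. Then \(|L|=p^{2\alpha-a-b}\), and the pairing of generators is \(\pm p^{a+b}\). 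So \(L\) is isotropic iff \(a+b\ge\alpha\), and \(|L|=p^\alpha\) forces \(a+b=\alpha\), hence isotropic. Lemma~\ref{lem:symplectic-size} then gives Lagrangian.

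This computation also proves (i). A Lagrangian \(L\) has \(|L|=p^\alpha\), so \(a+b=\alpha\) with \(a\le b\). Setting \(m=a\le\lfloor\alpha/2\rfloor\), we get \(L=S(L_m)\) with \(S=A\in\SL_2=\operatorname{Sp}_2\). So the main step is the Smith normal form for submodules of \(\mathbb Z_{p^\alpha}^2\), together with the adjustment of determinant to \(1\). I would do it directly. Pick \(v\in L\) of minimal valuation \(a\), write \(v=p^au\), and extend \(u\) to an \(\SL_2\) basis \(\{u,u'\}\). Every \(w\in L\) has coefficients divisible by \(p^a\). Subtracting multiples of \(v\), the remaining component lies in \(\mathbb Z_{p^\alpha}u'\cap L=p^b\mathbb Z_{p^\alpha}u'\) for some \(b\ge a\).

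For (iii) I count. The free Lagrangians are the lines \(\mathbb Z_{p^\alpha}u\) with \(u\) unimodular. Their number is (unimodular vectors)/(units), which is \((p^{2\alpha}-p^{2\alpha-2})/(p^\alpha-p^{\alpha-1})=p^\alpha+p^{\alpha-1}\).

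For \(m\ge1\), I would avoid computing orbit stabilizers and instead use a bijection. Every Lagrangian contained in \(p\mathbb Z_{p^\alpha}^2\) is exactly one of the non-free ones, since \(a\ge1\) exactly when \(m\ge1\). Such an \(L\) equals \(pL'\), where \(L'\) ranges over submodules of \(\mathbb Z_{p^\alpha}^2\) containing \(p^{\alpha-1}\mathbb Z_{p^\alpha}^2\). These \(L'\) correspond to submodules of \(\mathbb Z_{p^{\alpha-1}}^2\) of order \(p^{\alpha-2}\), that is, to Lagrangians of \(\mathbb Z_{p^{\alpha-2}}^2\) by (ii).

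Writing \(N_\alpha\) for the number of Lagrangians, this gives the recursion \(N_\alpha=p^\alpha+p^{\alpha-1}+N_{\alpha-2}\), with \(N_0=1\) and \(N_1=1+p\). This recursion yields \(1+p+\cdots+p^\alpha\). The bookkeeping in this correspondence, namely that \(L\mapsto L'\) is a bijection preserving the order condition, is the step I expect to need the most care.
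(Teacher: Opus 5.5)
Parts (i) and (ii) are correct and follow essentially the paper's route. That route is a Smith normal form $L=A\bigl(p^a\mathbb Z_{p^\alpha}e_1\oplus p^b\mathbb Z_{p^\alpha}e_2\bigr)$, the count $|L|=p^{2\alpha-a-b}$, and a unit rescaling of one column to put $A$ in $\operatorname{SL}_2$ (the paper's $D=\operatorname{diag}(u^{-1},1)$). Your direct minimal-valuation proof of the normal form is fine. So is your isotropy check via the generator pairing $-p^{a+b}$, which replaces the paper's appeal to $L=S(L_m)$ being Lagrangian. You should delete the false opening claim that every submodule of a rank-two module is isotropic, since $\mathbb Z_{p^\alpha}^2$ itself is not.

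The gap is in (iii), at exactly the step you flagged. Take $L\subseteq p\mathbb Z_{p^\alpha}^2$ of order $p^\alpha$ and $L'=\{x:px\in L\}$. Multiplication by $p$ has kernel $p^{\alpha-1}\mathbb Z_{p^\alpha}^2$ of order $p^2$, so $|L'|=p^{\alpha+2}$. Hence $L'/p^{\alpha-1}\mathbb Z_{p^\alpha}^2$ is a submodule of $\mathbb Z_{p^{\alpha-1}}^2$ of order $p^{\alpha}$, not $p^{\alpha-2}$; it is just $L$ itself, seen inside $p\mathbb Z_{p^\alpha}^2\cong\mathbb Z_{p^{\alpha-1}}^2$. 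Part (ii) says nothing about submodules of that order, so the passage to Lagrangians of $\mathbb Z_{p^{\alpha-2}}^2$ is unjustified as written.

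Your recursion $N_\alpha=p^\alpha+p^{\alpha-1}+N_{\alpha-2}$ is nevertheless true. The missing fact is that every non-free Lagrangian also contains $p^{\alpha-1}\mathbb Z_{p^\alpha}^2$. This follows from (i): if $L=S(L_m)$ with $1\le m\le\lfloor\alpha/2\rfloor$, then $L\supseteq S(p^{\alpha-m}\mathbb Z_{p^\alpha}^2)=p^{\alpha-m}\mathbb Z_{p^\alpha}^2\supseteq p^{\alpha-1}\mathbb Z_{p^\alpha}^2$. So quotient $L$ itself, not its preimage. The map $L\mapsto L/p^{\alpha-1}\mathbb Z_{p^\alpha}^2$ is a bijection from the non-free Lagrangians onto the submodules of order $p^{\alpha-2}$ in $p\mathbb Z_{p^\alpha}^2/p^{\alpha-1}\mathbb Z_{p^\alpha}^2\cong\mathbb Z_{p^{\alpha-2}}^2$. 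The inverse is well defined: any $M$ with $p^{\alpha-1}\mathbb Z_{p^\alpha}^2\subseteq M\subseteq p\mathbb Z_{p^\alpha}^2$ and $|M|=p^\alpha$ is Lagrangian by (ii), and non-free since $p^{\alpha-1}M=0$. Applying (ii) in $\mathbb Z_{p^{\alpha-2}}^2$ then gives $N_{\alpha-2}$; for $\alpha=2$ the only such $M$ is $p\mathbb Z_{p^2}^2$.

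Alternatively, you could keep your $L'$ and pass to orthogonal complements in $\mathbb Z_{p^{\alpha-1}}^2$ using $|M|\,|M^\perp|=p^{2\alpha-2}$. You would then observe that a submodule of order $p^{\alpha-2}$ is killed by $p^{\alpha-2}$, so it lies in $p\mathbb Z_{p^{\alpha-1}}^2\cong\mathbb Z_{p^{\alpha-2}}^2$. Neither of these steps is in your sketch.

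Your free count $p^\alpha+p^{\alpha-1}$ is right. Once repaired, your argument is a self-contained proof of (iii), which the paper only cites from Albouy.
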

For completeness, we provide a short proof of Lemma~\ref{lem:lagrangian-classification} below. We use a standard consequence of the Smith normal form, which states that every submodule \(L\subseteq \mathbb Z_{p^\alpha}^2\) is generated by two generators \(p^mf_1\) and \(p^nf_2\), such that \(L_m=p^m\mathbb Z_{p^\alpha}f_1\oplus p^n\mathbb Z_{p^\alpha}f_2\); the reader is referred to \cite[Chapter~15]{brown1993matrices} for details on the Smith normal form.

\begin{proof}[Proof of Lemma.~\ref{lem:lagrangian-classification}]
(i) Let $L$ be a non-trivial submodule. We will first claim that $L$ has a direct sum decomposition of the form $L = p^m\mathbb Z_{p^\alpha}f_1\oplus p^n\mathbb Z_{p^\alpha}f_2$, where \(\{f_1,f_2\} \subseteq \mathbb Z_{p^\alpha}^2\) is a basis, and $0 \le m \le n \le \alpha$, with $m < \alpha$. By the Smith normal form, there exists a basis \(\{f'_1,f'_2\} \subseteq \mathbb Z_{p^\alpha}^2\) that generates \(L\), such that \(L\) has the form $L = d_1 \mathbb Z_{p^\alpha} f'_1 \oplus d_2 \mathbb Z_{p^\alpha} f'_2$, where $d_1, d_2 \in Z_{p^\alpha}$ with \(d_1 \mid d_2\), and $d_2$ can be zero. We note that every nonzero \(d\in\mathbb Z_{p^\alpha}\) can be written as \(d=p^kv\), for \(0\leq k<\alpha\), and \(v\in\mathbb Z_{p^\alpha}^\times\), and since multiplication by a unit preserves \(\mathbb Z_{p^\alpha}\), we have \(d\mathbb Z_{p^\alpha}=p^k\mathbb Z_{p^\alpha}\). From this, we may conclude that \(L=p^m\mathbb Z_{p^\alpha}f'_1\oplus p^n\mathbb Z_{p^\alpha}f'_2\), for some \(0\leq m < \alpha\), and \(0\leq n \le \alpha\), where \(n=\alpha\) handles the case that \(d_2=0\). Thus the claim holds automatically when $d_2=0$; so let us now assume $d_2 \neq 0$, in which case $m,n < \alpha$. If $m \le n$, then we are done, otherwise we simply write $L=p^n\mathbb Z_{p^\alpha}f'_2 \oplus p^m\mathbb Z_{p^\alpha}f'_1$, and this proves the claim. We note that this decomposition implies $|L|=p^{2\alpha -m - n}$. Next, by Lemma~\ref{lem:symplectic-size}, if \(L\) is additionally Lagrangian, then \(|L|=p^\alpha\), which implies \(m+n=\alpha\), and since $m \le n$, this gives $m \le \lfloor\alpha/2\rfloor$. Thus we have shown that any Lagrangian submodule \(L\) has the direct sum decomposition  $L = p^m \mathbb Z_{p^\alpha} f_1 \oplus p^{\alpha-m} \mathbb Z_{p^\alpha} f_2$, for $0\le m\le\lfloor\alpha/2\rfloor$, and $\{f_1, f_2\}$ a basis in \(\mathbb Z_d^2\).

Next, \(\{f_1,f_2\}\) generates the module \(\mathbb Z_d^2\), since \(\mathbb Z_d^2\) has rank $2$. Thus there exists $A \in \mathrm{GL}_2(\mathbb Z_{p^\alpha})$ such that \(Ae_j=f_j\), for \(j=1,2\), and this yields $A\cdot L_m = L$. To upgrade $A$ to a symplectic matrix, let $u = \det(A) \in \mathbb Z_{p^\alpha}^\times$, and we define the diagonal matrix $D = \begin{pmatrix} u^{-1} & 0 \\ 0 & 1 \end{pmatrix} \in \mathrm{GL}_2(\mathbb Z_{p^\alpha})$. As multiplication by the unit $u^{-1}$ is an automorphism of the ideal $p^m \mathbb Z_{p^\alpha}$, we observe that $D$ stabilizes $L_m$, that is, \[D\cdot L_m = p^m \mathbb Z_{p^\alpha} (u^{-1} e_1) \oplus p^{\alpha-m} \mathbb Z_{p^\alpha} e_2 =p^m \mathbb Z_{p^\alpha} e_1 \oplus p^{\alpha-m} \mathbb Z_{p^\alpha} e_2  = L_m.\]
Let us define $S = A D$, which gives $\det(S) = \det(A)\det(D) = u \cdot u^{-1} = 1$, and therefore we deduce that $S \in \mathrm{SL}_2(\mathbb Z_{p^\alpha}) = \mathrm{Sp}_2(\mathbb Z_{p^\alpha})$ . Furthermore, \(S(L_m) = A\cdot D\cdot L_m = A\cdot L_m = L\), showing that $L$ is symplectically equivalent to $L_m$.

(ii) If \(L\subseteq \mathbb Z_{p^\alpha}^2\) is Lagrangian, we get \(|L|=p^\alpha\) from Lemma~\ref{lem:symplectic-size}. Conversely, suppose we are given a non-trivial submodule \(L \subseteq R^{2k}\) satisfying \(|L|=p^\alpha\). From the proof of part (i), we have that $L = p^m\mathbb Z_{p^\alpha}f_1\oplus p^n\mathbb Z_{p^\alpha}f_2$, for $0 \le m \le n \le \alpha$, with $m \neq \alpha$, and thus $|L|=p^{2\alpha - m -n}$. This now implies $m+n=\alpha$, and the same proof as in part (i) gives us that \(L = p^m \mathbb Z_{p^\alpha} f_1 \oplus p^{\alpha-m} \mathbb Z_{p^\alpha} f_2=S(L_m)\). Because a subgroup remains Lagrangian under a symplectic action by Lemma~\ref{lem:symplectic-action-lagrangians}, \(L\) is indeed Lagrangian.

(iii) By the counting formula for isotropic lines in \(\mathbb Z_{p^\alpha}^2\), see \cite[Section~1]{Albouy_2009}, the number of Lagrangian subgroups of \(\mathbb Z_{p^\alpha}^2\) is
\(|\operatorname{Lag}(\mathbb Z_{p^\alpha}^2)|
=
1+p+p^2+\cdots+p^\alpha.
\)
\end{proof}

Throughout the remainder of the paper, we use the standard symplectic form.

\subsection{Modular Clifford group and Clifford hierarchy}
\label{ssec:modular_clifford}

We define the \textit{shift operator} \(X\) and the \textit{clock operator} \(Z\) by 
\[
    X |x\rangle := |x+1 \rangle,
    \qquad
    Z |x\rangle := \omega^x |x\rangle \quad \text{for } x \in \mathbb Z_d,
\]
where \(x+1\) is evaluated in \(\mathbb Z_d\). This implies the group commutator relation \(Z X = \omega X Z\), and both \(X\) and \(Z\) have order \(d\). 

Let us consider the $\mathbb Z_d$-module \(\mathbb Z_d^2\), with matrices in \(\M_2(\mathbb Z_d)\) acting on this module
by left multiplication. We equip \(\mathbb Z_d^2\) with the standard symplectic pairing $f$ turning it into the symplectic module \((\mathbb Z_d^2,f)\):
\[
f(v,w) := v^\top Jw,
\qquad
J := \begin{pmatrix}0&-1\\1&0\end{pmatrix}, \qquad v,w \in \mathbb Z_d^2.
\]
Thus, if \(v=(a_1,b_1)^\top\) and \(w=(a_2, b_2)^\top\), then \(f(v,w)= a_2 b_1-a_1 b_2\).

\begin{defn}[Pauli group]
\label{def:pauli_weyl}
For any $v =(a,b)^\top\in \mathbb Z_d^2$, we define the ordered Weyl operator as $W_v:= X^a Z^b$. The \emph{Heisenberg-Weyl Pauli group} $\mathcal{P}$ is defined as the group generated by these operators and all scalar phases
\[\mathcal P:=\{\lambda W_{v}: \lambda\in \mathbb T,\; v\in \mathbb Z_d^2\}\footnote{\label{fn:pauli-phase}One can equivalently use the finite phase convention \(\mathcal P_{2d}:=\{\tau^n W_v:n\in\mathbb Z,\ v\in \mathbb Z_d^2\}\), and \(\mathcal P_d:=\{\omega^n W_v:n\in\mathbb Z,\ v\in \mathbb Z_d^2\}\) in the $d$ even and odd cases, respectively, to define the Pauli groups. In both cases, their normalizers in \(\mathcal U(\mathcal H)\) is the same as that of
\(\mathcal P=\{\lambda W_v:\lambda\in\mathbb T,\ v\in \mathbb Z_d^2\}\), and hence gives the same
Clifford group, and the same Clifford hierarchy $\mathcal{C}_k$, for $k \ge 2$. By contrast, the smaller convention \(\mathcal P_d:=\{\omega^n W_v:n\in\mathbb Z,\ v\in \mathbb Z_d^2\}\), although still a group, can have a strictly smaller normalizer for $d$ even.}
.\]
The operators \(\{W_v:v\in \mathbb Z_d^2\}\) form an orthonormal basis of \(\B(\mathcal H)\), under the Hilbert-Schmidt inner product. Moreover, the group commutator relation between \(X\) and \(Z\) dictates the canonical commutation of the Weyl operators, that is,
\[
    W_vW_w
    =
    \omega^{f(v,w)}W_wW_v,
    \qquad
    \forall v,w\in \mathbb Z_d^2.
\]
Thus, $W_u$ and $W_v$ commute if and only if $f(u,v)=0$.
\end{defn}

An abelian subgroup of a group \(G\) is called \emph{maximal abelian} in \(G\) if it is not properly contained in any other abelian subgroup of \(G\). Given a submodule \(L\subseteq \mathbb Z_d^2\), we denote the Pauli subgroup associated with \(L\) as \(\mathcal P_L:=\{\,\lambda W_\ell:\lambda\in\mathbb T,\ \ell\in L\,\}\). Then \(L\) is isotropic if and only if \(\mathcal{P}_L\) is abelian. Moreover, \(L\) is Lagrangian exactly when \(\mathcal{P}_L\) is a maximal abelian Pauli subgroup, meaning that it is maximal abelian in \(\mathcal{P}\). We define the \textit{Pauli subalgebra} of $L$ as \(\mathfrak A_L:=\operatorname{span}_{\mathbb C}\{W_\ell:\ell\in L\}
\subseteq \B\mathcal{(H)}\). This is a unital \(*-\)subalgebra, and it is abelian exactly when \(L\) is isotropic. We call \(\mathfrak A_L\) a \emph{maximal abelian Pauli subalgebra}, if it is abelian and it is not properly contained in any other abelian Pauli subalgebra. More generally, an abelian subalgebra of \(\B(\mathcal{H})\) is called a \emph{maximal abelian subalgebra}, if it is not properly contained in any other abelian subalgebra of \(\B(\mathcal{H})\). For Pauli subalgebras, these two notions of maximality coincide. To see this, define the \textit{commutant} of \(\mathfrak A_L\) to be \(\mathfrak A_L':= \{ T \in \B(\mathcal{H}) : TA = AT, \; \forall A \in \mathfrak A_L\}\).  It is easy to see that \(\mathfrak A_L'\) is the subalgebra associated with the symplectic orthogonal complement of $L$, that is, \(\mathfrak A_L'=\mathfrak A_{L^\perp}\). This is because Weyl operators form a basis for \(\B(\mathcal{H})\) and the commutation relations imply that an operator commutes with every \(W_\ell\), for \(\ell\in L\), precisely when its Weyl expansion is supported on \(L^\perp\). Now suppose \(\mathfrak A_L\) is maximal among abelian Pauli subalgebras. Then \(L\) is isotropic. If \(v\in L^\perp\setminus L\), then the larger submodule \(L+\mathbb Z_d v\) is also isotropic, so its Pauli subalgebra is abelian and properly contains \(\mathfrak A_L\), contradicting our maximality assumption. Hence \(L=L^\perp\), and the equation \(\mathfrak A_L'=\mathfrak A_{L^\perp}\) gives \(\mathfrak A_L=\mathfrak A_L'\). Therefore, every operator commuting with \(\mathfrak A_L\) is already in it, proving maximality among all abelian subalgebras of \(\B(\mathcal{H})\). Conversely, for any Pauli subalgebra, maximality among all abelian subalgebras implies maximality among abelian Pauli subalgebras. Consequently, both maximal abelian notions for \(\mathfrak A_L\) are equivalent to \(L\) being Lagrangian.

\begin{defn}[Modular Clifford group and Clifford hierarchy]
\label{def:clifford-hierarchy}
The (modular) Clifford hierarchy is defined by setting \(\mathcal C_1 := \mathcal P\), and for \(k\in\mathbb N\) with \(k\geq 2\), we recursively define the $k^{\text{th}}$ level of the hierarchy as
\begin{equation*}
\mathcal C_k :=
    \{\, U\in \U(\mathcal H):
        U\mathcal P U^* \subseteq \mathcal C_{k-1}
    \,\}.
\end{equation*}
The elements of $\mathcal{C}_k$ are also known as the \emph{Clifford hierarchy gates}. The second level of the hierarchy \(\mathcal C_2\) forms a group and is called the \textit{modular Clifford group} (or simply the \textit{Clifford group}\footnote{The term \textit{Clifford group} for $d$ composite case is context dependent, but in this paper we will use this term to exclusively refer to the modular Clifford group, dropping the prefix `modular'.}), and its elements are called \emph{Clifford gates}. Since the conjugation action defining the hierarchy descends naturally to the projective unitary group \(\PU(\mathcal{H})\), we define the \textit{projective Clifford group}, denoted as \(\operatorname{P}\mathcal{C}_2:=\mathcal{C}_2/\mathbb T I\). For \(k>2\), the levels need not be groups, but they form a strictly increasing sequence \(\mathcal C_k \subset \mathcal C_{k+1}\). The inclusions
\(\mathcal C_k\subseteq\mathcal C_{k+1}\)
follow immediately by induction, while a construction is provided in Appendix~\ref{app:strict_nest}, which shows the strictness of the inclusions.
\end{defn}

Since the Clifford hierarchy is defined through conjugation of Pauli operators, we first record two elementary properties of unitary conjugation concerning projective order and maximal abelianness.

\begin{lemma}[Properties of unitary conjugation]
\label{lem:unitary-conjugation-prop}
    Let \(U\in\U(\mathcal{H})\). Then,
    \begin{enumerate}[(i)]
    \item For every subgroup \(G\subseteq\B(\mathcal{H})\), unitary conjugation by \(U\) induces an isomorphism  \(\Phi(V):=UVU^*\), where \(V\in G\). In particular, if \(G\subseteq\U(\mathcal{H})\), then for \(V\in G\), if the projective class \([V]\) has order \(t\), then so does \([UVU^*]\) in \(\operatorname{PU}(\mathcal{H})\).
    \item If \(G\) is a maximal abelian subgroup of a group \(K\subseteq\U(\mathcal H)\), then \(UGU^*\) is a maximal abelian subgroup in \(UKU^*\). Likewise, if \(\mathfrak A\) is a maximal abelian subalgebra in \(\B(\mathcal{ H})\), then so is \(U\mathfrak A U^*\).
    \end{enumerate}
\end{lemma}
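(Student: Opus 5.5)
The plan is to verify both parts directly from the definitions. The only structural facts needed are that conjugation by a unitary is an inner automorphism of \(\B(\mathcal H)\) and that it preserves unitarity and scalars.

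For (i), I would first observe that \(\Phi(V)=UVU^*\) is the restriction to \(G\) of the map \(\B(\mathcal H)\to\B(\mathcal H)\), \(A\mapsto UAU^*\). This map is multiplicative, since \(UAU^*UBU^*=UABU^*\) by \(U^*U=I\). It is bijective with inverse \(A\mapsto U^*AU\). Hence \(\Phi\) is a group isomorphism from \(G\) onto \(UGU^*\).

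For the projective statement, I would show that conjugation descends to an automorphism of \(\PU(\mathcal H)\). It maps \(\U(\mathcal H)\) to itself and fixes the central subgroup \(\mathbb T I\) pointwise, because \(U(\lambda I)U^*=\lambda I\). Consequently it induces a well-defined homomorphism \([V]\mapsto[UVU^*]\) on the quotient, and its inverse is induced by conjugation with \(U^*\). Since group automorphisms preserve element orders, \([V]^t=[I]\) if and only if \([UVU^*]^t=[I]\). Concretely, \(V^t=\lambda I\) holds if and only if \((UVU^*)^t=UV^tU^*=\lambda I\). Therefore the two projective orders coincide, and the same argument covers the infinite-order case.

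For (ii), I would use that \(\Phi\) restricted to \(K\) is an isomorphism \(K\to UKU^*\), by part (i). Such an isomorphism preserves commutativity and inclusions, so it maps abelian subgroups bijectively onto abelian subgroups. Suppose \(UGU^*\subseteq H\) for some abelian \(H\subseteq UKU^*\). Then \(G\subseteq U^*HU\subseteq K\), and \(U^*HU\) is abelian. By maximality of \(G\) we get \(U^*HU=G\), so \(H=UGU^*\).

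For algebras, the map \(A\mapsto UAU^*\) is a unital \(*\)-algebra automorphism of \(\B(\mathcal H)\), so it sends abelian subalgebras to abelian subalgebras. The same pull-back argument with \(U^*\) then shows that maximality is preserved.

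I expect no real obstacle here. The only point requiring care is the well-definedness on projective classes, which rests on the fact that scalars are fixed by conjugation.
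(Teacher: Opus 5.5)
Your proposal is correct and follows essentially the same route as the paper: conjugation by \(U\) is multiplicative with inverse given by conjugation by \(U^*\), the identity \((UVU^*)^m=UV^mU^*\) shows that \(V^m\) is scalar exactly when \((UVU^*)^m\) is, and maximality is transferred by pulling a larger abelian subgroup or subalgebra back via conjugation by \(U^*\). Your remark that conjugation fixes \(\mathbb T I\) pointwise, and therefore descends to an automorphism of \(\PU(\mathcal H)\), packages the same order argument slightly more cleanly.
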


\begin{proof}
    For part (i), if \(V,W\in G\), we have \(\Phi(VW)=UVWU^*=(UVU^*)( UWU^*)=\Phi(V)\Phi(W)\), so \(\Phi\) is a well-defined homomorphism. Conjugation by \(U^*\) induces its inverse, so \(\Phi\) is an isomorphism. If \(G\subseteq\U(\mathcal{H})\), then for every positive integer \(m\), we have \((UVU^*)^m=UV^mU^*\). Consequently, \((UVU^*)^m\) is a scalar multiple of the identity if and only if the same is true for \(V^m\). It follows from this that the two projective classes have the same order.
    
    For part (ii), conjugation by \(U\) preserves products, linear combinations, and commutation relations, and its inverse is conjugation by \(U^*\). Hence, the conjugated subgroup and subalgebra are abelian. Any strictly larger abelian subgroup of \(UKU^*\) containing \(UGU^*\) would conjugate back to a strictly larger abelian subgroup of \(K\) containing \(G\), contradicting maximality. The same argument applies to subalgebras of \(\B(\mathcal H)\).
\end{proof}

Next, we now relate the conjugation by Clifford gates to the symplectic action introduced in Lemma~\ref{lem:symplectic-action-lagrangians}. Similar to the case in \(\mathbb Z_{p^\alpha}\), the symplectic group \(\operatorname{Sp}_2(\mathbb Z_d)\) satisfies \(\operatorname{Sp}_2(\mathbb Z_d)=\operatorname{SL}_2(\mathbb Z_d)\). The following Clifford--symplectic correspondence is standard (see, for example, \cite[Theorem~4.1]{KorbelarTolar2023CliffordSemidirect}); for every \(C\in\mathcal C_2\), there exists a unique
\(\rho([C])\in\Sp_2(\mathbb Z_d)\), called the \emph{symplectic representative} of $C$, such that \([CW_vC^*]=[W_{\rho([C])v}]\), for all \(v\in\mathbb Z_d^2.\) Moreover, the resulting map \(\rho:\operatorname{P}\mathcal C_2\longrightarrow\Sp_2(\mathbb Z_d)\)is a surjective group homomorphism whose kernel is the projective Pauli subgroup. Thus there is a short exact sequence
\[
1\longrightarrow\mathbb Z_d^2
\longrightarrow\operatorname{P}\mathcal C_2
\xrightarrow{\rho}\Sp_2(\mathbb Z_d)
\longrightarrow1 \footnote{The existence of the homomorphism \(\rho\) should not be confused with the existence of a splitting. A semidirect product would require a group homomorphism \(s:\Sp_2(\mathbb Z_d)\to \operatorname{P\mathcal C}_2\) satisfying \(\rho\circ s=I\), which would yield a decomposition of the form \(\Sp_2(\mathbb Z_d)\ltimes \mathbb Z_d^2\).
Korbelář and Tolar show that such a splitting fails precisely when \(4\mid d\) \cite{KorbelarTolar2023CliffordSemidirect}.},
\]
where all maps are group homomorphisms. In the following lemma, we record some elementary structural properties of Clifford gates and their action on the Clifford hierarchy, which will be used in later sections.
\begin{lemma}
\label{lem:clif_prop}
The following properties hold for Clifford gates:
\begin{enumerate}[(i)]
\item For all \(k\geq2\), \(U \in \U(\mathcal{H})\), and \(C,\widehat C\in\mathcal C_2\), it holds that \(U\in\mathcal C_k\) if and only if \(CU\widehat C\in\mathcal C_k\).

\item Clifford conjugation preserves maximal abelian Pauli subgroups and subalgebras.

\item The permutation gates in \(\mathcal C_2\) are
precisely \(\Pi\ket{x}=\ket{a x+b}\), where \(a\in\mathbb Z_d^\times\) and \(b\in\mathbb Z_d\).
\end{enumerate}
\end{lemma}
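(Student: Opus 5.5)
I will prove the three parts separately, relying only on the definition of $\mathcal C_k$, the Clifford--symplectic correspondence $\rho$, and Lemma~\ref{lem:unitary-conjugation-prop}.

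For (i), it suffices to show the forward implication, since $C^{*},\widehat C^{*}\in\mathcal C_2$ give the converse. I induct on $k$, and it is convenient to prove the stronger statement for all $k\ge 1$ that $\mathcal C_k$ is stable under left and right multiplication by Cliffords; for $k=1$ this says $C P\widehat C\in\mathcal P$ for a Pauli $P$, which is false in general, so I start the induction at $k=2$, where it holds because $\mathcal C_2$ is a group. For the inductive step, take $U\in\mathcal C_k$ and $P\in\mathcal P$. Then
\[
(CU\widehat C)P(CU\widehat C)^*=C\,U(\widehat C P\widehat C^*)U^*\,C^*.
\]
Here $\widehat C P\widehat C^*\in\mathcal P$ because $\widehat C$ normalizes $\mathcal P$, so $U(\widehat C P\widehat C^*)U^*\in\mathcal C_{k-1}$. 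It remains to show that $\mathcal C_{k-1}$ is stable under Clifford conjugation $V\mapsto CVC^*$, which is a special case of two-sided multiplication. For $k-1\ge 2$ this is the induction hypothesis. For $k-1=1$ it is the normalizing property of $\mathcal C_2$. The one point requiring care is bookkeeping: the induction should be run on the statement ``$\mathcal C_k$ is closed under $V\mapsto CV\widehat C$ for $k\ge2$, and $\mathcal C_1$ is closed under $V\mapsto CVC^*$,'' and both hold.

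For (ii), I use $CW_vC^*=\lambda_v W_{\rho([C])v}$ with $\lambda_v\in\mathbb T$. This gives $C\mathcal P_LC^*=\mathcal P_{S(L)}$ and $C\mathfrak A_LC^*=\mathfrak A_{S(L)}$, where $S=\rho([C])\in\Sp_2(\mathbb Z_d)$; the scalars $\lambda_v$ are absorbed into $\mathbb T$ and into the complex span respectively. By Lemma~\ref{lem:symplectic-action-lagrangians}, $S(L)$ is Lagrangian whenever $L$ is. Since maximal abelian Pauli subgroups and subalgebras correspond exactly to Lagrangian $L$, as established in the preceding discussion, the claim follows. Alternatively, Lemma~\ref{lem:unitary-conjugation-prop}(ii) applies directly with $K=\mathcal P=C\mathcal P C^*$.

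For (iii), there are two directions. For sufficiency, let $\Pi|x\rangle=|ax+b\rangle$. Direct computation gives $\Pi X\Pi^*=X^a$ and $\Pi Z\Pi^*=Z^{a^{-1}}$ up to a phase $\omega^{-a^{-1}b}$, and $\Pi$ is unitary since $x\mapsto ax+b$ is a bijection of $\mathbb Z_d$ when $a$ is a unit. For necessity, let $\Pi|x\rangle=|\sigma(x)\rangle$ be a Clifford permutation. Conjugating the diagonal operator $Z$ by a permutation yields a diagonal operator, $\Pi Z\Pi^*|y\rangle=\omega^{\sigma^{-1}(y)}|y\rangle$. A Pauli $\lambda X^aZ^b$ is diagonal only when $a=0$, so $\omega^{\sigma^{-1}(y)}=\lambda\omega^{by}$ for all $y$. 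Setting $y=0$ gives $\lambda=\omega^{\sigma^{-1}(0)}$, so $\sigma^{-1}(y)=by+c$ in $\mathbb Z_d$. Since $\sigma^{-1}$ is bijective, $b$ must be a unit, and inverting gives $\sigma(x)=ax+b'$ with $a\in\mathbb Z_d^\times$. I expect this necessity direction to be the main obstacle, and the step that makes it work is reading the affine form directly from the exponents.
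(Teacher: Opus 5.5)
Your proposal is correct and follows essentially the same route as the paper. Part (i) is the induction the paper cites from \cite{zeng_semi-clifford_2008}, part (ii) uses the same identities \(C\mathcal P_LC^*=\mathcal P_{S(L)}\) and \(C\mathfrak A_LC^*=\mathfrak A_{S(L)}\) with Lemma~\ref{lem:symplectic-action-lagrangians}, and part (iii) reads the affine form off the diagonal Pauli \(\Pi Z\Pi^*\); the only small deviation is that you get the unit slope from bijectivity of \(\sigma^{-1}\) rather than from preservation of projective order (Lemma~\ref{lem:unitary-conjugation-prop}(i)), which is equally valid and slightly more elementary.
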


\begin{proof}
For part (i), the inductive argument of \cite[ Proposition~3]{zeng_semi-clifford_2008} applies for every $d$.

For part (ii), for \(C\in\mathcal{C}_2\), we denote its symplectic representative as \(S=\rho([C])\). Therefore, for every submodule \(L\subseteq\mathbb Z_d^2\), we have \(C\mathcal P_LC^*=\mathcal P_{S(L)}\) and \(C\mathfrak A_LC^*=\mathfrak A_{S(L)}\). Since symplectic action preserves Lagrangian property from Lemma~\ref{lem:symplectic-action-lagrangians}, conjugation by Clifford gates also preserves the maximal abelian property for both the Pauli subgroup and the subalgebra. 

For part (iii), first suppose that \(\Pi\in\mathcal{C}_2\) is a permutation gate with \(\Pi\ket{x}=\ket{\pi(x)}\) for some permutation function \(\pi\). Then \(\Pi Z\Pi^*\ket{x} =\Pi Z\ket{\pi^{-1}(x)}=\omega^{\pi^{-1}(x)}\ket{x}\). Since \(\Pi\) is Clifford, \(\Pi Z\Pi^*\) is a diagonal Pauli, so there exist some \(\lambda\in\mathbb T\) and \(c_1\in\mathbb Z_d\) such that \(\Pi Z\Pi^*=\lambda Z^{c_1}\). Evaluating at \(x=0\) gives \(\lambda=\omega^c\), where \(c=\pi^{-1}(0)\). As conjugation does not change the projective order by Lemma~\ref{lem:unitary-conjugation-prop}(i), \(Z^{c_1}\) has projective order \(d\), so \(c_1\in\mathbb Z_d^\times\). Comparing diagonal entries gives
\(\pi^{-1}(x)=c_1x+c\) in \(\mathbb Z_d\), and hence \(\pi(x)=c_1^{-1}x-c_1^{-1}c.\) Thus \(\pi(x)=ax+b\), with
\(a=c_1^{-1}\in\mathbb Z_d^\times\) and
\(b=-c_1^{-1}c\in\mathbb Z_d\). Conversely, if \(\Pi\ket{x}=\ket{a x+b}\) with \(a\in\mathbb Z_d^\times\) and \(b\in\mathbb Z_d\), then a direct calculation gives \(\Pi X\Pi^*=X^a\) and \(\Pi Z\Pi^*=\omega^{-a^{-1}b}Z^{a^{-1}}.\) Thus \(\Pi\) is Clifford. This extends the prime-dimensional characterization in \cite[Theorem~1 and Lemma~6]{silva_clifford_2025} to every \(d\geq2\).
\end{proof}

To study hierarchy gates \(U\in\mathcal{C}_k\), it is natural to use an ordered pair \((A,B):=(UXU^*,UZU^*)\), because every Weyl operator, conjugated by \(U\), takes the form \(A^aB^b\) for \(a,b\in\mathbb Z_d\). As we will see in Lemma~\ref{lem:conjugate_pairs}, the map \([U]\mapsto(A,B)\) establishes a one-to-one correspondence between projective Clifford hierarchy gates and conjugate pairs, which we define next.

\begin{defn}[Conjugate pairs]
\label{def:conjugate-pairs}
A \emph{conjugate pair} is an ordered pair \((A,B)\in \U(\mathcal H) \times \U(\mathcal H)\), satisfying \(BA=\omega AB\) and the both have order \(d\). We denote the set of all conjugate pairs by \(\mathrm{Conj}(\mathcal H)\)\footnote{Our ordering reverses that of de~Silva~\cite{silva_efficient_2021} to align with our convention \(W_v=X^aZ^b\) for \(v=(a,b)^{\mathsf T}\). The associated pair then gives \(UW_vU^*=A^aB^b\), retaining the same phase-space labels and product order.}.
\end{defn}

The next lemma collects the basic properties of conjugate pairs. In particular, part~(ii) establishes the correspondence with the projective Clifford hierarchy gates and is a finite-dimensional form of the Stone--von Neumann theorem. These properties are standard in prime dimensions; see \cite[Lemmas 3--4 and Theorems 3--4]{silva_efficient_2021}. The proofs given there extend to arbitrary \(d\), as they do not rely on the field structure of \(\mathbb Z_d\), so we omit them here.

\begin{lemma}
\label{lem:conjugate_pairs}
The following properties hold:
\begin{enumerate}[(i)]
    \item Let \(A,B\in\U(\mathcal H)\) satisfy \(BA=\omega AB\). Then the operators \(\{A^aB^b:(a,b)^{\mathsf T}\in \mathbb Z_d^2\}\) form an orthogonal basis of \(\B(\mathcal H)\).

    \item If \((A,B)\) forms a conjugate pair, then there exists \(u_0\in\mathcal H\) of norm one, such that \(Bu_0=u_0\). For such a $u_0$, define
    \(u_x := A^x u_0\), for \(x\in \mathbb Z_d.\)
    Then \(\{u_x:x\in \mathbb Z_d\}\) is an orthonormal basis of \(\mathcal H\), and the unitary \(U\) defined by \(U|x\rangle := u_x\)
    satisfies \(UXU^* = A\), and \(UZU^* = B\). This \(U\) is unique up to a scalar phase.

    \item For a conjugate pair \((A,B)\), define \(\mathcal W(A,B):=\{\lambda A^aB^b:\lambda\in \mathbb T,\ (a,b)^{\mathsf T}\in \mathbb Z_d^2\}\). Then \(\mathcal W(A,B)\) is a group. Moreover, \(\mathcal W(A,B)\subseteq \mathcal C_k\) if and only if the unitary \(U\) from (ii) lies in \(\mathcal C_{k+1}\). 
\end{enumerate}
\end{lemma}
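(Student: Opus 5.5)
We prove the three parts in order; (i) feeds into (ii), and (ii) into (iii).

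For (i), the plan is to show that the \(d^2\) operators \(A^aB^b\) are pairwise orthogonal under \(\langle\cdot,\cdot\rangle_{\mathrm{HS}}\). Since \(\dim\B(\mathcal H)=d^2\), they then form a basis. For \((a,b)\neq(a',b')\) we reduce the inner product to \(\operatorname{Tr}(A^{c}B^{e})\) with \((c,e)\neq(0,0)\), picking up only a phase from the commutation rule \(BA=\omega AB\). If \(e\neq0\), conjugating by \(A\) gives \(A(A^cB^e)A^{*}=\omega^{-e}A^cB^e\), so the trace \(t\) satisfies \(t=\omega^{-e}t\) and hence \(t=0\). If \(c\neq0\), conjugating by \(B\) gives \(\omega^{c}\) in the same way. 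The one gap is that exponents lie in \(\mathbb Z_d\), which requires \(A^d=B^d=I\). Part (i) only assumes \(BA=\omega AB\). There we read \(A^aB^b\) with integer representatives \(0\le a,b<d\), and note that \(c,e\) range over \(-(d-1),\dots,d-1\), so \(\omega^{c}\neq1\) whenever \(c\neq0\). This is the same argument as in prime dimension, and it never divides in \(\mathbb Z_d\).

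For (ii), the key observation is that \(B^d=I\), so \(B\) is diagonalizable with eigenvalues among the \(d\)-th roots of unity. The relation \(B(Av)=\omega A(Bv)\) shows that \(A\) maps the \(\lambda\)-eigenspace into the \(\omega\lambda\)-eigenspace. Hence \(A\) cyclically permutes eigenspaces, all \(d\) roots occur, and each occurs with the same multiplicity. Because \(\dim\mathcal H=d\), every eigenspace is one-dimensional, which produces a norm-one \(u_0\) with \(Bu_0=u_0\). Then \(Bu_x=\omega^{x}u_x\), and \(u_x\) is well defined for \(x\in\mathbb Z_d\) because \(A^d=I\). The \(u_x\) are orthonormal since they are unit eigenvectors for distinct eigenvalues. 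The identities \(UXU^*=A\) and \(UZU^*=B\) follow by checking them on the basis \(u_x\). For uniqueness, suppose \(U'\) is another such unitary. Then \(U^*U'\) commutes with \(X\) and \(Z\), hence with all of \(\B(\mathcal H)\) by (i), and so it is a scalar.

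For (iii), closure of \(\mathcal W(A,B)\) follows from the commutation rule together with \(A^d=B^d=I\), and inverses are \(A^{-a}B^{-b}\) up to phase. For the equivalence, take \(U\) from (ii). Then \(\mathcal W(A,B)=U\mathcal P U^*\), because \(UW_vU^*=A^aB^b\) and phases are absorbed. By Definition~\ref{def:clifford-hierarchy}, \(U\in\mathcal C_{k+1}\) if and only if \(U\mathcal PU^*\subseteq\mathcal C_k\), which gives the claim directly.

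The main obstacle is the multiplicity argument in (ii) for composite \(d\). There we must use \(A^d=I\) to close up the cycle of eigenspaces. A dimension count then forces each eigenspace to be one-dimensional.
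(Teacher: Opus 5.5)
Your proposal is correct and follows the standard argument the paper defers to, namely de Silva's prime-dimensional proofs, which the paper says carry over because they never use the field structure of \(\mathbb Z_d\): trace-vanishing under conjugation for orthogonality, eigenvalue shifting to build \(u_0\) and \(u_x=A^xu_0\) with uniqueness via the trivial commutant, and \(\mathcal W(A,B)=U\mathcal PU^*\) for (iii). One small correction: the cycle of eigenspaces closes because \(\omega^d=1\) and \(A\) is injective, so equal multiplicities need no appeal to \(A^d=I\); that hypothesis is only needed to make \(u_x\) well defined for \(x\in\mathbb Z_d\), while \(B^d=I\) is what guarantees that \(1\) is an eigenvalue of \(B\).
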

 
\subsection{Semi-Clifford gates and generalized semi-Clifford gates}
\label{ssec:semi_clifford_generalized}

\begin{defn}[Diagonal gates]
\label{def:diagonal-gates}
We define the group of diagonal unitaries \[\mathcal D:=\{\,D\in \U(\mathcal H): D|x\rangle=g(x)|x\rangle, \; \forall x\in\mathbb Z_d,\; \text{ for some } g:\mathbb Z_d \to\mathbb T \,\}.\] The diagonal gates at hierarchy level \(k\) are \(\mathcal D_k := \mathcal C_k\cap \mathcal D .\)
\end{defn}

Although \(\mathcal C_k\) is generally not a group for \(k\geq 3\), its diagonal subset \(\mathcal D_k\) is an abelian subgroup of \(\mathcal D\); see \cite[Theorem~1]{Cui_2017}.

\begin{defn}[Semi-Clifford gates]
\label{def:semi-clifford}
A gate \(U\) is called \emph{semi-Clifford} if there exist
\(C_1,C_2\in \mathcal C_2\) and \(D\in \mathcal D\) such that
\(U=C_1DC_2\). We denote the set of semi-Clifford gates as \(\mathcal{SC}\) and we define \(\mathcal{SC}_k:=\mathcal{C}_k\cap\mathcal{SC}\), which is the set of semi-Clifford gates in \(\mathcal{C}_k\).
\end{defn}
\begin{rmk}
The one-bit gate-teleportation protocol for qubits in~\cite{Zhou_2000} was extended to prime-dimensional qudits in~\cite[Section~5.1]{silva_efficient_2021}, which depends only on the decomposition \(U=C_1DC_2\) and extends unchanged to arbitrary \(d\). Thus, the semi-Clifford gates considered here admit the same resource-efficient teleportation protocol, including in composite
dimensions.
\end{rmk}

\begin{defn}[Permutation diagonal normal form]
\label{def:clifford-permutation}
A unitary \(U\in\U(\mathcal H)\) admits a
\emph{permutation diagonal normal form} if there exists a representation \(U=C_1\Pi D C_2\), where \(C_1,C_2\in\mathcal C_2\), \(D\in\mathcal D\), and \(\Pi\in \U(\mathcal{H})\) is a permutation of the computational basis states (also known as a \emph{permutation gate}). We denote the set of all such $U$ by \(\mathcal N\) and define \(\mathcal N_k:= \mathcal C_k \cap \mathcal N\).
\end{defn}
We will show later that both normal forms admit characterizations in terms of free Lagrangian submodules of \(\mathbb Z_d^2\). By Lemma~\ref {lem:symplectic-size}, the free Lagrangian submodules of \(\mathbb Z_d^2\) are precisely those generated by unimodular elements, that is, elements \(v=(a,b)^\top\in\mathbb Z_d^2\) satisfying \(\gcd(a,b,d)=1\). Equivalently, the map \(t\mapsto tv\) is an isomorphism of \(\mathbb Z_d\)-modules from \(\mathbb Z_d\) onto \(\mathbb Z_dv\), when $v$ is unimodular. Such elements are also called \emph{admissible} in the projective-ring literature; see~\cite{havlicek_projective_2008,Blunck_2000}. We define \(L_Z:=\mathbb Z_d \left( \begin{smallmatrix} 0 \\ 1 \end{smallmatrix}\right)\). Since \(W_{(0,1)^\top}=Z\), we can also write \(\mathfrak A_{L_Z} =\operatorname{span}_{\mathbb C}\{Z^b:b\in\mathbb Z_d\}\). The following lemma characterizes Weyl operators associated with a unimodular element as conjugation by Clifford gates on \(Z\) up to phase. The following lemma follows from the Pauli-Euclid-Gottesman lemma~\cite[Lemma~2]{farinholt2014ideal}:  
\begin{lemma}
\label{lem:unimodular-prop}
    An element $v\in\mathbb Z_d^2$ is unimodular if and only if there exist $C\in\mathcal C_2$ and $\lambda\in\mathbb T$ such that \(W_v=\lambda C^*ZC\). Consequently, for every free Lagrangian $L\subseteq\mathbb Z_d^2$, there exists $C\in\mathcal C_2$ such that $C\mathcal P_LC^*=\mathcal P_{L_Z}$, where \(\mathcal P_{L_Z}=\{\lambda Z^b:\lambda\in\mathbb T,\ b\in\mathbb Z_d\}\). 
\end{lemma}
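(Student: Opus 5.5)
The plan is to reduce everything to the symplectic group action via the Clifford--symplectic correspondence $\rho$. For the forward direction, suppose $v=(a,b)^\top$ is unimodular. Then $\gcd(a,b,d)=1$, and I first show that there is a matrix $S\in\SL_2(\mathbb Z_d)=\Sp_2(\mathbb Z_d)$ with $S e_2 = v$, where $e_2=(0,1)^\top$.

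To construct $S$, I use unimodularity to pick $r,s\in\mathbb Z_d$ with $ra+sb=1$. The candidate is
\[
S=\begin{pmatrix} s & a\\ -r & b\end{pmatrix},
\]
which has $\det S = sb+ra=1$ and second column $v$. This is the Pauli--Euclid--Gottesman step, where unimodularity is used exactly once.

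By surjectivity of $\rho$, there is $C'\in\mathcal C_2$ with $\rho([C'])=S$. Then $[C'ZC'^*]=[W_{Se_2}]=[W_v]$, so $W_v=\lambda C'ZC'^*$ for some $\lambda\in\mathbb T$. Setting $C:=C'^*$ gives $W_v=\lambda C^*ZC$.

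Conversely, suppose $W_v=\lambda C^*ZC$. By Lemma~\ref{lem:unitary-conjugation-prop}(i), $[W_v]$ has projective order $d$, and $Se_2=v$ with $S=\rho([C^*])$ invertible. An invertible matrix sends $e_2$ to a unimodular vector: if a prime $p\mid d$ divided both coordinates of $v$, then $S$ reduced mod $p$ would kill $e_2$, contradicting $\det S\in\mathbb Z_d^\times$. Hence $v$ is unimodular. Alternatively, one can argue directly that the order of $v$ in $\mathbb Z_d^2$ equals $d$ only if $\gcd(a,b,d)=1$, but the determinant argument is cleaner and is the one I will use.

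For the consequence, let $L$ be a free Lagrangian. By Lemma~\ref{lem:symplectic-size}, $L$ has rank $1$ with a unimodular basis element $v$, so $L=\mathbb Z_d v$. Take the $C$ obtained above, so that $\rho([C])=S^{-1}$ maps $v$ to $e_2$. By part (ii) of Lemma~\ref{lem:clif_prop}, $C\mathcal P_LC^*=\mathcal P_{S^{-1}(L)}=\mathcal P_{\mathbb Z_d e_2}=\mathcal P_{L_Z}$, and the identity $\mathcal P_{L_Z}=\{\lambda Z^b\}$ holds since $W_{(0,b)^\top}=Z^b$. The phases cause no trouble because $\mathcal P_L$ contains all of $\mathbb T$.

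The main obstacle is only bookkeeping of which of $C$ or $C^*$ carries $S$ versus $S^{-1}$. The substantive content, namely that unimodular vectors complete to a determinant-one matrix over $\mathbb Z_d$, is handled by the explicit matrix above.
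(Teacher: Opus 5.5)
Your proof is correct and follows essentially the paper's argument: complete the unimodular vector to a determinant-one matrix using $ra+sb=1$, lift it to a Clifford via surjectivity of $\rho$, obtain the converse because invertible matrices preserve unimodularity, and apply the first half to a unimodular generator of a free Lagrangian. The differences are cosmetic. You build $S$ with $Se_2=v$ and then pass to $C'^*$, whereas the paper builds $S$ with $Sv=e_2$ directly. Your mod-$p$ determinant argument just spells out the paper's one-line claim that invertible matrices preserve unimodularity.
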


\begin{proof}
For unimodular \(v=(a,b)^\top\in\mathbb Z_d^2\), there exist \(s,t\in\mathbb Z_d\) such that \(sa+tb \equiv 1\pmod d\). We define the matrix \(S=\left(\begin{smallmatrix}b&-a\\s&t\end{smallmatrix}\right)\), satisfying \(Sv=(0,1)^\top\). We note that \(\det (S)=1\), so \(S\in\operatorname{Sp}_2(\mathbb Z_d)\). By the surjectivity of \(\rho\), there exist $C\in\mathcal C_2$ and $\lambda\in\mathbb T$ such that \(\rho([C])=S\), and \(W_v=\lambda C^*ZC\). Conversely, \(W_v=\lambda C^*ZC\) implies that \(\rho([C])v=(0,1)^\top\). Since \(\rho([C])\) is invertible and invertible matrices preserve unimodularity, \(v\) is unimodular. 

To prove the second half of the lemma, given that \(L\subseteq \mathbb Z_d^2\) is a free Lagrangian submodule, by Lemma~\ref{lem:symplectic-size}, there exists a unimodular element \(v=(a,b)^\top\in\mathbb Z_d^2\) such that \(L=\mathbb Z_dv\). Then the first half of the lemma provides a Clifford gate \(C\) such that \(C\mathcal P_LC^*=\mathcal P_{L_Z}\).
\end{proof}

\begin{lemma}[Unimodular-element criteria]
\label{lem:unimodular-direction-criteria}
For $U\in\U(\mathcal H)$, the following criteria hold:
\begin{enumerate}[(i)]
    \item \(U\in\mathcal{SC}\) if and only if there exists an unimodular \(v \in \mathbb Z_d^2\) such that \(UW_vU^*\in\mathcal P\).
    \item \(U\in\mathcal N\) if and only if there exist unimodular \(v,w \in \mathbb Z_d^2\) such that \(UW_vU^*\) and \(W_w\) commute.
\end{enumerate}
\end{lemma}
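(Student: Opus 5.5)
Both parts reduce to the model case $v=w=(0,1)^\top$, i.e.\ $W_v=Z$, via Lemma~\ref{lem:unimodular-prop}. For unimodular $v$ we write $W_v=\lambda C^*ZC$ with $C\in\mathcal C_2$. The basic fact we use is that a unitary $V$ satisfies $VZV^*\in\mathcal P_{L_Z}$ (up to phase, $VZV^*=\mu Z^c$) if and only if $V$ is diagonal up to a Clifford permutation. Similarly, $VZV^*$ commutes with $Z$ if and only if $V$ maps the commutant $\mathfrak A_{L_Z}=\mathfrak A_{L_Z}'$ of diagonal operators onto itself, which is equivalent to $V=\Pi D$.

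\textbf{Part (i).} For ($\Rightarrow$), take $U=C_1DC_2$ and set $v$ to be the unimodular element with $W_v\propto C_2^*ZC_2$, which exists by Lemma~\ref{lem:unimodular-prop}. Then $UW_vU^*\propto C_1 D Z D^* C_1^*=C_1ZC_1^*\in\mathcal P$, since $D$ commutes with $Z$ and Cliffords normalize $\mathcal P$.

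For ($\Leftarrow$), suppose $UW_vU^*=\mu W_u\in\mathcal P$. By Lemma~\ref{lem:unitary-conjugation-prop}(i), $[W_u]$ has the same projective order as $[W_v]$, namely $d$. We must show that $u$ is unimodular. Write $u=g u'$ with $u'$ unimodular and $g=\gcd(u_1,u_2,d)$. Then the projective order of $W_u$ is $d/g$ (using $W_u\propto W_{u'}^g$ and that $W_{u'}$ has projective order $d$), so $g=1$.

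Choose Cliffords $C,C'$ with $C W_v C^*\propto Z$ and $C'W_uC'^*\propto Z$. Then $V:=C'UC^*$ satisfies $VZV^*=\nu Z$ for some $\nu\in\mathbb T$. Apply $V$ to an eigenvector $|x\rangle$ of $Z$: $V|x\rangle$ is an eigenvector of $Z$ with eigenvalue $\nu^{-1}\omega^x$. Since the eigenvalues $\omega^x$ are distinct and $\nu^{-1}\omega^x$ must again be an eigenvalue of $Z$, we get $\nu=\omega^{-b}$ for some $b\in\mathbb Z_d$, and $V|x\rangle\propto|x+b\rangle$. Hence $V=X^bD$ with $D\in\mathcal D$, and therefore $U=C'^*X^bDC$ is semi-Clifford.

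\textbf{Part (ii).} For ($\Rightarrow$), take $U=C_1\Pi DC_2$ with $v$ as above, so that $UW_vU^*\propto C_1\Pi Z\Pi^*C_1^*$. Because $\Pi Z\Pi^*$ is diagonal, it commutes with $Z$, so we may take $w$ unimodular with $W_w\propto C_1ZC_1^*$.

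For ($\Leftarrow$), reduce as in part (i) to $V:=C'UC^*$ with $VZV^*$ commuting with $Z$, i.e.\ $VZV^*$ diagonal, where $C'$ is now chosen using the unimodular $w$. This is the main step, since $VZV^*$ is no longer Pauli. Being diagonal and unitarily similar to $Z$, the operator $VZV^*$ has spectrum $\{\omega^x\}$ with multiplicity one, so $VZV^*=\sum_x\omega^{\sigma(x)}|x\rangle\langle x|$ for some permutation $\sigma$. Consequently $V|x\rangle$ lies in the $\omega^x$-eigenspace of $VZV^*$, which is spanned by $|\sigma^{-1}(x)\rangle$. Hence $V=\Pi D$, and therefore $U=C'^*\Pi DC\in\mathcal N$.

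The remaining bookkeeping is to check that the phase factors $\lambda$ do not matter, which holds because commutation and membership in $\mathcal P$ are insensitive to phases.
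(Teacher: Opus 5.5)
Your proof is correct. It follows the same skeleton as the paper's: use Lemma~\ref{lem:unimodular-prop} to move $W_v$ and the target Pauli to $Z$, then read off a monomial form from where $V$ sends the eigenvectors $|x\rangle$. It is leaner in two places.

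\textbf{Part (i).} You use that Lemma~\ref{lem:unimodular-prop} gives a Clifford taking $W_u$ to $Z$ itself, up to phase. This yields $VZV^*=\nu Z$ and hence $V=X^bD$ at once. The paper uses only the subgroup form $C\mathcal P_{\mathbb Z_d w}C^*=\mathcal P_{L_Z}$, so it obtains $VZV^*=\lambda Z^a$. It then needs a unit-exponent argument and the affine Clifford permutations of Lemma~\ref{lem:clif_prop}(iii), both of which you avoid.

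\textbf{Part (ii).} You observe directly that $VZV^*$ commutes with $Z$, which has simple spectrum, so $VZV^*$ is diagonal. The paper first proves $U\mathfrak A_{\mathbb Z_dv}U^*=\mathfrak A_{\mathbb Z_dw}$ via Lagrange interpolation, and only then extracts diagonality of $TZT^*$. Your argument shows that detour is unnecessary.

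One small point on the factorization $u=gu'$ with $u'$ unimodular. It requires lifting a unimodular vector of $\mathbb Z_{d/g}^2$ to one of $\mathbb Z_d^2$, and the naive representative can fail. For example, with $d=12$, $u=(8,8)$ and $g=4$, the choice $(2,2)$ is not unimodular but $(2,5)$ is. It is cleaner to note that $W_u^m\propto W_{mu}$ is scalar iff $mu=0$ iff $(d/g)\mid m$. This gives projective order $d/g$ directly.
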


\begin{proof}
We prove the two parts separately.
\begin{enumerate}[(i)]
\item If \(U\in\mathcal{SC}\), then \(U=C_1DC_2\) for some \(C_1,C_2\in\mathcal{C}_2\), and \(D\in\mathcal{D}\). Thus \(C_2^*ZC_2\) is proportional to \(W_v\), for some \(v=\rho([C_2])^{-1}(0,1)^\top\), which is unimodular by Lemma~\ref{lem:unimodular-prop}. Then \(U(C_2^*ZC_2)U^*=C_1ZC_1^*\) is Pauli, proving one direction of the lemma. Conversely, suppose \(UW_vU^*\in\mathcal P\) with \(v\) unimodular. Then \(UW_vU^*=\mu W_w\), for some \(\mu\in\mathbb T\) and \(w\in\mathbb Z_d^2\). Since conjugation preserves projective order by Lemma~\ref{lem:unitary-conjugation-prop}(i), it means that \(w\) is also unimodular. By Lemma~\ref{lem:unimodular-prop}, we can then choose Clifford gates \(C_v,C_w\) such that \(C_v\mathcal P_{\mathbb Z_dv}C_v^*
=\mathcal P_{L_Z}\), and \(C_w\mathcal P_{\mathbb Z_dw}C_w^* =\mathcal P_{L_Z}\). Now, the relation \(UW_vU^*=\mu W_w\) implies \(U\mathcal P_{\mathbb Z_dv}U^*=\mathcal P_{\mathbb Z_dw}\), and so a calculation shows that the unitary \(V:=C_wUC_v^*\) satisfies \(V\mathcal{P}_{L_Z}V^*=\mathcal{P}_{L_Z}\). In particular, we have \(VZV^*=\lambda Z^a\), for some \(\lambda\in\mathbb T\), and \(a\in\mathbb Z_d\), and since conjugation preserves projective order by Lemma~\ref{lem:unitary-conjugation-prop}, we further deduce \(d=d/\gcd(a,d)\), or equivalently \(a\in\mathbb Z_d^\times\). Also, taking \(d\)-th powers of the relation \(VZV^*=\lambda Z^a\) gives us \(\lambda^d=1\), which implies \(\lambda=\omega^b\) for some \(b\in\mathbb Z_d\). Finally, we construct a permutation gate \(\Pi|x\rangle:=|a^{-1}(x-b)\rangle\), which is Clifford by Lemma~\ref{lem:clif_prop}(iii). We can check that it also satisfies \(\Pi Z\Pi^*=\omega^bZ^a\), and therefore, the operator \(D:=\Pi^*V\) commutes with \(Z\), and is hence diagonal. We thus conclude from the definition of $V$ that \(U=(C_w^*\Pi)DC_v\in\mathcal{SC}\), because \(C_w,\Pi\in\mathcal{C}_2\) and \(\mathcal{C}_2\) is a group.
\vspace{0.1cm}
\item Let us first assume that \(U\in\mathcal{N}\). Then by definition, there exists \(C_1,C_2\in\mathcal{C}_2\), \(D\in\mathcal{D}\), and \(\Pi\) some permutation gate, such that \(U=C_1\Pi DC_2\). By Lemma~\ref{lem:unimodular-prop}, there exists unimodular \(v,w\in\mathbb Z_d^2\), such that \(W_v=\lambda C_2^*ZC_2\) and \(W_w =\mu C_1ZC_1^*\), for some \(\lambda,\mu\in\mathbb T\). Then \(UW_vU^*=\lambda C_1\Pi D\,Z\,D^*\Pi^*C_1^*=\lambda C_1 (\Pi Z\Pi^*) C_1^*\). Since \(\Pi Z\Pi^*\) simply permutes the diagonal entries of \(Z\), it is diagonal. A direct computation then shows that \(UW_vU^*\) and \(W_w\) commute.

Conversely, assume \(UW_vU^*\) commutes with \(W_w\), where \(v,w\in\mathbb Z_d^2\) are unimodular. We denote \(\widehat{W}:=UW_vU^*\), and we first note some properties of \(\widehat W\), \(W_v\), and \(W_w\). By Lemma~\ref{lem:unimodular-prop}, each of them are related to $Z$ by similarity transformations (up to some non-zero constant). Now, since $Z$ has $d$ distinct eigenvalues, so does each of \(\widehat W\), \(W_v\), and \(W_w\). Moreover, they are all unitary operators, and hence their eigenspaces corresponding to each distinct eigenvalue are orthogonal and one dimensional. In particular, since $\widehat{W}$ and $W_w$ commute, they are simultaneously diagonalizable, and hence share the same eigenbasis. Let the eigenvalues of $\widehat{W}$ (resp. $W_w$) be $\{\beta_1,\cdots,\beta_d\}$ (resp. $\{\beta'_1,\cdots,\beta'_d\}$). Then, by Lagrange's interpolation formula, there exists a complex polynomial $p$ such that $p(\beta_j)=\beta'_j$, for all $1\le j \le d$, using the distinctness of the eigenvalues; since the common eigenbasis of $\widehat{W}$ and $W_w$ is orthogonal, this then implies that $p(\widehat{W})=W_v$. Interchanging the roles of $\widehat{W}$ and $W_v$, we may also conclude that there exists a complex polynomial $q$ such that $q(W_v)=\widehat{W}$. 

Now, let us write \(L_1=\mathbb Z_dv\) and \(L_2=\mathbb Z_dw\); then from the previous discussion, we immediately get that \(U\mathfrak A_{L_1}U^*=\mathfrak A_{L_2}\). Also, by Lemma~\ref{lem:unimodular-prop}, there exist Clifford gates \(C_1,C_2\) such that \(C_1\mathcal P_{L_1}C_1^*
=\mathcal P_{L_Z}\), and \(C_2\mathcal P_{L_2}C_2^* =\mathcal P_{L_Z}\). We may then verify that \(T:=C_{2}UC_{1}^*\) normalizes \(\mathfrak A_{L_Z}\), that is, \(T\mathfrak A_{L_Z}T^*=C_{2}UC_{1}^*\mathfrak A_{L_Z}C_{1}U^*C_{2}^*=\mathfrak A_{L_Z}\); so, in particular, $TZT^\ast$ is diagonal. A direct computation now shows that for every \(j\in\mathbb Z_d\), we have \((TZT^*)T|j\rangle=TZ|j\rangle=\omega^jT|j\rangle\), and thus each $T|j\rangle$ is an eigenvector of $TZT^*$, and all the eigenvalues are distinct. Since $TZT^*$ is diagonal, we further conclude that $T|j\rangle$ must lie in the span of a computational basis state, for every $j$. Since the eigenvectors of $TZT^*$ are orthogonal (by distictness of the eigenvalues), this implies that $T=\Pi D$, for some permutation gate $\Pi$ and a diagonal unitary $D$. Hence, from the definition of $T$, we get \(U=C_{2}^*\Pi DC_{1}\in\mathcal N\). \qedhere
\end{enumerate}
\end{proof}
\begin{rmk}
\label{rmk:unimodular-hierarchy-gate}
    For \(U\in\mathcal{C}_k\), from the arguments above, we have \(U\in\mathcal{SC}_k\) if and only if there exists an unimodular \(v\) such that \(UW_vU^*\in\mathcal P\). Same holds for \(\mathcal{N}_k\), that is, for \(U\in\mathcal{C}_k\), we have \(U\in\mathcal{N}_k\) if and only if there exists unimodular elements \(v\) and \(w\) such that \(UW_vU^*\) and \(W_w\) commute.
\end{rmk}

In multi-qubit systems~\cite[Proposition~5]{zeng_semi-clifford_2008} and prime-dimensional multi-qudit systems~\cite[Theorem~7]{silva_efficient_2021}, the normal form \(U=C_1DC_2\) is equivalent to conjugating some maximal abelian Pauli subgroup onto a maximal abelian Pauli subgroup, that is, \(U\mathcal{P}_LU^*=\mathcal{P}_{\hat{L}}\), for some Lagrangian submodules \(L,\hat{L}\). This equivalence follows from the transitivity of the Clifford action on the associated Lagrangians, so every maximal abelian Pauli subgroup is conjugated by Clifford gates to the standard \(Z\)-type subgroup. Building on the qubit analysis of~\cite{Pllaha_2020}, Pauli-support methods in~\cite[Theorem~5]{silva_clifford_2025} characterize single-qudit semi-Clifford hierarchy gates in prime dimension by the containment of the Pauli supports of their conjugate pair in parallel lines. These characterizations do not extend automatically to composite dimensions, motivating the formulations in terms of Pauli subgroups and their linear spans introduced below.

\begin{defn}[Lagrangian semi-Clifford gates]
\label{def:lsc}
A unitary \(U\in\U(\mathcal H)\) is called \emph{Lagrangian semi-Clifford} (or a \emph{Lagrangian semi-Clifford gate}) if there exists a Lagrangian submodule \(L\subseteq\mathbb Z_d^2\) such that \(U\mathcal P_LU^*\subseteq\mathcal P\). We denote the set of Lagrangian semi-Clifford gates by \(\mathcal{LSC}\) and set \(\mathcal{LSC}_k:=\mathcal C_k \cap \mathcal{LSC}\).
\end{defn}

As semi-Cliffordness does not cover the full Clifford hierarchy in multi-qudit systems, \cite{zeng_semi-clifford_2008} extends semi-Clifford gates to generalized semi-Clifford gates that replace the subgroup conjugation with its linear span analog. For a single qudit in an arbitrary dimension, we formulate both conditions with weaker requirements on the conjugate object (Definition~\ref{def:lsc} and Definition~\ref{def:gsc}). At the subgroup level, we require only containment in the Pauli group; at the algebra level, we require only invariance under conjugation by Weyl operators. We show below, in Lemma~\ref{lem:pauli-conjugation-equivalence}, that these conditions already force the image to be a maximal abelian Pauli subgroup or the linear span of such a subgroup, respectively, thereby establishing equivalence with the stronger definition in previous literature (see, for example,~\cite {zeng_semi-clifford_2008,beigi2009c3semicliffordgeneralizedsemiclifford, marcus2026generalizedsemicliffordconjectureholds}).

\begin{defn}[Generalized semi-Clifford gates]
\label{def:gsc}
A unitary \(U\in\U(\mathcal H)\) is called \emph{generalized semi-Clifford} (or a \emph{generalized semi-Clifford gate}) if there exists a Lagrangian submodule \(L\subseteq\mathbb Z_d^2\) such that \(U\mathfrak A_LU^*\) is invariant under conjugation by every Weyl operator, that is,
\[
    W_v(U\mathfrak A_LU^*)W_v^*
    =U\mathfrak A_LU^*,
    \qquad\forall v\in\mathbb Z_d^2.
\]
We denote the set of generalized semi-Clifford gates by \(\mathcal{GSC}\) and set \(\mathcal{GSC}_k:=\mathcal C_k\cap\mathcal{GSC}\).
\end{defn}

\begin{lemma}
\label{lem:pauli-conjugation-equivalence}
For \(U\in\U(\mathcal{H})\), the following criteria hold:
\begin{enumerate}[(i)]
\item \(U\in\mathcal{LSC}\) if and only if \(U\mathcal P_LU^*=\mathcal P_{\hat L}\) for some Lagrangian submodules \(L,\hat L\). 
\item \(U\in\mathcal{GSC}\) if and only if \(U\mathfrak A_LU^*=\mathfrak A_{\hat L}\) for some Lagrangian submodules \(L,\hat L\).
\end{enumerate}
\end{lemma}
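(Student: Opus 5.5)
In both parts the reverse implication is immediate. If \(U\mathcal P_LU^*=\mathcal P_{\hat L}\), then in particular \(U\mathcal P_LU^*\subseteq\mathcal P\). If \(U\mathfrak A_LU^*=\mathfrak A_{\hat L}\), then for every \(v\) and every \(\ell\in\hat L\) we have \(W_vW_\ell W_v^*=\omega^{f(v,\ell)}W_\ell\in\mathfrak A_{\hat L}\). So the work is in the forward directions, where I would exploit that \(U\mathcal P_LU^*\) (respectively \(U\mathfrak A_LU^*\)) is maximal abelian by Lemma~\ref{lem:unitary-conjugation-prop}(ii).

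\emph{Part (i).} Assume \(U\mathcal P_LU^*\subseteq\mathcal P\) with \(L\) Lagrangian. Set
\[
\hat L:=\{w\in\mathbb Z_d^2:\ \lambda W_w\in U\mathcal P_LU^*\text{ for some }\lambda\in\mathbb T\}.
\]
Since \(U\mathcal P_LU^*\) is a group containing \(\mathbb T I\), the set \(\hat L\) is closed under addition. It is also closed under \(\mathbb Z_d\)-multiples, because \(W_w^n\) is a phase times \(W_{nw}\). Hence \(\hat L\) is a submodule, and \(U\mathcal P_LU^*=\mathcal P_{\hat L}\). Because \(U\mathcal P_LU^*\) is abelian, \(\hat L\) is isotropic.

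For Lagrangianity I would count. Conjugation gives a bijection \(\mathcal P_L/\mathbb T I\to\mathcal P_{\hat L}/\mathbb T I\). Using that \(\{W_v\}\) are linearly independent, this yields \(|\hat L|=|L|=d\), and Lemma~\ref{lem:symplectic-size} then makes \(\hat L\) Lagrangian. Alternatively, maximality in \(U\mathcal PU^*\) could be transported, but counting is cleaner.

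\emph{Part (ii).} Assume \(\mathfrak B:=U\mathfrak A_LU^*\) is invariant under conjugation by all \(W_v\). I would decompose \(\mathfrak B\) under the action \(T\mapsto W_vTW_v^*\) of the abelian group \(\mathbb Z_d^2\). On \(\B(\mathcal H)\) this action is diagonal in the Weyl basis: \(W_v W_w W_v^*=\omega^{f(v,w)}W_w\). The character \(v\mapsto\omega^{f(v,w)}\) determines \(w\) uniquely, by nondegeneracy of \(f\) and faithfulness of \(\omega^{(\cdot)}\) (the isomorphism \(\Psi_{f,\chi}\)). Hence the isotypic components are exactly the lines \(\mathbb C W_w\). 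An invariant subspace is the direct sum of its intersections with isotypic components, via the usual projection \(\frac1{d^2}\sum_v\overline{\chi(v)}\,W_v(\cdot)W_v^*\). Therefore \(\mathfrak B=\operatorname{span}\{W_w:w\in\hat L\}\), where \(\hat L:=\{w: W_w\in\mathfrak B\}\).

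Next I would check that \(\hat L\) is a submodule. If \(W_{w},W_{w'}\in\mathfrak B\), then the product \(W_wW_{w'}\in\mathfrak B\) because \(\mathfrak B\) is an algebra, and this product is a phase times \(W_{w+w'}\); taking powers handles scalar multiples. So \(\mathfrak B=\mathfrak A_{\hat L}\). Since \(\mathfrak B\) is abelian, \(\hat L\) is isotropic, and \(\dim\mathfrak B=\dim\mathfrak A_L=|L|=d\) forces \(|\hat L|=d\). Lemma~\ref{lem:symplectic-size} then makes \(\hat L\) Lagrangian. One could instead invoke maximality via the earlier equivalence between maximal abelian Pauli subalgebras and Lagrangians.

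The main obstacle is the isotypic-decomposition step in (ii). It is the one place where composite \(d\) could cause trouble, and it needs distinct \(w\) to give distinct characters. This holds because \(\Psi_{f,\chi}\) is an isomorphism, as established before Lemma~\ref{lem:symplectic-size}.
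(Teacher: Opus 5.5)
Your proposal is correct and follows essentially the paper's proof. Part (i) is the same counting argument via Lemma~\ref{lem:unitary-conjugation-prop}(i) and Lemma~\ref{lem:symplectic-size}, and your isotypic projection in part (ii) is exactly the paper's twirl $E_{v'}(T)=\frac{1}{d^2}\sum_{u}\omega^{-f(u,v')}W_uTW_u^*$. The differences are minor and both valid: you check explicitly that $\hat L=\{w:W_w\in\mathfrak B\}$ is closed under addition, which the paper leaves implicit, and you get Lagrangianity in (ii) from $\dim\mathfrak B=d$ plus isotropy rather than by transporting maximal abelianness.
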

\begin{proof}
For the first equivalence, for some Lagrangian submodules \(L,\hat L\), the equality \(U\mathcal P_LU^*=\mathcal P_{\hat L}\) obviously implies \(U\in\mathcal{LSC}\) by definition. Conversely, given \(U\in\mathcal{LSC}\), there exists a Lagrangian submodule \(L\subseteq \mathbb Z_d^2\), such that \(U\mathcal P_LU^*\subseteq\mathcal{P}\) is an abelian Pauli subgroup, and so it equals \(\mathcal P_{\hat L}\) for some isotropic submodule \(\hat{L}\). Since by Lemma~\ref{lem:unitary-conjugation-prop}(i) unitary conjugation preserves the cardinality of the quotient of any Pauli subgroup modulo scalar phases, so we have that \(|\hat{L}|=|L|=d\). Lemma~\ref{lem:symplectic-size} therefore implies that \(\hat{L}\) is Lagrangian. 

For part (ii), assuming \(U\mathfrak A_LU^*=\mathfrak A_{\hat L}\), for some Lagrangian submodules \(L,\hat L\), we get \(U\in\mathcal{GSC}\) because every Pauli subalgebra is invariant under conjugation by Weyl operators. 

Conversely, we assume \(U\in\mathcal{GSC}\) and let \(L\) be the Lagrangian submodule that satisfies the definition of \(\mathcal{GSC}\). We write \(\mathfrak A:=U\mathfrak A_LU^*\). By assumption, \(\mathfrak A\) is invariant under conjugation by every Weyl operator, that is, \(W_uTW_u^*\in\mathfrak A\), for every \(T\in\mathfrak A\), and \(u\in\mathbb Z_d^2\). Fixing such a $T$, we may expand \(T\) uniquely in the Weyl operator basis to get \(T=\sum_{w\in \mathbb Z_d^2}a_wW_w\), with \(a_w\in\mathbb C\). Now for all \(v\in \mathbb Z_d^2\), we first claim that if \(a_v\neq 0\), then \(W_v\in\mathfrak A\). Using the expansion of $T$, we have  \(W_uTW_u^* =\sum_w a_w\omega^{f(u,w)}W_w\), for all $u \in \mathbb Z_d^2$. Then for any $v' \in \mathbb Z_d^2$, we evaluate the average
\[
\begin{aligned}
E_{v'}(T)
&:=
\frac1{d^2}\sum_{u\in \mathbb Z_d^2}
\omega^{-f(u,v')}W_uTW_u^*=
\sum_w a_w
\left(
\frac1{d^2}\sum_{u\in \mathbb Z_d^2}\omega^{f(u,w-v')}
\right)W_w = a_{v'} W_{v'}.
\end{aligned}
\] 
Since \(W_uTW_u^*\in\mathfrak A\) for every $u \in \mathbb Z_d^2$, we conclude $E_v(T) = a_v W_v \in \mathfrak A$, and this proves the claim. 

The above claim shows that $\mathfrak A$ contains at least one $W_u$, for some $u \in \mathbb Z_d^2$; thus if we define \(\hat{L}:=\{v\in\mathbb Z_d^2:W_v\in\mathfrak A\}\), then we know that it is non-empty. Clearly, from definition, we have $\mathfrak A_{\hat{L}} \subseteq \mathfrak A$. From the proof of the claim above, we also see that if $T=\sum_{w\in \mathbb Z_d^2}a_wW_w \in \mathfrak A$, then $W_v \in \mathfrak A_{\hat{L}}$ whenever $a_v \neq 0$, which implies that $T \in \mathfrak A_{\hat{L}}$. This shows $\mathfrak A \subseteq \mathfrak A_{\hat{L}}$, proving $\mathfrak A_{\hat{L}} = \mathfrak A$. It remains to show that $\hat{L}$ is Lagrangian. Since \(\mathfrak A_L\) is a maximal abelian subalgebra, it follows that \(U\mathfrak A_LU^* = \mathfrak A = \mathfrak A_{\hat{L}}\) is also a maximal abelian subalgebra, by Lemma~\ref{lem:unitary-conjugation-prop}(ii). But $\mathfrak A_{\hat{L}}$ is also an Pauli subalgebra, and so it is in fact a maximal abelian Pauli subalgebra, and hence $\hat{L}$ is Lagrangian.
\end{proof}

\begin{rmk}
    In either case, \(\hat L\) is uniquely determined by \(U\) and \(L\), since linear independence of Weyl operators implies both \(L\mapsto\mathcal{P}_L\) and \(L\mapsto\mathfrak A_L\) are injective.
\end{rmk}

Samce as the Clifford hierarchy, each of the four sequences \((\mathcal{SC}_k)_{k\geq1}\), \((\mathcal N_k)_{k\geq1}\), \((\mathcal{LSC}_k)_{k\geq1}\), and \((\mathcal{GSC}_k)_{k\geq1}\) is strictly increasing. The inclusions follow from the definitions and the nesting of the Clifford hierarchy. Strictness is shown  in Appendix~\ref{app:strict_nest}.

Moreover, the four types of gates, \(\mathcal{SC},\mathcal{LSC},\mathcal{GSC}\), and \(\mathcal{N}\), have interesting properties beyond those as merely hierarchy gates. For example, we exhibit an element in all these four classes that is not necessarily in any finite level of the Clifford hierarchy, and hence not in \((\mathcal{SC}_k)_{k\geq1}\), \((\mathcal N_k)_{k\geq1}\), \((\mathcal{LSC}_k)_{k\geq1}\), or \((\mathcal{GSC}_k)_{k\geq1}\), for any $k$. Already for a single qubit, consider \(D_\theta=\operatorname{diag}(1,e^{i\theta})\), where \(\theta/\pi\notin\mathbb Q\). Since \(D_\theta\) commutes with \(Z\), it fixes every element of the Lagrangian Pauli subgroup \(\mathcal P_{L_Z}\) under conjugation, where \(L_Z=\mathbb Z_d(0,1)^\top\). Thus \(D_\theta\mathcal P_{L_Z}D_\theta^*=\mathcal P_{L_Z}\), so \(D_\theta\in\mathcal{LSC}\). On the other hand, the identity \(D_\theta XD_\theta^*=e^{i\theta}XD_{-2\theta}\), together with the recursive definition of the hierarchy, gives \(D_\theta\in\mathcal C_k\) if and only if \(\theta\in\frac{\pi}{2^{k-1}}\mathbb Z\pmod{2\pi}\), for \(k\geq1\). Consequently, \(D_\theta\) lies outside every finite hierarchy level. This is the single qubit instance of the restriction to dyadic rational multiples of \(\pi\) in Anderson's normal form for semi-Clifford gates belonging to the hierarchy~\cite[Section~3]{Anderson_2024}. 

In Figure~\ref{fig:semi-clifford-venn}, we get two separate pictures for the four types of gates in multi-qubit or single-qudit of square-free dimensions, and a single qudit of dimension that is not square-free. Previously, for all \(k\in\mathbb N\), in the multi-qubit setting, it was known that \(\mathcal{SC}_k=\mathcal{LSC}_k\) and \(\mathcal{GSC}_k=\mathcal N_k\)~\cite[Proposition~1,~2]{zeng_semi-clifford_2008}; while in the multi-qudit setting with $d$ prime, it was known that \(\mathcal{SC}_k=\mathcal{LSC}_k\)~\cite[Theorem~7]{silva_efficient_2021}. More generally, the same equalities holds for a single qudit when $d$ is square-free, as we will show in Theorem~\ref{thm:square-free-class-equivalence}. On the contrary, when \(d\) is not square-free, there exist non-free Lagrangian submodules of $\mathbb Z_d^2$. Since the symplectic action preserves freeness of the Lagrangian submodules (Lemma~\ref{lem:symplectic-action-lagrangians}), these cannot be carried to the standard free Lagrangian through conjugation by Clifford gates. Consequently, the subgroup and algebra conditions defining \(\mathcal{LSC}\) and \(\mathcal{GSC}\) need not yield the normal forms defining \(\mathcal{SC}\) and \(\mathcal N\).

\begin{cor}
\label{cor:lsc-clifford-invariant}
For every dimension $d \in \mathbb N$, we have \(\mathcal{SC}\subseteq\mathcal{LSC}\cap\mathcal N\) and \(\mathcal{LSC}\cup\mathcal N\subseteq\mathcal{GSC}\). All four classes are closed under adjoints and left and right Clifford multiplication. Their intersections with \(\mathcal C_k\) are invariant under left and right multiplication by Clifford gates, for all \(k\geq2\).
\end{cor}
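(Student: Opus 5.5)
The plan is to deduce each inclusion from the characterizations in Lemma~\ref{lem:unimodular-direction-criteria} and Lemma~\ref{lem:pauli-conjugation-equivalence}, and then to check closure under adjoints and under Clifford multiplication class by class.

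\textbf{Inclusions.} Suppose \(U\in\mathcal{SC}\). Lemma~\ref{lem:unimodular-direction-criteria}(i) gives a unimodular \(v\) with \(UW_vU^*=\mu W_w\) for some \(\mu\in\mathbb T\); by projective order, \(w\) is unimodular as well.
\begin{itemize}
\item Since \(v\) is unimodular, \(L=\mathbb Z_dv\) is Lagrangian: it is isotropic and \(|L|=d\), so Lemma~\ref{lem:symplectic-size} applies. Every element of \(\mathcal P_L\) is \(\lambda W_v^t\) up to phase, so \(U\mathcal P_LU^*\subseteq\mathcal P\). Hence \(U\in\mathcal{LSC}\).
\item \(UW_vU^*\) is proportional to \(W_w\), so it commutes with \(W_w\). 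Lemma~\ref{lem:unimodular-direction-criteria}(ii) then gives \(U\in\mathcal N\).
\end{itemize}
For \(\mathcal{LSC}\subseteq\mathcal{GSC}\), Lemma~\ref{lem:pauli-conjugation-equivalence}(i) gives \(U\mathcal P_LU^*=\mathcal P_{\hat L}\). Taking complex spans yields \(U\mathfrak A_LU^*=\mathfrak A_{\hat L}\), so Lemma~\ref{lem:pauli-conjugation-equivalence}(ii) applies. For \(\mathcal N\subseteq\mathcal{GSC}\), the proof of Lemma~\ref{lem:unimodular-direction-criteria}(ii) already shows \(U\mathfrak A_{\mathbb Z_dv}U^*=\mathfrak A_{\mathbb Z_dw}\). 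Both of these submodules are Lagrangian because \(v\) and \(w\) are unimodular.

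\textbf{Closure.} Adjoints are handled as follows.
\begin{itemize}
\item For \(\mathcal{SC}\) and \(\mathcal N\), use the normal forms. We have \((C_1DC_2)^*=C_2^*D^*C_1^*\). Also \((C_1\Pi DC_2)^*=C_2^*D^*\Pi^*C_1^*=C_2^*\Pi^*(\Pi D^*\Pi^*)C_1^*\), where \(\Pi D^*\Pi^*\) is diagonal and \(\Pi^*\) is a permutation gate.
\item For \(\mathcal{LSC}\) and \(\mathcal{GSC}\), use the symmetric equalities in Lemma~\ref{lem:pauli-conjugation-equivalence}. From \(U\mathcal P_LU^*=\mathcal P_{\hat L}\) we get \(U^*\mathcal P_{\hat L}U=\mathcal P_L\), and likewise for the algebras.
\end{itemize}
Clifford multiplication is handled as follows.
\begin{itemize}
\item For the normal forms, \(\mathcal C_2\) is a group, so \(C(C_1DC_2)\widehat C=(CC_1)D(C_2\widehat C)\), and similarly for \(\mathcal N\).
\item For \(\mathcal{LSC}\) and \(\mathcal{GSC}\), set \(S=\rho([C])\) and \(\widehat S=\rho([\widehat C])\). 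Then \(CU\widehat C\) maps \(\mathcal P_{\widehat S^{-1}(L)}\) to \(\mathcal P_{S(\hat L)}\), and the same holds for the algebras, as in the proof of Lemma~\ref{lem:clif_prop}(ii). These submodules remain Lagrangian by Lemma~\ref{lem:symplectic-action-lagrangians}.
\end{itemize}
Intersecting with \(\mathcal C_k\), which is invariant under two-sided Clifford multiplication for \(k\ge2\) by Lemma~\ref{lem:clif_prop}(i), gives the final statement.

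No step is a serious obstacle. The main point of care is the adjoint of the permutation diagonal form, where \(D^*\) must be moved past \(\Pi^*\) using the fact that conjugating a diagonal unitary by a permutation gate gives another diagonal unitary.
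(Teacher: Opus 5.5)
Your proposal is correct and takes essentially the same route as the paper. The inclusions come from Lemma~\ref{lem:unimodular-direction-criteria} and Lemma~\ref{lem:pauli-conjugation-equivalence}, closure for $\mathcal{SC},\mathcal N$ from the normal forms, closure for $\mathcal{LSC},\mathcal{GSC}$ from the transported Lagrangians $\widehat S^{-1}(L)$ and $S(\hat L)$, and the statement about $\mathcal C_k$ from Lemma~\ref{lem:clif_prop}(i); you also fill in details the paper leaves implicit, such as $(C_1\Pi DC_2)^*=C_2^*\Pi^*(\Pi D^*\Pi^*)C_1^*$.
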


\begin{proof}
The inclusions are obvious from their definitions, Lemma~\ref{lem:unimodular-direction-criteria}, and Lemma~\ref{lem:clif_prop}. Adjoints and Clifford multiplication (both left and right) preserve both normal forms \(\mathcal{SC}\) and \(\mathcal{N}\). For the other two classes \(\mathcal{LSC}\) and \(\mathcal{GSC}\), Lemma~\ref{lem:pauli-conjugation-equivalence} proves closure under adjoints. For \(U\in\mathcal{LSC}\), we want to show that \(C_1UC_2\in\mathcal{LSC}\) for \(C_1,C_2\in\mathcal{C}_2\). By Lemma~\ref{lem:pauli-conjugation-equivalence}, there exist Lagrangian submodules \(L,\hat{L}\subseteq\mathbb Z_d^2\) such that \(U\mathcal P_LU^*=\mathcal P_{\hat L}\). Now let \(S_1,S_2\) be the symplectic representatives of \(C_1,C_2\) respectively. Then \((C_1UC_2)\mathcal P_{S_2^{-1}(L)}(C_1UC_2)^*=\mathcal P_{S_1(\hat L)}\), proving \(C_1UC_2\in\mathcal{LSC}\). Similarly, for \(U\in\mathcal{GSC}\), we get \((C_1UC_2)\mathfrak A_{S_2^{-1}(L)}(C_1UC_2)^*=\mathfrak A_{S_1(\hat L)}\).   This gives invariance under right and left multiplication by Clifford gates of \(\mathcal{GSC}\).
\end{proof}

\begin{figure}
  \centering
  \newcommand{\vennCapsule}[4]{%
    \draw[black, line width=1.15pt]
      ({#1-#3+#4},{#2-#4})
      -- ({#1+#3-#4},{#2-#4})
      arc[start angle=-90, end angle=90, radius=#4cm]
      -- ({#1-#3+#4},{#2+#4})
      arc[start angle=90, end angle=270, radius=#4cm]
      -- cycle;
  }

  \begin{minipage}[t]{0.485\textwidth}
    \centering
    \vspace{0.5em}
    \resizebox{\linewidth}{!}{%
      \begin{tikzpicture}[x=1cm,y=1cm,
                          every node/.style={text=black, font=\normalsize}]
        \vennCapsule{0}{0}{4.15}{2.45}
        \vennCapsule{0}{0.35}{2.15}{1.08}

        \node at (0,0.35) {$\mathcal{SC}=\mathcal{LSC}$};
        \node at (0,-2) {$\mathcal{GSC}=\mathcal N$};
      \end{tikzpicture}%
    }
    (a) Multi-qubit or a single qudit (\(d\) is square-free)\par
  \end{minipage}%
  \hfill
  \begin{minipage}[t]{0.485\textwidth}
    \centering
    \vspace{0.5em}
    \resizebox{\linewidth}{!}{%
      \begin{tikzpicture}[x=1cm,y=1cm,
                          every node/.style={text=black, font=\normalsize}]
        \vennCapsule{0}{0}{4.15}{2.45}
        \vennCapsule{-0.8}{0.22}{2.7}{1.55}
        \vennCapsule{0.8}{-0.20}{2.7}{1.55}
        \vennCapsule{0}{0.35}{1.48}{0.68}

        \node at (0,0.35) {$\mathcal{SC}$};
        \node at (-2.5,-0.72) {$\mathcal N$};
        \node at (2.5,-0.74) {$\mathcal{LSC}$};
        \node at (0,-2.02) {$\mathcal{GSC}$};
      \end{tikzpicture}%
    }
    (b) A single qudit (\(d\) is not square-free)\par
  \end{minipage}

  \caption{Relations among the four types of gates \(\mathcal{SC}\), \(\mathcal{LSC}\), \(\mathcal N\), and \(\mathcal{GSC}\).}
  \label{fig:semi-clifford-venn}
\end{figure}
\section{Hierarchy gates are semi-Clifford when \texorpdfstring{$d$}{} is square-free}
\label{sec:square-free}

We recall prime-power factorization, that is, \(d = \prod_{j=1}^s q_j,\) where \(q_j = p_j^{\alpha_j},\) for \(1\le j\le s\) and the integers \(q_1,\dots,q_s\) are pairwise coprime. For each $1\leq j\leq s$, we define its corresponding ring, Hilbert space, and phase as $\mathbb{Z}_{q_j}:= \mathbb Z/q_j \mathbb Z$, $\mathcal{H}_j:= \mathbb C^{q_j}$, and $\omega_j:=e^{2\pi i/ q_j}$ respectively. 

We use the Chinese remainder theorem (CRT) to reduce the study of hierarchy gates to prime-power dimensions. For any decomposition \(d=\prod_{j=1}^s q_j\) into pairwise coprime factors, the associated CRT unitary transforms every \(d\)-dimensional hierarchy gate into a tensor product of hierarchy gates on the individual factors. When \(d\) is square-free, the known prime-dimensional semi-Clifford theorem~\cite{zeng_semi-clifford_2008,silva_clifford_2025} then applies to each factor. Finally, we use the CRT decomposition of Lagrangians to establish \(\mathcal{SC}=\mathcal{LSC}\) and \(\mathcal{GSC}=\mathcal N\) for arbitrary unitaries in square-free dimensions.

\subsection{CRT factorization}
\label{ssec:crt}
We recall that CRT gives a ring isomorphism \(\phi:\mathbb Z_d\longrightarrow\prod_{j=1}^s\mathbb Z_{q_j}\), defined by \(\phi(x):=(x\bmod q_1,\ldots,x\bmod q_s)\). We define \(\hat q_j:=d/q_j\), \(r_j:=\hat q_j^{-1}\pmod{q_j}\), and \(e_j:=r_j\hat q_j\pmod d\). Indeed, \(e_j\equiv1\pmod{q_j}\), and \(e_j\equiv0\pmod{q_k}\quad(k\ne j).\) So the inverse of \(\phi\) is \(\phi^{-1}(x_1,\ldots,x_s)=\sum_{j=1}^s x_j e_j \pmod d\). Thus this identification induces the unitary \(\Gamma\ket{x}:=\ket{\phi(x)}=\bigotimes_{j=1}^s\ket{x\bmod q_j}\). 

Let \(X_j,Z_j\) denote the standard Weyl generators on \(\mathcal H_j\). The factorization for Pauli groups is standard (see, for example, Looi and Griffiths \cite[Sec.~II.D, Thm.~4 and App.~A, Eqs.~(A4)--(A6)]{looi_tripartite_2011}). For \(x\in \mathbb Z_d\), write \(x_j := x\pmod q_j\in \mathbb Z_{q_j}\).
Then \(\Gamma |x\rangle = |x_1\rangle\otimes\cdots\otimes |x_s\rangle.
\)
Since addition by \(1\) modulo \(d\) becomes addition by \(1\) in each CRT coordinate, we get
\(\Gamma X|x\rangle=|x_1+1\rangle\otimes\cdots\otimes |x_s+1\rangle=
\left(\bigotimes_{j=1}^s X_{j}\right)\Gamma |x\rangle.
\)
Hence \(\Gamma X\Gamma^* =\bigotimes_{j=1}^s X_j\). For \(Z\), since we can write \(x=\sum_{j=1}^s x_j e_j\) in $\mathbb Z_d$, we get \(\omega^x
=
\prod_{j=1}^s \exp\left(\frac{2\pi i}{q_j}r_j x_j\right)
=
\prod_{j=1}^s \omega_j^{\,r_j x_j}.
\) Thus \(\Gamma Z\Gamma^* = \bigotimes_{j=1}^s Z_{j}^{\,r_j}\). Moreover, \(\Gamma X^aZ^b\Gamma^*
=(\Gamma X\Gamma^*)^a(\Gamma Z\Gamma^*)^b=\bigotimes_{j=1}^sX_j^{a_j}Z_j^{r_jb_j}.\)  We define \(\Theta:\mathbb Z_d^2\longrightarrow\prod_j\mathbb Z_{q_j}^2\) by \(\Theta(a,b)^\top:=\bigl((a\bmod q_j,\ r_jb\bmod q_j)\bigr)^\top_j\). Then the following Lemma is the phase-space counterpart of the decomposition on stabilizer groups; see \cite[Corollary~5]{looi_tripartite_2011}.

\begin{lemma}
\label{lem:crt-lag}
Let \(L\subseteq\mathbb Z_d^2\) be a submodule and we define \(L_j:=\operatorname{pr}_j(\Theta(L))\subseteq\mathbb Z_{q_j}^2\) for \(1\leq j\leq s.\) Then \(\Theta(L)=\prod_{j=1}^sL_j\) and \(\Theta(L^\perp)=\prod_{j=1}^sL_j^\perp.\) In particular, \(L\) is Lagrangian in \(\mathbb Z_d^2\) if and only if every \(L_j\) is Lagrangian in \(\mathbb Z_{q_j}^2\).
\end{lemma}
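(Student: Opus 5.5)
We will first show that \(\Theta\) is a group isomorphism that also decomposes the module structure. By the Chinese remainder theorem, the map \((a,b)^\top\mapsto(a\bmod q_j,b\bmod q_j)_j\) is a \(\mathbb Z_d\)-module isomorphism \(\mathbb Z_d^2\to\prod_j\mathbb Z_{q_j}^2\). Moreover, each \(r_j\) is a unit modulo \(q_j\), so rescaling the second coordinate by \(r_j\) in the \(j\)-th factor is again an isomorphism. Hence \(\Theta\) is bijective and additive. It is also compatible with scalars: for \(t\in\mathbb Z_d\) we have \(\Theta(tv)_j=(t\bmod q_j)\,\Theta(v)_j\).

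Next we prove \(\Theta(L)=\prod_jL_j\). The inclusion \(\subseteq\) is immediate from the definition of \(L_j\). For the reverse inclusion, we use the CRT idempotents \(e_j\), which satisfy \(e_j\equiv 1\pmod{q_j}\) and \(e_j\equiv 0\pmod{q_k}\) for \(k\neq j\). Given \((y_j)_j\in\prod_jL_j\), choose \(\ell_j\in L\) with \(\operatorname{pr}_j\Theta(\ell_j)=y_j\). Set \(\ell:=\sum_je_j\ell_j\in L\), which lies in \(L\) because \(L\) is a submodule. By the scalar compatibility above, \(\operatorname{pr}_k\Theta(\ell)=\operatorname{pr}_k\Theta(\ell_k)=y_k\) for every \(k\). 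Therefore \(\Theta(\ell)=(y_j)_j\).

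For the orthogonal complement, we compare symplectic forms. Write \(v=(a,b)^\top\), \(w=(a',b')^\top\), and let \(f_j\) denote the standard form on \(\mathbb Z_{q_j}^2\). Then \(f_j(\Theta(v)_j,\Theta(w)_j)=r_j f(v,w)\bmod q_j\). Since \(r_j\) is a unit, \(f(v,w)=0\) in \(\mathbb Z_d\) if and only if \(f_j(\Theta(v)_j,\Theta(w)_j)=0\) for all \(j\).

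\emph{Main obstacle.} The delicate point is that \(v\in L^\perp\) requires \(f(v,\ell)=0\) for all \(\ell\in L\). This is a condition that holds factor by factor only after we know the product decomposition of \(\Theta(L)\) from the previous step. So we argue as follows. If \(\Theta(v)_j\in L_j^\perp\) for all \(j\), then for every \(\ell\in L\) each component of \(f(v,\ell)\) vanishes, and hence \(v\in L^\perp\). Conversely, suppose \(v\in L^\perp\) and \(y\in L_j\). Lift \(y\) to the element \(\ell\in L\) with \(\Theta(\ell)=(0,\dots,y,\dots,0)\), which exists by the product decomposition. Then the \(j\)-th component of \(f(v,\ell)=0\) gives \(f_j(\Theta(v)_j,y)=0\). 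This yields \(\Theta(L^\perp)=\prod_jL_j^\perp\).

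Finally, since \(\Theta\) is a bijection, \(L=L^\perp\) holds if and only if \(\prod_jL_j=\prod_jL_j^\perp\). Because these are products of subsets, this happens if and only if \(L_j=L_j^\perp\) for every \(j\), that is, if and only if every \(L_j\) is Lagrangian in \(\mathbb Z_{q_j}^2\).
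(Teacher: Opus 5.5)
Your proof is correct and follows the paper's argument. The CRT idempotents \(e_j\) give \(\Theta(L)=\prod_j L_j\), and the identity \(f_j(\Theta(v)_j,\Theta(w)_j)=r_j f(v,w) \bmod q_j\), with \(r_j\) a unit, handles the complements and the Lagrangian equivalence; you spell these out more fully than the paper does. One minor remark: your ``main obstacle'' is milder than you suggest, because for \(\Theta(L^\perp)\subseteq\prod_j L_j^\perp\) any \(\ell\in L\) with \(\operatorname{pr}_j\Theta(\ell)=y\) already works (since \(f(v,\ell)=0\) kills every CRT component), so the product decomposition is needed only for the final step \(L=L^\perp \Leftrightarrow L_j=L_j^\perp\).
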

\begin{proof}
The inclusion \(\Theta(L)\subseteq\prod_jL_j\) is obvious, because by definition, every element of \(\Theta(L)\) has its \(j\)th component in \(L_j\). To prove \(\prod_jL_j\subseteq\Theta(L)\), we recall \(\hat q_j:=d/q_j\), \(r_j:=\hat q_j^{-1}\pmod{q_j}\), and \(e_j:=r_j\hat q_j\pmod d.\) For \(v\in L\), since \(L\) is a submodule, we know \(e_jv\) is in \(L\). As \(e_j\equiv1\pmod{q_j}\) and \(e_j\equiv0\pmod{q_k}\quad(k\ne j)\), we get \(\Theta(e_jv)
=(0,\ldots,0,\Theta_j(v),0,\ldots,0)\), that is, multiplication by \(e_j\) selects the \(j\)th CRT component of a submodule. Now choose any \((x_1,\ldots,x_s)\in\prod_jL_j\). For each \(j\), by the definition of \(L_j\), there exist some \(v^{(j)}\in L\) such that \(\Theta_j(v^{(j)})=x_j.\) Then we define \(v:=\sum_je_jv^{(j)}\in L\), and then \(\Theta(v)=(x_1,\ldots,x_s)\). This proves the reverse inclusion. We next prove the equivalence. We note that \([\Theta_j(v),\Theta_j(w)]=r_j([v,w]\bmod q_j)\). As each \(r_j\) is a unit, the vanishing of all \([v,w]\bmod q_j)\) is equivalent to the vanishing of \([\Theta_j(v),\Theta_j(w)]\). Therefore, we conclude that the statement is true.
\end{proof}

Since \(\Gamma\) maps Weyl operators to the tensor product and Pauli subalgebras are their complex linear spans, the factorization of Lagrangians immediately yields the corresponding factorization of Pauli subalgebras, \(\Gamma\mathfrak A_L\Gamma^*=\bigotimes_{j=1}^s\mathfrak A_{L_j}\), which will be used in Section~\ref{sec:gscg}. Meanwhile, it specializes in the known factorization of the Clifford group \cite[App.~B, Eqs.~(127)--(131)]{appleby_monomial_2012}. In the following, we will prove that this factorization extends to the entire Clifford hierarchy.  Before that, we should recall the notations and facts we use in the proof. The unsuffixed symbols refer to the ambient space \(\mathcal H\), while we denote \(\mathcal P(q_j)\), \(\mathcal C_k(q_j)\), and \(\mathcal D_k(q_j)\) to be the Pauli group, the \(k\)th Clifford hierarchy, and its diagonal sector on \(\mathcal H_j\), respectively. Besides, the proof uses two standard facts. First, the factors of a nonzero decomposable tensor are unique up to nonzero scalar rescalings whose product is one \cite[pp.~291--314]{Rom92}. The second one is that every automorphism of a full matrix algebra is inner by the Skolem--Noether theorem \cite[pp.~218--233]{Pie82}.
\begin{lemma}[CRT factorization of the hierarchy]
\label{lem:coprime-hierarchy-factorization}
For every \(k\in \mathbb N\),
\[
        \Gamma \mathcal C_k\Gamma^*
        =
        \left\{
        \bigotimes_{j=1}^s U_j:
         U_j\in \mathcal C_k(q_j)
        \right\}.
\]
\end{lemma}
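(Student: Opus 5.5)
We argue by induction on \(k\); both inclusions are handled together at each level. For the base case \(k=1\) we use the formula \(\Gamma X^aZ^b\Gamma^*=\bigotimes_j X_j^{a_j}Z_j^{r_jb_j}\). The map \(\Theta\) is a bijection \(\mathbb Z_d^2\to\prod_j\mathbb Z_{q_j}^2\), by CRT together with the fact that each \(r_j\) is a unit. Hence \(\Gamma\mathcal P\Gamma^*\) is exactly the set of phases times tensor products of Weyl operators, and phases can be absorbed into any single factor. This gives \(\Gamma\mathcal C_1\Gamma^*=\{\bigotimes_j P_j:P_j\in\mathcal P(q_j)\}\).

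For the inductive step, assume the claim holds at level \(k-1\), for \(k\geq2\).

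\emph{Inclusion \(\supseteq\).} Let \(U=\bigotimes_j U_j\) with \(U_j\in\mathcal C_k(q_j)\). For any Weyl operator we have \(U(\bigotimes_j P_j)U^*=\bigotimes_j U_jP_jU_j^*\), and each factor lies in \(\mathcal C_{k-1}(q_j)\). By the induction hypothesis this product lies in \(\Gamma\mathcal C_{k-1}\Gamma^*\). Since \(\Gamma\mathcal P\Gamma^*\) consists exactly of such tensor products, \(\Gamma^*U\Gamma\in\mathcal C_k\).

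\emph{Inclusion \(\subseteq\).} Let \(V:=\Gamma U\Gamma^*\) with \(U\in\mathcal C_k\). We must show that \(V\) is decomposable with factors in \(\mathcal C_k(q_j)\). Consider the local Pauli operators \(\widetilde X_j:=I\otimes\cdots\otimes X_j\otimes\cdots\otimes I\) and \(\widetilde Z_j\) defined analogously. These are images of Weyl operators: for instance \(\widetilde X_j=\Gamma X^{e_j}\Gamma^*\), and \(\widetilde Z_j\) is obtained by choosing \(b\) with \(r_jb\equiv1\pmod{q_j}\) and \(b\equiv0\) modulo the other \(q_k\).

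The key observation uses orders. The element \(\widetilde X_j\) has order \(q_j\), so by Lemma~\ref{lem:unitary-conjugation-prop}(i) the conjugate \(V\widetilde X_jV^*\) has projective order \(q_j\). By the induction hypothesis this conjugate equals \(\bigotimes_k A_k\) with \(A_k\in\mathcal C_{k-1}(q_k)\). We need to show that \(A_k\) is scalar for every \(k\neq j\).

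\textbf{Main obstacle.} Projective order alone does not immediately force this. We have \((\bigotimes_k A_k)^{q_j}\propto I\), so by the uniqueness of factors of decomposable tensors each \(A_k^{q_j}\) is scalar. We would prefer an argument independent of orders of hierarchy elements. The plan is therefore to use commutation instead. Since the \(\widetilde X_j,\widetilde Z_j\) for \(j\) ranging generate the Pauli group, and the operators for different \(j\) commute, their conjugates satisfy the same relations. Hence \(\mathcal M_j:=V(\M_{q_j}\text{ acting on slot } j)V^*\) (the algebra generated by the conjugates of \(\widetilde X_j,\widetilde Z_j\)) is a full matrix subalgebra of dimension \(q_j^2\), and these subalgebras pairwise commute.

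It therefore suffices to prove that \(V\widetilde X_jV^*\) and \(V\widetilde Z_jV^*\) act trivially on slots \(k\neq j\). For this we use the torsion fact: \(A_k^{q_j}\) is scalar, and \(A_k\) is a unitary on \(\mathbb C^{q_k}\) lying in the hierarchy. Conjugating a Pauli \(P\) on slot \(k\) by it, the \(q_j\)-th power of the conjugation acts trivially on \(P\), and so does the \(q_k^{\,N}\)-th power for suitable \(N\), using that hierarchy elements on \(\mathbb C^{q_k}\) have projective order a power of \(p_k\) up to the diagonal phases. We expect this step to need the most care. If it becomes delicate, the fallback is a direct counting argument: \(\mathcal M_j\) commutes with all \(\mathcal M_k\), \(k\neq j\), and their tensor product fills \(\B(\mathcal H)\).

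Granting that the conjugates of \(\widetilde X_j,\widetilde Z_j\) are local on slot \(j\), the automorphism \(T\mapsto VTV^*\) restricts to an automorphism of each local full matrix algebra \(\M_{q_j}\). By Skolem--Noether this restriction is inner, implemented by some \(U_j\). Then \(V^{-1}\bigotimes_j U_j\) commutes with every local algebra, hence with all of \(\B(\mathcal H)\), and so is a scalar. Absorbing the scalar gives \(V=\bigotimes_j U_j\). Finally, conjugating local Paulis shows \(U_jP_jU_j^*\in\mathcal C_{k-1}(q_j)\), using the level \(k-1\) hypothesis and the uniqueness of tensor factors. Hence \(U_j\in\mathcal C_k(q_j)\).
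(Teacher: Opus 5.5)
Your base case and the inclusion $\supseteq$ are correct and agree with the paper. The inclusion $\subseteq$, however, has a genuine gap at exactly the step you flag and then grant: that $V\widetilde X_jV^*$ and $V\widetilde Z_jV^*$ act as scalars on every slot other than $j$. Neither of your routes proves this.

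The torsion route rests on the claim that hierarchy elements on $\mathbb C^{q_k}$ have projective order a power of $p_k$. That claim is false. The Fourier gate on $\mathbb C^3$ is Clifford, its square $\ket{x}\mapsto\ket{-x}$ is not scalar, and its fourth power is $I$, so its projective order is $4$. Now take $d=12$, with $q_1=4$, $q_2=3$, and level $k\geq 3$. The only constraint you extract on the $\mathbb C^3$-factor $A$ of $V\widetilde X_1V^*$ is $A^4\propto I$. This does not rule out $A$ being that Fourier gate.

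The fallback counting argument is vacuous. For every unitary $V$, the conjugated slot algebras $\mathcal M_j$ are pairwise commuting full matrix algebras that together generate $\B(\mathcal H)$. So this observation cannot separate decomposable $V$ from entangling ones. What is missing is a step that combines the decomposability of each individual conjugate with the commutation relations among them.

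The missing idea is to apply coprimality to the commutation phases slot by slot, rather than to the orders of the operators. Write $V\widetilde X_jV^*=\bigotimes_l A_{jl}$ and $V\widetilde Z_jV^*=\bigotimes_l B_{jl}$.

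From $\widetilde Z_j\widetilde X_j=\omega_j\widetilde X_j\widetilde Z_j$ and uniqueness of tensor factors, $B_{jl}A_{jl}=\lambda_{jl}A_{jl}B_{jl}$ with $\prod_l\lambda_{jl}=\omega_j$. Taking determinants on $\mathcal H_l$ gives $\lambda_{jl}^{q_l}=1$. Pairwise coprimality of the $q_l$ then forces $\lambda_{jj}=\omega_j$ and $\lambda_{jl}=1$ for $l\neq j$. Hence $A_{jj},B_{jj}$ generate $\B(\mathcal H_j)$ by Lemma~\ref{lem:conjugate_pairs}(i).

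Apply the same determinant-plus-coprimality argument to the commutation of the conjugates of $\widetilde X_j,\widetilde Z_j$ with those of $\widetilde X_r,\widetilde Z_r$ for $r\neq j$. Every slotwise phase comes out equal to $1$. So the slot-$j$ factors $A_{rj},B_{rj}$ commute with $A_{jj},B_{jj}$ and are therefore scalar.

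This is exactly the locality statement you assumed, and it is how the paper argues. Once it is in place, your Skolem--Noether step goes through (the paper instead invokes Lemma~\ref{lem:conjugate_pairs}(ii)), and so does your final membership check.
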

\begin{proof}
The proof proceeds by induction on the hierarchy level $k$. For \(k=1\), since \(\Gamma X^aZ^b\Gamma^*=\bigotimes_{j=1}^s =X_j^{a_j}Z_j^{r_jb_j}\), where \(r_j=\hat q_j^{-1}\pmod{q_j}\) for each \(j\). Since every \(r_j\) is invertible, we conclude \(\Gamma\mathcal P(d)\Gamma^*=\{\bigotimes_{j=1}^sP_j:P_j\in\mathcal P(q_j)\}.\)

Assume the theorem holds for $k-1$ for \(k\ge 2\). First we prove the ``\(\supseteq\)'' direction. Let
\(U:=\bigotimes_{j=1}^s U_j\), where \(U_j\in\mathcal C_k(q_j)\). It suffices
to show that \(\Gamma^*U\Gamma\in\mathcal C_k\). For any Weyl operator
\(W\in\mathcal P\), \(k=1\) case gives \(\Gamma W\Gamma^*=\bigotimes_{j=1}^s W_j\) where \( W_j\in\mathcal P(q_j).\) Hence
\(U(\Gamma W\Gamma^*)U^*=
    \bigotimes_{j=1}^s U_jW_jU_j^*
    \in
    \bigotimes_{j=1}^s \mathcal C_{k-1}(q_j).
\)
By the inductive hypothesis, the last tensor product lies in \(\Gamma\mathcal C_{k-1}\Gamma^*\), so \((\Gamma^* U\Gamma) W(\Gamma^* U \Gamma)^*\in\mathcal{C}_{k-1}\). Therefore, \(\Gamma^*U\Gamma\in\mathcal C_{k}\), which is equivalent to \(U\in\Gamma \mathcal{C}_k\Gamma^*\).

Now to prove the other direction, consider \(U\in\mathcal{C}_k\) and define \(V:=\Gamma U \Gamma^*\in\U(\bigotimes_j\mathcal H_j)\). Denote the embedded local generators on \(\bigotimes_{j=1}^s \mathcal{H}_j\) as \(\widetilde X_j
:=
I_{\mathcal H_1}\otimes\cdots\otimes I_{\mathcal H_{j-1}}
\otimes X_{j}
\otimes I_{\mathcal H_{j+1}}\otimes\cdots\otimes I_{\mathcal H_s}\) and \(\widetilde Z_j:=I_{\mathcal H_1}\otimes\cdots\otimes I_{\mathcal H_{j-1}}
\otimes Z_{j}
\otimes I_{\mathcal H_{j+1}}\otimes\cdots\otimes I_{\mathcal H_s}\) with the corresponding pullbacks in \(\mathcal{P}\) as \(X^{(j)}:=\Gamma^*\widetilde X_j\Gamma\) and \(Z^{(j)}:=\Gamma^*\widetilde Z_j\Gamma\). By \(k=1\) case, both \(X^{(j)}\) and \(Z^{(j)}\) lie in \(\mathcal P\).  Now, consider conjugating \(\widetilde X_j\) and \(\widetilde Z_j\) by \(V\), we get $A_j := V\widetilde X_j V^*= \Gamma U X^{(j)} U^*\Gamma^*$ and $B_j := V\widetilde Z_j V^*=\Gamma U Z^{(j)} U^*\Gamma^*$; as $U$ is in the $k$-th level, both lie in $\Gamma \mathcal{C}_{k-1}\Gamma^*$. By the inductive hypothesis, $A_j$ and $B_j$ factorize completely into local tensor components, that is, \(A_j=\bigotimes_{l=1}^s A_{jl}\) and \(B_j=\bigotimes_{l=1}^s B_{jl}\), where \(A_{jl},B_{jl}\in \mathcal C_{k-1}(q_l).\)

We now fix \(j\) and next want to show that \(A_{jj}\) and \(B_{jj}\) satisfy the Weyl commutation relation with phase \(\omega_j\), and hence generate the full local matrix algebra \(\mathcal B(\mathcal H_j)\). We know that $\widetilde{Z}_j \widetilde{X}_j = \omega_{j} \widetilde{X}_j \widetilde{Z}_j$. Since this relation must be preserved by $A_j$ and $B_j$, there exist \(\lambda_{jl}\in\mathbb T\) such that \(B_{jl}A_{jl}=\lambda_{jl}A_{jl}B_{jl}\) for all \(l\) with \(\prod_{l=1}^s\lambda_{jl}=\omega_{j}.\) Meanwhile, as \(\det (B_{jl}) \det (A_{jl})=\lambda^{q_l}_{jl}\det (A_{jl}) \det (B_{jl})\), we have \(\lambda^{q_l}_{jl}=1\). Since \(\lambda_{j\ell}^{q_\ell}=1\) and \(\prod_\ell\lambda_{j\ell}=\omega_j\), whose order divides \(q_j\), coprimality forces \(\lambda_{jj}=\omega_j\) and \(\lambda_{j\ell}=1\) for \(\ell\ne j\). Thus by (i) of Lemma~\ref{lem:conjugate_pairs}, \(B_{jj}A_{jj}=\omega_jA_{jj}B_{jj}\) implies \(A_{jj}\) and \(B_{jj}\) generate the entire local matrix algebra $\B(H_j)$

We now want to show that the factors \(A_{j\ell}\) and \(B_{j\ell}\) are scalar multiples of the identity whenever \(\ell\ne j\). Consider \(\widetilde X_j, \widetilde Z_j, \widetilde X_r,\widetilde Z_r\) for \(r\neq j\). The corresponding \(A_j, B_j, A_r, B_r\) commutes accordingly. Since \(A_jA_r=A_rA_j\), equating the tensor components yields \(A_{j\ell}A_{r\ell}
=\mu_\ell A_{r\ell}A_{j\ell},
\) and \(\prod_\ell\mu_\ell=1.\) Taking determinants gives \(\mu_\ell^{q_\ell}=1\); then again the coprime argument implies every \(\mu_\ell=1\). The same argument is required for \(A_jB_r\), \(B_jA_r\), and \(B_jB_r\). So after taking \(l=j\), we find that for every \(r\ne j\), \(A_{rj}\) and \(B_{rj}\) commute with both \(A_{jj}\) and \(B_{jj}\). Since
\(A_{jj},B_{jj}\) form a conjugate pair on \(\mathcal H_j\), and the commutant of \(\{A_{jj},B_{jj}\}\) in \(\B(\mathcal H_j)\) consists only of scalars. Thus \(A_{rj}\) and \(B_{rj}\) are scalar whenever \(r\neq j\), and then we absorb their products into the \(j\)-th component, after which \(A_{jj}\) and \(B_{jj}\) are renamed. Therefore \(A_j=I\otimes\cdots\otimes A_{jj}\otimes\cdots\otimes I,\) and \(B_j= I\otimes\cdots\otimes B_{jj}\otimes\cdots\otimes I
\). Since \(\widetilde X_j^{q_j}=\widetilde Z_j^{q_j}=I\), conjugation by \(V\) gives \(A_{jj}^{q_j}=B_{jj}^{q_j}=I\). Together with \(B_{jj}A_{jj}=\omega_jA_{jj}B_{jj}\), this shows that \((A_{jj},B_{jj})\) is a conjugate pair on \(\mathcal H_j\). By Lemma~\ref{lem:conjugate_pairs}(ii), there exists \(V_j\in\mathrm U(\mathcal H_j)\) such that \(V_jX_jV_j^*=A_{jj}\) and \(V_jZ_jV_j^*=B_{jj}\). Set \(W=\bigotimes_{j=1}^sV_j\). Then \(V\) and \(W\) induce the same conjugation on every embedded generator \(\widetilde X_j,\widetilde Z_j\). Consequently, \(W^*V\) commutes with all these generators. Since they generate \(\B(\bigotimes_{j=1}^s\mathcal H_j)\), their common commutant consists only of scalar operators. Thus \(W^*V=\lambda I\) for some \(\lambda\in\mathbb T\), so
\(V=\lambda\bigotimes_{j=1}^sV_j\). Absorbing \(\lambda\) into \(V_1\) yields the required factorization.

Finally, we consider an arbitrary \(P_j\in\mathcal P(q_j)\), and define its embedded operator as \(\widetilde P_j:=I_{\mathcal{H}_1}\otimes\cdots\otimes P_j\otimes\cdots\otimes I_{\mathcal{H}_s}\). Therefore, \(V\widetilde P_jV^*=
I\otimes\cdots\otimes(V_jP_jV_j^*)\otimes\cdots\otimes I
\in\Gamma\mathcal C_{k-1}\Gamma^*.\) The induction hypothesis then gives \(V_jP_jV_j^*\in\mathcal C_{k-1}(q_j)\), proving \(V_j\in\mathcal C_k(q_j)\). This proves the inclusion ``\(\subseteq\)''. Taking all the above into account, the equality holds.
\end{proof}

\subsection{Square-free semi-Clifford theorem}
\label{ssec:square_free}

The CRT factorization reduces the square-free case to prime dimensions. We therefore recall the following known result, which combines the previously established qubit and odd-prime cases.

\begin{theorem}[Prime-dimensional semi-Clifford theorem
{\cite{zeng_semi-clifford_2008,silva_clifford_2025}}]
\label{thm:prime-semi-clifford}
We assume that the fixed ambient dimension is prime, that is, \(d=p\). Then, for every \(k\in\mathbb N\), \(\mathcal C_k=\mathcal{SC}_k\).
Equivalently, every \(U\in\mathcal C_k\) admits a decomposition \(U=C_1DC_2\), where \(C_1,C_2\in\mathcal C_2,\) and \(D\in\mathcal D_k\).
\end{theorem}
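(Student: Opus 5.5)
For $d=p$ prime the statement is exactly the combination of the single-qubit result of \cite{zeng_semi-clifford_2008} with the odd-prime single-qudit theorem of \cite{silva_clifford_2025}, so in the paper it can be quoted directly. The plan below shows how I would reprove it in the language already set up here.

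\textbf{Reduction to a Pauli-to-Pauli statement.} Over $\mathbb Z_p$ every nonzero $v\in\mathbb Z_p^2$ is unimodular. Hence Remark~\ref{rmk:unimodular-hierarchy-gate} reduces $U\in\mathcal{SC}_k$ to the following claim:
\begin{equation*}
\text{for every } U\in\mathcal C_k \text{ there is } v\neq 0 \text{ with } UW_vU^*\in\mathcal P .
\end{equation*}
Given such a $v$, Lemma~\ref{lem:unimodular-direction-criteria}(i) produces $U=C_1DC_2$ with $C_1,C_2\in\mathcal C_2$ and $D\in\mathcal D$. Then $D=C_1^*UC_2^*$ lies in $\mathcal C_k$ by Lemma~\ref{lem:clif_prop}(i), so $D\in\mathcal D_k$, which is the stated refinement.

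\textbf{Induction setup.} I would prove the claim by induction on $k$. The cases $k=1,2$ are immediate, since Paulis and Cliffords map every Pauli to a Pauli up to phase. For $k\ge3$, pass to the conjugate pair $(A,B)=(UXU^*,UZU^*)$. By Lemma~\ref{lem:conjugate_pairs}(iii), every $A^aB^b$ lies in $\mathcal C_{k-1}$, and the goal becomes finding $(a,b)\neq 0$ with $A^aB^b\in\mathcal P$. By the inductive hypothesis each $A^aB^b$ is semi-Clifford. In particular $B=C_1'D'C_2'$, and after replacing $U$ by $CU$ for a suitable Clifford $C$ (Lemma~\ref{lem:clif_prop}(i)) I may try to normalise $B$ to have a simple form. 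The natural target is to make $B$ diagonal in some Clifford-related basis.

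\textbf{Main obstacle.} The hard step is producing a single Pauli inside $\mathcal W(A,B)$. Semi-Cliffordness of $A$ and $B$ separately does not obviously give one. I would follow the Pauli-support strategy of \cite{silva_clifford_2025}. Expand $A=\sum_w\alpha_wW_w$ and $B=\sum_w\beta_wW_w$. Use $B^p=I$, $A^p=I$, $BA=\omega AB$, and the level-$(k-1)$ constraint (each conjugate $A^aB^bW_u(A^aB^b)^*$ lies in $\mathcal C_{k-2}$) to show that the supports of $A$ and $B$ are contained in two parallel affine lines $w_1+\mathbb Z_p v_0$ and $w_2+\mathbb Z_p v_0$.

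Once this parallel-line structure is established, I would choose $(a,b)$ to cancel the $v_0$-independent offsets. Concretely, take $(a,b)$ with $aw_1+bw_2\in\mathbb Z_p v_0$ modulo the line direction. Then $A^aB^b$ is supported on a single coset of $\mathbb Z_p v_0$, and together with its order-$p$ and commutation properties this should force it to be a single Weyl operator up to phase.

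I expect the parallel-line support lemma to be the main difficulty. It is where primality is used: lines through the origin are exactly the Lagrangians and all are Clifford-equivalent to $L_Z$. The case $p=2$ needs the separate argument of \cite{zeng_semi-clifford_2008}, because phases there are powers of $\tau$ rather than $\omega$.
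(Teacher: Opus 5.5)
Your proposal agrees with the paper, which gives no independent proof of this theorem and simply quotes the qubit case from \cite{zeng_semi-clifford_2008} and the odd-prime case from \cite{silva_clifford_2025}; your remark that $D=C_1^*UC_2^*\in\mathcal D_k$ by Lemma~\ref{lem:clif_prop}(i) correctly justifies the refinement $D\in\mathcal D_k$. Your reproof sketch goes beyond what the paper does, but it defers its key step (the parallel-line support lemma) to that same reference, so it is commentary rather than a self-contained proof, and nothing more is needed here.
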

We now extend the prime-dimensional result to square-free composite dimensions.  By the CRT factorization of the Clifford hierarchy, the argument reduces to applying Theorem~\ref{thm:prime-semi-clifford} on each prime tensor factor.

\begin{lemma}
\label{lem:squarefree-semiclifford}
Let \(d\) be square-free. Then in a single qudit of \(d\) dimension, for every $k\in\mathbb N$, \(\Cliff_k=\SC_k\), that is, every single-qudit Clifford-hierarchy gate is semi-Clifford.
\end{lemma}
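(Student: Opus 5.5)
The plan is to reduce to the prime case through the CRT unitary \(\Gamma\) and then reassemble the prime-wise normal forms into one for dimension \(d\). Let \(U\in\mathcal C_k\) with \(d=p_1\cdots p_s\) square-free, so each \(q_j=p_j\). By Lemma~\ref{lem:coprime-hierarchy-factorization}, \(\Gamma U\Gamma^*=\bigotimes_{j=1}^s U_j\) with \(U_j\in\mathcal C_k(p_j)\). Theorem~\ref{thm:prime-semi-clifford} then gives \(U_j=C_1^{(j)}D^{(j)}C_2^{(j)}\) with \(C_1^{(j)},C_2^{(j)}\in\mathcal C_2(p_j)\) and \(D^{(j)}\) diagonal on \(\mathcal H_j\).

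Rather than pulling back the products directly, I would use the criterion of Lemma~\ref{lem:unimodular-direction-criteria}(i). From the prime decomposition, \(w_j:=\rho([C_2^{(j)}])^{-1}(0,1)^\top\in\mathbb Z_{p_j}^2\) is unimodular, and \(U_jW_{w_j}U_j^*\) is, up to phase, \(C_1^{(j)}Z_jC_1^{(j)*}\in\mathcal P(p_j)\). Take \(v\in\mathbb Z_d^2\) with \(\Theta(v)=(w_1,\ldots,w_s)\); such \(v\) exists because \(\Theta\) is a bijection, by CRT and the invertibility of the \(r_j\). Then \(\Gamma W_v\Gamma^*=\bigotimes_j W_{w_j}\), up to the ordering convention \(X_j^{a_j}Z_j^{r_jb_j}\), which matches the definition of \(\Theta\). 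Hence
\[
\Gamma UW_vU^*\Gamma^*=\bigotimes_j U_jW_{w_j}U_j^*\in\bigotimes_j\mathcal P(p_j)=\Gamma\mathcal P\Gamma^*,
\]
where the last equality is the \(k=1\) case of Lemma~\ref{lem:coprime-hierarchy-factorization}. So \(UW_vU^*\in\mathcal P\).

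The remaining check is that \(v\) is unimodular in \(\mathbb Z_d^2\). Writing \(v=(a,b)^\top\), the pair \((a,b)\) is unimodular precisely when, for every \(j\), the reductions \((a\bmod p_j,b\bmod p_j)\) are not both zero. The \(j\)-th component of \(\Theta(v)\) is \((a\bmod p_j,\,r_jb\bmod p_j)=w_j\), which is nonzero since \(w_j\) is unimodular over the field \(\mathbb Z_{p_j}\), and \(r_j\) is a unit. Lemma~\ref{lem:unimodular-direction-criteria}(i) then gives \(U\in\mathcal{SC}\), and so \(U\in\mathcal{SC}_k\). The reverse inclusion \(\mathcal{SC}_k\subseteq\mathcal C_k\) is immediate from the definition.

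The main obstacle I expect is bookkeeping rather than anything conceptual. The phases and exponents \(r_j\) in \(\Gamma W_v\Gamma^*\) have to be tracked carefully so that the product lands in \(\Gamma\mathcal P\Gamma^*\), but the \(k=1\) factorization already handles this, and phases are harmless because \(\mathcal P\) contains all of \(\mathbb T\). The route through the unimodular criterion deliberately avoids having to show that a tensor product of local Cliffords pulls back to a Clifford and a tensor product of diagonals to a diagonal. That fact also follows from Lemma~\ref{lem:coprime-hierarchy-factorization} with \(k=2\), together with \(\Gamma\) mapping computational basis states to product basis states, so it could serve as an alternative direct proof: \(U=(\Gamma^*\bigotimes_j C_1^{(j)}\Gamma)(\Gamma^*\bigotimes_j D^{(j)}\Gamma)(\Gamma^*\bigotimes_j C_2^{(j)}\Gamma)\).
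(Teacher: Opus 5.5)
Your proof is correct, and it departs from the paper's at the reassembly step. The paper pulls the local normal forms back directly, which is exactly the alternative you sketch at the end. It sets $C_i:=\Gamma^*(\bigotimes_j C_{j,i})\Gamma$ and $D:=\Gamma^*(\bigotimes_j D_j)\Gamma$, uses Lemma~\ref{lem:coprime-hierarchy-factorization} at $k=2$ to see that the $C_i$ are Clifford, and checks that $D\ket{x}=\prod_j\delta_j(x\bmod p_j)\ket{x}$ is diagonal in the original basis.

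You instead exhibit a unimodular element of the Pauli kernel. You lift the local labels $w_j$ through the bijection $\Theta$, for which $\Gamma W_v\Gamma^*=\bigotimes_j W_{\Theta_j(v)}$ holds exactly. You then check unimodularity one prime at a time and finish with Lemma~\ref{lem:unimodular-direction-criteria}(i). After the initial splitting of $U$, this needs only the $k=1$ case of the CRT factorization. In effect it shows that $\prod_j K(U_j)\subseteq\Theta(K(U))$ and that unimodularity is a local condition. That connects nicely with the kernel obstruction of Lemma~\ref{lem:pauli-kernel-criteria}(i) behind the non-square-free counterexamples.

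The cost is that your global Cliffords and diagonal factor appear only implicitly. They are produced inside the converse direction of Lemma~\ref{lem:unimodular-direction-criteria}(i), via Lemma~\ref{lem:unimodular-prop} and an affine Clifford permutation. The paper's argument is more elementary and exhibits them explicitly as pullbacks of the prime-wise factors. In both versions the square-free hypothesis is used only to invoke Theorem~\ref{thm:prime-semi-clifford} on each factor.
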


\begin{proof}
The direction $\SC_k\subseteq\Cliff_k$ is straightforward by the Clifford invariance in Lemma~\ref{lem:clif_prop}(i). We now prove the reverse inclusion. Let $U\in\Cliff_k(d)$. By Lemma~\ref{lem:coprime-hierarchy-factorization}, 
\(\Gamma U\Gamma^*=\bigotimes_{j=1}^sU_j\), where \(U_j\in\Cliff_k(p_j)\).
By the prime-dimensional \Cref{thm:prime-semi-clifford}, each local factor has a semi-Clifford decomposition
\(U_j=C_{j,1}D_jC_{j,2}\), where  \(C_{j,1}, C_{j,2}\in\Cliff_2(p_j)\) and \(D_j\in \mathcal D_k(p_j)\).
Thus, \(\Gamma U\Gamma ^*=\left(\bigotimes_j C_{j,1}\right)
\left(\bigotimes_j D_j\right)\left(\bigotimes_j C_{j,2}\right).\) Set \(C_1=\Gamma^*\left(\bigotimes_j C_{j,1}\right)\Gamma\), \(C_2=\Gamma^*\left(\bigotimes_j C_{j,2}\right)\Gamma\), and \(D=\Gamma^*\left(\bigotimes_j D_j\right)\Gamma.\)
By Lemma~\ref{lem:coprime-hierarchy-factorization} at $k=2$ and $q_i=p_i$, we have $C_1,C_2\in\Cliff_2$, and $D\in\mathcal C_k$.  

It remains to check that \(D\) is diagonal in the original computational basis. Since each \(D_j\) is diagonal in the standard basis of \(\mathcal H_j\), there exists a function \(\delta_j:\mathbb Z_{p_j}\to \mathbb T\) such that \(D_j\ket{x_j}=\delta_j(x_j)\ket{x_j}\), where\(x_j\in \mathbb Z_{p_j}\). Hence \(D\ket{x}=\Gamma^*\left(\bigotimes_{j=1}^s D_j\right)\Gamma\ket{x}=\left(\prod_{j=1}^s \delta_j(x_j)\right)\ket{x}.\) Thus, every computational basis element \(\ket{x}\) is an eigenvector of \(D\), so \(D\) is diagonal in the original computational basis of \(\mathcal H\). Therefore, \(U\in\SC_k\).
\end{proof}

We conclude with a second consequence of the CRT decomposition. When \(d\) is square-free, Lemma~\ref{lem:crt-lag} implies that every Lagrangian submodule of \(\mathbb Z_d^2\) is free of rank one.
The characterizations of \(\mathcal{LSC}\) and \(\mathcal{GSC}\) therefore recover the normal forms defining \(\mathcal{SC}\) and \(\mathcal N\), respectively. These equivalences hold for arbitrary unitaries, without a hierarchy restriction.

\begin{theorem}[Square-free dimension equivalences]
\label{thm:square-free-class-equivalence}
If \(d\geq2\) is square-free, then for a single qudit of dimension \(d\), we have \(\mathcal{SC}=\mathcal{LSC}\) and
\(\mathcal{GSC}=\mathcal N\).
\end{theorem}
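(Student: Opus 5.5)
The plan is to reduce both equalities to the single fact that, for square-free \(d\), every Lagrangian submodule of \(\mathbb Z_d^2\) is free of rank one. The criteria of Lemma~\ref{lem:unimodular-direction-criteria} and Lemma~\ref{lem:pauli-conjugation-equivalence} then match up. By Corollary~\ref{cor:lsc-clifford-invariant} we already have \(\mathcal{SC}\subseteq\mathcal{LSC}\) and \(\mathcal N\subseteq\mathcal{GSC}\), so only the reverse inclusions need proof.

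First I would establish freeness. Let \(L\subseteq\mathbb Z_d^2\) be Lagrangian with \(d=p_1\cdots p_s\). By Lemma~\ref{lem:crt-lag}, \(\Theta(L)=\prod_j L_j\), and each \(L_j\) is Lagrangian in \(\mathbb Z_{p_j}^2\). Since \(\alpha_j=1\), Lemma~\ref{lem:lagrangian-classification}(i) allows only \(m=0\), so each \(L_j=\mathbb Z_{p_j}u_j\) with \(u_j\neq 0\) a line in the vector space \(\mathbb Z_{p_j}^2\).

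The next step is to glue these generators together. Pick preimages and set \(v:=\sum_j e_j\tilde u_j\), where \(\tilde u_j\in L\) satisfies \(\Theta_j(\tilde u_j)=u_j\), as in the proof of Lemma~\ref{lem:crt-lag}. Then \(\Theta(v)=(u_1,\dots,u_s)\), so \(v\in L\). This \(v\) generates \(\prod_jL_j\) by CRT, hence \(L=\mathbb Z_dv\). It is also unimodular, because its coordinates are not all divisible by any \(p_j\): the map \(\Theta\) only rescales the second coordinate by the unit \(r_j\). Alternatively, \(|L|=d\) together with Lemma~\ref{lem:symplectic-size} gives freeness once \(L=\mathbb Z_dv\) is known.

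For \(\mathcal{LSC}\subseteq\mathcal{SC}\), take \(U\in\mathcal{LSC}\). Lemma~\ref{lem:pauli-conjugation-equivalence}(i) gives \(U\mathcal P_LU^*=\mathcal P_{\hat L}\), and by the previous step \(L=\mathbb Z_dv\) with \(v\) unimodular. Then \(UW_vU^*\in\mathcal P\), and Lemma~\ref{lem:unimodular-direction-criteria}(i) gives \(U\in\mathcal{SC}\).

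For \(\mathcal{GSC}\subseteq\mathcal N\), take \(U\in\mathcal{GSC}\). Lemma~\ref{lem:pauli-conjugation-equivalence}(ii) gives \(U\mathfrak A_LU^*=\mathfrak A_{\hat L}\), and both Lagrangians are free: \(L=\mathbb Z_dv\) and \(\hat L=\mathbb Z_dw\) with \(v,w\) unimodular. Since \(W_w\in\mathfrak A_{\hat L}\) and \(UW_vU^*\in\mathfrak A_{\hat L}\), both lie in an abelian algebra and so commute. Lemma~\ref{lem:unimodular-direction-criteria}(ii) then yields \(U\in\mathcal N\).

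The only step needing care is the freeness argument, specifically checking that the CRT-glued generator is unimodular and really generates \(L\). The remaining steps are direct citations of the earlier lemmas.
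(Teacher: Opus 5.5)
Your proposal is correct and follows the paper's proof essentially step for step. First you show every Lagrangian is free via the CRT decomposition of Lemma~\ref{lem:crt-lag} into lines $L_j\subseteq\mathbb Z_{p_j}^2$; then Lemma~\ref{lem:unimodular-direction-criteria}(i) gives $\mathcal{LSC}\subseteq\mathcal{SC}$, and Lemma~\ref{lem:unimodular-direction-criteria}(ii), applied to $UW_vU^*\in\mathfrak A_{\hat L}$ commuting with $W_w$, gives $\mathcal{GSC}\subseteq\mathcal N$. Your explicit check that the glued generator is unimodular (each $\Theta_j(v)=u_j\neq 0$, and $\Theta$ only rescales the second coordinate by a unit) fills in a detail the paper leaves implicit.
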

\begin{proof}
The normal forms immediately give \(\mathcal{SC}\subseteq\mathcal{LSC}\) and \(\mathcal N\subseteq\mathcal{GSC}\), because diagonal gates centralize \(\mathcal{P}_Z:=\{\lambda Z^b: \lambda\in\mathbb T,\ b\in\mathbb Z_d\}\), while permutation--diagonal gates normalize its linear span, and the conjugation by Clifford gates preserves Pauli structure.

Before discussing its converse, we claim that every Lagrangian \(L\subseteq\mathbb Z_d^2\) is free. By Lemma~\ref{lem:crt-lag}, every Lagrangian \(L\subseteq\mathbb Z_d^2\) corresponds under CRT to a product of Lagrangians \(L_j\subseteq\mathbb Z_{p_j}^2\). Each \(L_j\) is one-dimensional, so choosing a generator in each factor and combining them by CRT gives a generator \(v\) of \(L\) for which \(r\mapsto rv\) is an isomorphism \(\mathbb Z_d\to L\). Hence \(L\) is free of rank one. Now let \(U\in\mathcal{LSC}\). There is a Lagrangian submodule \(L=\mathbb Z_dv\), with \(v\) unimodular, such that \(U\mathcal P_LU^*\subseteq\mathcal P\). In particular, \(UW_vU^*\in\mathcal P\), so Lemma~\ref{lem:unimodular-direction-criteria}(i) gives \(U\in\mathcal{SC}\). Consider \(U\in\mathcal{GSC}\), we can get \(U\mathfrak A_LU^*=\mathfrak A_{\hat{L}}\) for some Lagrangian submodules \(L=\mathbb Z_dv\) and \(\hat{L}=\mathbb Z_dw\), with \(v,w\) unimodular. Hence \(UW_vU^*\in\mathfrak A_{\hat{L}}\), so it commutes with \(W_w\). Lemma~\ref{lem:unimodular-direction-criteria}(ii) yields \(U\in\mathcal N\), completing both equalities.
\end{proof}
\section{Counterexample construction when \texorpdfstring{$d$}{} is not square-free}
\label{sec:non-square-free}
To prove the converse of the square-free result, for every non-square-free \(d\), we construct a gate \(U\in\mathcal C_3\) that is not semi-Clifford. From Lemma~\ref{lem:unimodular-direction-criteria}, we know that the Weyl labels whose conjugates remain in the Pauli group are important for testing the membership.
\begin{defn}[Pauli kernel]
\label{def:pauli_kernel}
For \(U\in\U(\mathcal H)\), its \emph{Pauli kernel} is \(K(U):=\{v\in\mathbb Z_d^2:UW_vU^*\in\mathcal P\}\).
\end{defn}
The Pauli Kernel is a subgroup of \(\mathbb Z_d^2\). Indeed, since \(W_vW_w^*\) is proportional to \(W_{v-w}\), the set \(K(U)\) contains zero and is closed under subtraction. The following lemma characterizes semi-Clifford and Lagrangian semi-Clifford gates in terms of \(K(U)\) and, at the third level, relates this kernel to the symplectic image of the conjugated Paulis. Its proof uses two standard facts about finite groups. Lagrange's theorem states that the order of a subgroup divides the order of the group, and Cauchy's theorem states that if a prime \(p\) divides the order of a finite group, then the group
contains an element of order \(p\).

\begin{lemma}[Pauli-kernel criteria]
\label{lem:pauli-kernel-criteria}
Let \(U\in\mathcal C_k\).
\begin{enumerate}[(i)]
\item \(U\in\mathcal{SC}\) if and only if \(K(U)\) contains an unimodular element. In particular, if \(d\) has a divisor \(t>1\) such that \(K(U)\subseteq t\mathbb Z_d^2\), then \(U\notin\mathcal{SC}_k\).
\item \(U\in\mathcal{LSC}\) if and only if \(K(U)\) contains a Lagrangian submodule. For \(d=p^\alpha\), where \(p\) is prime and \(\alpha\in\mathbb N\) , \(U\in\mathcal{LSC}\) is equivalent to \(|K(U)|\geq p^\alpha\).
\item If \(U\in\mathcal C_3\), the map \(\Phi_U:\mathbb Z_d^2\to\operatorname{Sp}_2(\mathbb Z_d)\), defined by \(\Phi_U(v):=\rho([UW_vU^*])\), is a homomorphism with kernel \(K(U)\). Its symplectic image \(\mathcal S_U\) is abelian and satisfies \(\mathcal S_U\cong\mathbb Z_d^2/K(U)\).
\end{enumerate}
\end{lemma}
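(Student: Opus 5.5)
The three parts can be handled in order, reusing the criteria already established.

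For (i), we apply Lemma~\ref{lem:unimodular-direction-criteria}(i) together with Remark~\ref{rmk:unimodular-hierarchy-gate}. By definition of $K(U)$, the statement ``$UW_vU^*\in\mathcal P$ for some unimodular $v$'' says exactly that $K(U)$ contains a unimodular element. For the ``in particular'' claim, let $t>1$ divide $d$ with $K(U)\subseteq t\mathbb Z_d^2$. Every $v=(a,b)^\top\in K(U)$ then has $t\mid a$ and $t\mid b$ (the reduction mod $t$ is well defined since $t\mid d$), so $\gcd(a,b,d)$ is divisible by $t$ and $v$ is not unimodular. Hence $U\notin\mathcal{SC}_k$.

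For (ii), the forward direction is immediate. If $U\mathcal P_LU^*\subseteq\mathcal P$ for a Lagrangian $L$, then every $\ell\in L$ satisfies $UW_\ell U^*\in\mathcal P$, so $L\subseteq K(U)$. Conversely, if a Lagrangian $L\subseteq K(U)$, then $U\lambda W_\ell U^*\in\mathcal P$ for all $\lambda$ and $\ell\in L$, which is Definition~\ref{def:lsc}. For $d=p^\alpha$, the remaining claim is that $|K(U)|\ge p^\alpha$ already forces $K(U)$ to contain a Lagrangian. First we check that $K(U)$ is isotropic. If $v,w\in K(U)$, then $UW_vU^*$ and $UW_wU^*$ are Paulis whose commutation phase equals $\omega^{f(v,w)}$, since conjugation preserves the relation $W_vW_w=\omega^{f(v,w)}W_wW_v$. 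The image Paulis $\mu W_{v'}$ and $\mu' W_{w'}$ must therefore satisfy $f(v',w')=f(v,w)$. This does not directly give $f(v,w)=0$, so we argue differently. A subgroup $H\subseteq\mathbb Z_d^2$ with $|H|\ge p^\alpha$ contains a Lagrangian: every cyclic subgroup $\mathbb Z_d v$ is isotropic, and we can try to take a maximal isotropic subgroup of $K(U)$. Here we expect the main obstacle, because $K(U)$ is a priori only a subgroup, not known to be isotropic. We therefore plan to use Lemma~\ref{lem:lagrangian-classification}(ii) (submodules of order $p^\alpha$ are Lagrangian) together with the structure of subgroups of $\mathbb Z_{p^\alpha}^2$. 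A subgroup of order $\ge p^\alpha$ has Smith form $p^m\mathbb Z f_1\oplus p^n\mathbb Z f_2$ with $m+n\le\alpha$, and it then contains $p^m\mathbb Z f_1\oplus p^{\alpha-m}\mathbb Z f_2$, which has order exactly $p^\alpha$ and is therefore Lagrangian. Note that additive subgroups of $\mathbb Z_d^2$ are automatically $\mathbb Z_d$-submodules. The converse direction, $\mathcal{LSC}\Rightarrow|K(U)|\ge|L|=p^\alpha$, follows from Lemma~\ref{lem:symplectic-size}.

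For (iii), take $U\in\mathcal C_3$. Each $UW_vU^*$ lies in $\mathcal C_2$, so $\Phi_U(v)=\rho([UW_vU^*])$ is defined. It is a homomorphism because $W_{v+w}$ is proportional to $W_vW_w$, conjugation by $U$ is multiplicative, and $\rho$ is a homomorphism insensitive to phases. Its kernel consists of those $v$ with $[UW_vU^*]$ in the projective Pauli subgroup, which is $\ker\rho$ by the exact sequence, and this is exactly $K(U)$. The image $\mathcal S_U$ is abelian as the homomorphic image of the abelian group $\mathbb Z_d^2$, and the first isomorphism theorem gives $\mathcal S_U\cong\mathbb Z_d^2/K(U)$.
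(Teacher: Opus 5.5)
Your proposal is correct, and parts (i) and (iii) match the paper's proof essentially verbatim. The one real difference is in the prime-power half of (ii), where you need a submodule of order exactly $p^\alpha$ inside $K(U)$.

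The paper gets it from a general fact about finite abelian $p$-groups: one of order at least $p^r$ has a subgroup of order exactly $p^r$. That fact is proved by induction using Cauchy's theorem. You instead write $K(U)=p^m\mathbb Z_{p^\alpha}f_1\oplus p^n\mathbb Z_{p^\alpha}f_2$ via the Smith normal form. You then observe that $|K(U)|\ge p^\alpha$ forces $m+n\le\alpha$, and exhibit the explicit submodule $p^m\mathbb Z_{p^\alpha}f_1\oplus p^{\alpha-m}\mathbb Z_{p^\alpha}f_2\subseteq K(U)$. Both arguments then finish with Lemma~\ref{lem:lagrangian-classification}(ii).

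The paper's step is a standard group-theoretic fact that ignores the module and symplectic structure. Yours reuses the Smith-form machinery already set up for Lemma~\ref{lem:lagrangian-classification}, and it is constructive. You can even verify isotropy of your submodule directly without citing part (ii) of that lemma: $f(f_i,f_i)=0$, and the cross terms carry the factor $p^{m}p^{\alpha-m}=p^\alpha$.

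You should delete the exploratory passage asking whether $K(U)$ itself is isotropic. It is not needed, and in general it is false: for Clifford $U$ one has $K(U)=\mathbb Z_d^2$. This is exactly why one must pass to a suitable submodule of $K(U)$ rather than use $K(U)$ itself.
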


\begin{proof}
Part~(i) follows from
Lemma~\ref{lem:unimodular-direction-criteria}(i).
The final assertion holds because no element in \(t\mathbb Z_d^2\) is unimodular when \(t\) divides \(d\).

For part (ii), maximal abelian Pauli subgroups are precisely the groups $\mathcal P_L$ with $L$ Lagrangian. By the definition of $K(U)$, \(U\mathcal P_LU^*\subseteq\mathcal P\) if and only if \(L\subseteq K(U)\). Thus the first equivalence follows from the definition of $\mathcal{LSC}$. Now suppose \(d=p^\alpha\). Every Lagrangian submodule has order $d$ by Lemma~\ref{lem:symplectic-size}. Hence, if $K(U)$ contains a Lagrangian submodule, then $|K(U)|\ge d$. Conversely, suppose $|K(U)|\ge p^\alpha$. We use the elementary fact that a finite abelian $p$-group $G$ with $|G|\ge p^r$ contains a subgroup of order $p^r$. We will prove that by induction on $r$. The case $r=0$ is immediate. For $r>0$, Cauchy's theorem gives a subgroup $N\le G$ of order $p$. The quotient $G/N$ is again an abelian $p$-group and has order at least $p^{r-1}$. By induction, it contains a subgroup of order $p^{r-1}$, whose inverse image in $G$ has order $p^r$. Since $K(U)$ is a finite abelian $p$-group, this fact provides a subgroup $L\subseteq K(U)$ of order $p^\alpha=d$. As an additive subgroup of $\mathbb Z_d^2$, $L$ is also a $\mathbb Z_d$-submodule. By Lemma~\ref{lem:lagrangian-classification}(ii), such a submodule is Lagrangian. This proves the converse.

For part (iii), the assumption \(U\in\mathcal C_3\) ensures that \(\Phi_U\) is well-defined. The Weyl multiplication law gives \([UW_{v+w}U^*]=[UW_vU^*][UW_wU^*]\). Since \(\rho\) is a homomorphism, so is \(\Phi_U\). Moreover, \(\ker\rho\) is the projective Pauli subgroup, hence \(\ker\Phi_U=K(U)\). The first isomorphism theorem gives \(\mathcal S_U\cong\mathbb Z_d^2/K(U)\). Thus \(\mathcal S_U\) is abelian and, when \(d=p^\alpha\), is a \(p\)-group.
\end{proof}

\subsection{The motivating example in dimension \texorpdfstring{\(9\)}{}}
\label{ssec:d9-prototype}
Throughout this subsection, we set $\eta:=e^{2\pi i/27}$, $\omega:=\eta^3$, and $\zeta:=\eta^9$. In dimension $9$, the unique nonfree Lagrangian submodule is $3\mathbb{Z}_9^2$, while the free Lagrangian submodules are precisely $\langle(3h,1)\rangle$ for $h\in\mathbb{Z}_3$ and $\langle(1,j)\rangle$ for $j\in\mathbb{Z}_9$. Because these submodules are pairwise distinct, Lemma~\ref{lem:lagrangian-classification}(iii) states that there are exactly $1+3+9=13$ Lagrangian submodules in $\mathbb{Z}_9^2$, confirming that our list is exhaustive. Building on the framework from Section~\ref{ssec:semi_clifford_generalized}, the remainder of this subsection establishes the precise relations among the four structural types in the Clifford hierarchy for a single $9$-dimensional qudit (summarized in Figure~\ref{fig:dim9-relations}). 

\begin{figure}
\centering
\newcommand{\dimNineCapsule}[4]{%
  \draw[black, line width=1.15pt]
    ({#1-#3+#4},{#2-#4})
    -- ({#1+#3-#4},{#2-#4})
    arc[start angle=-90, end angle=90, radius=#4cm]
    -- ({#1-#3+#4},{#2+#4})
    arc[start angle=90, end angle=270, radius=#4cm]
    -- cycle;
}

\begin{minipage}[t]{0.32\textwidth}
  \centering
  \vspace{0pt}
  \resizebox{\linewidth}{!}{%
    \begin{tikzpicture}[x=1cm,y=1cm,
                        every node/.style={text=black, font=\normalsize, align=center}]
      \path[use as bounding box] (-4.10,-2.50) rectangle (4.10,2.55);
      \dimNineCapsule{0}{0.10}{3.90}{2.35}
      \node at (0,0.10) {
        $\mathcal{SC}_k=\mathcal N_k$\\[9pt]
        $=\mathcal{LSC}_k=\mathcal{GSC}_k$\\[9pt]
        $=\mathcal C_k$
      };
    \end{tikzpicture}%
  }
  \par\smallskip
  \textbf{(a) $k=1,2$}\par
\end{minipage}%
\hfill
\begin{minipage}[t]{0.32\textwidth}
  \centering
  \vspace{0pt}
  \resizebox{\linewidth}{!}{%
    \begin{tikzpicture}[x=1cm,y=1cm,
                        every node/.style={text=black, font=\normalsize, align=center}]
      \path[use as bounding box] (-4.10,-2.50) rectangle (4.10,2.55);
      \dimNineCapsule{0}{0.10}{3.90}{2.35}
      \dimNineCapsule{0}{0.72}{2.625}{1.25}
      \dimNineCapsule{0}{1.12}{1.65}{0.55}

      \node at (0,1.12) {$\mathcal{SC}_3$};
      \node at (-1.45,0.06) {$\mathcal N_3$};
      \node at (0,-1.67)
        {$\mathcal{LSC}_3=\mathcal{GSC}_3=\mathcal C_3$};

      \fill[black] (0.65,0.03) circle (1.5pt)
        node[right=3pt] {$\Pi_9$};
      \fill[black] (0.65,-0.99) circle (1.5pt)
        node[right=3pt] {$U_9$};
    \end{tikzpicture}%
  }
  \par\smallskip
  \textbf{(b) $k=3$}\par
\end{minipage}%
\hfill
\begin{minipage}[t]{0.32\textwidth}
  \centering
  \vspace{0pt}
  \resizebox{\linewidth}{!}{%
    \begin{tikzpicture}[x=1cm,y=1cm,
                        every node/.style={text=black, font=\normalsize, align=center}]
      \path[use as bounding box] (-4.10,-2.50) rectangle (4.10,2.55);
      \dimNineCapsule{0}{0.10}{3.90}{2.35}
      \dimNineCapsule{-0.78}{0.49}{2.70}{1.55}
      \dimNineCapsule{0.78}{0.19}{2.70}{1.55}
      \dimNineCapsule{0}{1.02}{1.325}{0.535}

      \node at (0,1.02) {$\mathcal{SC}_k$};
      \node at (-2.60,-0.03) {$\mathcal N_k$};
      \node at (2.50,-0.03) {$\mathcal{LSC}_k$};
      \node at (-0.75,-1.85) {$\mathcal{GSC}_k$};

      \fill[black] (-2.70,0.74) circle (1.5pt)
        node[right=3pt] {$Q_9$};
      \fill[black] (2.42,0.74) circle (1.5pt)
        node[right=3pt] {$U_9$};
      \fill[black] (-0.15,-0.36) circle (1.5pt)
        node[right=3pt] {$\Pi_9$};
      \fill[black] (1.06,-1.85) circle (1.5pt)
        node[right=3pt] {$V_9$};
    \end{tikzpicture}%
  }
  \par\smallskip
  \textbf{(c) $k\geq4$}\par
\end{minipage}
\caption{Relations among the four types for a single qudit of dimension nine. (a) At levels one and two, all four classes coincide with the corresponding hierarchy level. (b) At level three, $\mathcal{SC}_3\subsetneq\mathcal N_3\subsetneq\mathcal{LSC}_3
  =\mathcal{GSC}_3=\mathcal C_3$. (c) For every $k\geq4$, the gates $\Pi_9,U_9,Q_9,V_9$
  realize the four regions determined by these classes.}
\label{fig:dim9-relations}
\end{figure}

We first construct a permutation gate \(\Pi_9\), which cycles \(|0\rangle,|3\rangle,|6\rangle\) and fixes the other six computational-basis states. We will prove \(\Pi_9\in(\mathcal N_3\cap\mathcal{LSC}_3)\setminus\mathcal{SC}_3\) in Appendix~\ref{ssec:sc_3}, since the argument is technical and is not needed for the subsequent proofs. Moreover, the third-level hierarchy exhibits certain special properties.
\begin{prop}
\label{prop:c3-lsc-gsc-d9}
For a single qudit of dimension $9$, the following relations hold at the third level of the Clifford hierarchy:
\begin{enumerate}[(i)]
\item \(\mathcal C_3=\mathcal{LSC}_3=\mathcal{GSC}_3\)
\item \(\mathcal N_3\subseteq\mathcal{LSC}_3\).
\end{enumerate}
\end{prop}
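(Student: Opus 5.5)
The plan is to prove (i) by showing that every \(U\in\mathcal C_3\) has a Pauli kernel of order at least \(9\), and then to read off (ii) as a corollary. The inclusions \(\mathcal{LSC}_3\subseteq\mathcal{GSC}_3\subseteq\mathcal C_3\) are immediate from Corollary~\ref{cor:lsc-clifford-invariant} and the definitions. So for (i) it suffices to show \(\mathcal C_3\subseteq\mathcal{LSC}_3\). By Lemma~\ref{lem:pauli-kernel-criteria}(ii) with \(d=3^2\), this amounts to showing \(|K(U)|\ge 9\) for every \(U\in\mathcal C_3\). Part (ii) then follows at once, since \(\mathcal N_3\subseteq\mathcal C_3=\mathcal{LSC}_3\).

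To bound \(K(U)\), I will use Lemma~\ref{lem:pauli-kernel-criteria}(iii). The image \(\mathcal S_U=\Phi_U(\mathbb Z_9^2)\) is an abelian subgroup of \(\operatorname{Sp}_2(\mathbb Z_9)=\SL_2(\mathbb Z_9)\), and \(\mathcal S_U\cong\mathbb Z_9^2/K(U)\). In particular \(\mathcal S_U\) is an abelian \(3\)-group generated by at most two elements, with exponent dividing \(9\). Hence \(|K(U)|\ge 9\) is equivalent to \(|\mathcal S_U|\le 9\).

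Suppose instead that \(|\mathcal S_U|\ge 27\). A \(2\)-generated group of exponent dividing \(9\) and order at least \(27\) is \(\mathbb Z_9\times\mathbb Z_3\) or \(\mathbb Z_9^2\), so it contains an element \(g\) of order \(9\). Because \(\mathcal S_U\) is abelian, it lies in the centralizer of \(g\) in \(\SL_2(\mathbb Z_9)\). The key claim is therefore: for every \(g\in\SL_2(\mathbb Z_9)\) of order \(9\), the Sylow \(3\)-subgroup of its centralizer has order at most \(9\). This claim gives the contradiction and finishes the argument.

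To prove the claim, I first check that \(\bar g:=g\bmod 3\) is not scalar. If \(g\equiv\pm I\pmod 3\), write \(g=\pm(I+3M)\). Then \((I+3M)^3=I+9M+27M^2+27M^3=I\) in \(M_2(\mathbb Z_9)\), so \(g^6=I\), and the order of \(g\) is not \(9\). Since \(\bar g\) is non-scalar, \(\bar g\) is cyclic on \(\mathbb F_3^2\), and Hensel/Nakayama lifting shows \(g\) is cyclic on \(\mathbb Z_9^2\). Hence the centralizer of \(g\) in \(M_2(\mathbb Z_9)\) is the commutative ring \(R:=\mathbb Z_9[g]\), which is free of rank \(2\). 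The centralizer of \(g\) in \(\SL_2\) is therefore \(\{r\in R^\times:\det r=1\}\).

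I will then count this group using reduction modulo \(3\). Its image lies in the norm-one units of \(\mathbb F_3[\bar g]\), which is one of \(\mathbb F_9\), \(\mathbb F_3\times\mathbb F_3\), or \(\mathbb F_3[\varepsilon]/(\varepsilon^2)\). The corresponding norm-one groups have orders \(4\), \(2\), and \(6\), so the \(3\)-part of the image is at most \(3\). The kernel of reduction consists of the elements \(1+3h\) with \(h\in R/3R\). For these, \(\det(1+3h)=1+3\operatorname{tr}h\), so the condition is \(\operatorname{tr}h\equiv0\pmod 3\). This is one nontrivial linear condition on the \(3\)-dimensional \(\mathbb F_3\)-space \(R/3R\cong\mathbb F_3^2\) (trace is not identically zero since \(\operatorname{tr}1=2\)), so the kernel has order \(3\). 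Hence the Sylow \(3\)-subgroup of the centralizer has order at most \(9\), which proves the claim.

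I expect the main obstacle to be making this centralizer computation airtight. In particular, the identification of the centralizer with \(\mathbb Z_9[g]\) over the non-field \(\mathbb Z_9\) needs care, as does the trace computation \(\det(1+3h)\equiv1+3\operatorname{tr}h\). If the cyclic-vector argument proves awkward, the fallback is to conjugate \(g\) within \(\GL_2(\mathbb Z_9)\) to a companion matrix, using that \(\bar g\) has a cyclic vector, and then compute the centralizer directly.
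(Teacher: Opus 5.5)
Your proposal is correct. It reaches the key bound $|\mathcal S_U|\le 9$ by a somewhat different route than the paper. The reduction to $|K(U)|\ge 9$ is the same in both.

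The paper splits on the reduction $\overline{\mathcal S}$ of $\mathcal S_U$ modulo $3$:
\begin{enumerate}[(a)]
\item If $\overline{\mathcal S}$ is trivial, both generators have the form $I+3M$ and order dividing $3$.
\item Otherwise $|\overline{\mathcal S}|$ divides $\gcd(81,24)=3$. One conjugates a nontrivial $\overline S$ to $\left(\begin{smallmatrix}1&1\\0&1\end{smallmatrix}\right)$. An explicit computation then shows that the kernel elements $I+3M$ commuting with $S$ and having determinant one number at most $3$.
\end{enumerate}

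You instead split on whether $\mathcal S_U$ contains an element $g$ of order $9$, and dispatch the exponent-$3$ case by two-generation. You then prove a structural lemma: $g$ is cyclic, its centralizer in $\M_2(\mathbb Z_9)$ is the rank-two ring $\mathbb Z_9[g]$, and the determinant-one units of that ring have $3$-part at most $9$. The last step goes through the norm-one groups of the three possible quadratic $\mathbb F_3$-algebras.

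Both arguments rest on the same mechanism. The congruence kernel consists of matrices $I+3M$, and $\det(I+3M)=1+3\operatorname{tr}M$ imposes a single linear condition. The paper's version is shorter, needing only $|\SL_2(\mathbb F_3)|=24$ and one commutation computation with a normalized unipotent. Yours proves a stronger, basis-free statement about the centralizer of every order-$9$ element, and avoids normalizing $\overline S$.

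One small slip: $R/3R\cong\mathbb F_3^2$ is $2$-dimensional, not $3$-dimensional. Your conclusion that the kernel has order $3$ is the right one for a single nontrivial linear condition on a $2$-dimensional space.
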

\begin{proof}
By Lemma~\ref{lem:pauli-kernel-criteria}(ii)--(iii), it suffices to prove \(|\mathcal S_U|\leq9\), since \(|K(U)|\,|\mathcal S_U|=81\). The argument is given in Appendix~\ref{ssec:sc_3}. It yields \(\mathcal C_3\subseteq\mathcal{LSC}_3\); Corollary~\ref{cor:lsc-clifford-invariant} then gives part (i). Part~(ii) follows immediately because \(\mathcal N_3\subseteq\mathcal C_3\).
\end{proof}

Moreover, the inclusion \(\mathcal N_3\subseteq\mathcal{LSC}_3\) is strict. To show this, we construct another gate \(U_9\in\mathcal{LSC}_k\setminus\mathcal N_k\) for every \(k\geq3\). We later extend this construction
to obtain a third-level gate that is not semi-Clifford in every non-square-free dimension. First, we define the Clifford gates \(E_9|x\rangle := \omega^{x(x-1)/2}|x\rangle\) and \(H_9|x\rangle := \frac{1}{3}\sum_{y\in \mathbb Z_9}\omega^{xy}|y\rangle.\) We note that \(x(x-1)/2\) is always an integer, so we do not need a finer phase. These are respectively the standard phase-shift and the discrete Fourier transform gates, which normalize the Weyl--Pauli group and hence lie in \(\mathcal C_2\); see \cite[Eqs.~(13)--(18) and Proposition~1]{farinholt2014ideal}. We define \(A_9 := Z^*H_9^*E_9^{-3}H_9\), and \(B_9 := XE_9^3\); since \(\mathcal{C}_2\) is a group, they all belong to \(\mathcal{C}_2\). Moreover, the Pauli factors \(X\) and \(Z^*\) do not affect the symplectic representation. A direct calculation gives
\[\rho([A_9]) = \begin{pmatrix}1 & 3 \\ 0 & 1\end{pmatrix}, \quad \rho([B_9]) = \begin{pmatrix}1 & 0 \\ 3 & 1\end{pmatrix}.
\]
We will prove that \((A_9,B_9)\) forms a conjugate pair and then, by the discrete von Neumann theorem (Lemma~\ref{lem:conjugate_pairs}), we can guarantee the existence of the corresponding third-level hierarchy gate \(U_9\), which is not semi-Clifford. Details are shown below.

\begin{prop}\label{prop:d9-non-semi-clifford}
The unitary gate \(U_9\) constructed above lies in \(\mathcal{LSC}_3/\mathcal{N}_3\), and thus it is not semi-Clifford.
\end{prop}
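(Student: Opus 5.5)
The plan has three stages. First, verify that $(A_9,B_9)$ is a conjugate pair, so that $U_9$ exists and lies in $\mathcal C_3$. Second, compute the Pauli kernel $K(U_9)$. Third, rule out membership in $\mathcal N$.

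\emph{Stage 1: $U_9$ exists and lies in $\mathcal C_3$.} I will check directly from the definitions of $E_9$ and $H_9$ that $B_9A_9=\omega A_9B_9$ and that both operators have order $9$, adjusting global phases if necessary. For the commutation relation I would write $H_9^*E_9^{-3}H_9$ in the $X$-basis, where it is a phase polynomial in the shift label. Then $A_9$ and $B_9$ become explicit monomial operators, and the commutator reduces to a short computation with the quadratic phases $\omega^{\pm 3x(x-1)/2}$. Lemma~\ref{lem:conjugate_pairs}(ii) then produces $U_9$ with $U_9XU_9^*=A_9$ and $U_9ZU_9^*=B_9$. Since $\mathcal C_2$ is a group, $\mathcal W(A_9,B_9)\subseteq\mathcal C_2$, and Lemma~\ref{lem:conjugate_pairs}(iii) gives $U_9\in\mathcal C_3$. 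The same lemma gives $U_9\in\mathcal C_k$ for all $k\ge3$.

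\emph{Stage 2: the Pauli kernel, membership in $\mathcal{LSC}_3$, and failure of $\mathcal{SC}_3$.} By Lemma~\ref{lem:pauli-kernel-criteria}(iii), $\Phi_{U_9}$ is a homomorphism with
\[
\Phi_{U_9}((a,b)^\top)=\rho([A_9])^a\rho([B_9])^b=\begin{pmatrix}1&3a\\0&1\end{pmatrix}\begin{pmatrix}1&0\\3b&1\end{pmatrix}=\begin{pmatrix}1&3a\\3b&1\end{pmatrix}\pmod 9,
\]
using $9ab\equiv0$. This matrix is the identity exactly when $3\mid a$ and $3\mid b$, so $K(U_9)=3\mathbb Z_9^2$. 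This is the non-free Lagrangian $L_1$, so Lemma~\ref{lem:pauli-kernel-criteria}(ii) gives $U_9\in\mathcal{LSC}_3$. Since $K(U_9)\subseteq 3\mathbb Z_9^2$ contains no unimodular element, Lemma~\ref{lem:pauli-kernel-criteria}(i) gives $U_9\notin\mathcal{SC}_3$.

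\emph{Stage 3: $U_9\notin\mathcal N$.} This is the main obstacle. By Lemma~\ref{lem:unimodular-direction-criteria}(ii), I must show the following: for every pair of unimodular vectors $v=(a,b)^\top$ and $w$, the operator $T:=U_9W_vU_9^*=A_9^aB_9^b$ does not commute with $W_w$.

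\begin{enumerate}[(1)]
\item Suppose $T$ commutes with $W_w$. Since $T$ is Clifford with representative $S=\left(\begin{smallmatrix}1&3a\\3b&1\end{smallmatrix}\right)$, we get $[W_w]=[W_{Sw}]$. Hence $(S-I)w=0$, that is, $3aw_2\equiv 3bw_1\equiv0\pmod 9$.
\item If $a$ is a unit, then $3\mid w_2$. Unimodularity of $w$ then makes $w_1$ a unit, which forces $3\mid b$. The case where $b$ is a unit is symmetric. So only finitely many Clifford-equivalence types of $(v,w)$ survive.
\item For the survivors I need an extra argument. As in the proof of Lemma~\ref{lem:unimodular-direction-criteria}(ii), $T$ has nine distinct eigenvalues. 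Commutation with $W_w$ would then force $T\in\mathfrak A_{\mathbb Z_9w}$, so $T$ would be a polynomial in $W_w$. In particular $T$ would commute with every element of $\mathcal P_{\mathbb Z_9 w}$.
\item To reach a contradiction, I will compute the Weyl expansion of $A_9^aB_9^b$ explicitly from the formulas for $E_9^{\pm3}$ and $H_9$. I expect its support to be a coset of $\operatorname{im}(S-I)=3\mathbb Z_9(a,b)$-type lines (or a similar set) that is not contained in any cyclic free line $\mathbb Z_9w$.
\item As a fallback, I will test the phase: compute the scalar in $TW_wT^*=\lambda W_w$, using the quadratic phases of $E_9^3$, and show $\lambda\neq1$ for every surviving $w$.
\end{enumerate}

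Closure of $\mathcal N$ under Clifford multiplication (Corollary~\ref{cor:lsc-clifford-invariant}) should cut this case analysis down to a few representatives, for instance $v=e_1$ or $v=e_2$, and $w\in\{e_1,(3h,1)^\top\}$.
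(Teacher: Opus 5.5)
\textbf{Verdict: genuine gap in Stage 3.} Stages 1 and 2 are fine. Stage 2 is actually a cleaner route than the paper's for two of the claims. Since $K(U_9)=\ker\Phi_{U_9}=3\mathbb Z_9^2$ is the non-free Lagrangian, $U_9\in\mathcal{LSC}_3$. Since this kernel contains no unimodular vector, $U_9\notin\mathcal{SC}_3$ follows directly; this is the argument of Lemma~\ref{lem:explicit-clifford-shears}.

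The problem is Stage 3, which is the real content of the proposition. Steps (1)--(2) use only the symplectic representatives, and these cannot settle the question. Survivors exist, for example $v=w=e_1$. There $\rho([A_9])e_1=e_1$, and $A_9$ fails to commute with $X$ only because of the phase $A_9XA_9^*=\omega^{-1}X$. That phase comes from the Pauli factor $Z^{-1}$, which $\rho$ does not see. Steps (3)--(5) then offer only expectations:
\begin{enumerate}[(i)]
\item Step (3) is a correct reformulation ($T\in\mathfrak A_{\mathbb Z_9w}$), but on its own it gives no new constraint.
\item The proposed reduction to representatives via Clifford invariance of $\mathcal N$ does not apply. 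The product $CU_9C'$ is a different gate, and you would have to recompute its conjugate pair.
\item The support claim in (4), as you state it, gives no contradiction. The Weyl support of a Clifford with representative $S$ is a union of cosets of $\operatorname{im}(S-I)$. But for a survivor with $a$ a unit, $\operatorname{im}(S-I)=3\mathbb Z_9e_1\subseteq\mathbb Z_9w$.
\end{enumerate}

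\textbf{The missing idea.} Use the cube $T^3=U_9W_{3v}U_9^*$, which is an explicit Pauli. From $A_9^3=\zeta^{-1}Z^{-3}$ and $B_9^3=\zeta X^3$ you get $[T^3]=[X^{3b}Z^{-3a}]$. If $T$ commutes with $W_w=X^cZ^e$, so does $T^3$, which gives $ac+be\equiv0\pmod 3$. Combined with $ae\equiv bc\equiv0\pmod 3$ from your step (1), this forces $c\equiv e\equiv0\pmod 3$ whether $a$ or $b$ is the unit. That contradicts unimodularity of $w$. This is the paper's argument.

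\textbf{Alternative repair of step (4).} The same explicit information also fixes your support approach. Since $A_9=Z^{-1}\cdot(\text{polynomial in }X^3)$ and $B_9=X\cdot(\text{polynomial in }Z^3)$, the Weyl support of $A_9^aB_9^b$ lies in the specific coset $(b,-a)^{\mathsf T}+3\mathbb Z_9^2$. Every element of this coset is unimodular. So if $T\in\mathfrak A_{\mathbb Z_9w}$, then $w\equiv k(b,-a)^{\mathsf T}\pmod 3$ for some unit $k$. Then $aw_2\equiv bw_1\equiv0\pmod 3$ forces $a\equiv b\equiv0\pmod 3$, again a contradiction.

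Either way, you need an explicit computation with $E_9^{\pm3}$ that goes beyond the symplectic data, and your proposal does not supply it.
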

\begin{proof}
A direct calculation gives \(A_9^3=\zeta^{-1}Z^{-3}\) and \(B_9^3=\zeta X^3\). Consequently, conjugation by \(U_9\) maps the Pauli subgroup associated with the nonfree Lagrangian
\(3\mathbb Z_9^2\) onto itself. Thus \(U_9\in\mathcal{LSC}_3\).

Suppose \(U_9\in\mathcal N_3\). By Lemma~\ref{lem:unimodular-direction-criteria}(ii), there are addmisible \(v=(a,b)^{\mathsf T}\) and \(w=(c,e)^{\mathsf T}\) in \(\mathbb Z_9^2\) such that \(G:=U_9W_vU_9^*=A_9^aB_9^b\) commutes with \(W_w\). On the one hand, the symplectic representation of \(G\) is
\[
\rho([G])
=
\begin{pmatrix}1&3\\0&1\end{pmatrix}^{a}
\begin{pmatrix}1&0\\3&1\end{pmatrix}^{b}
=
\begin{pmatrix}1&3a\\3b&1\end{pmatrix}
\quad\text{over }\mathbb Z_9.
\]
Commutativity of \(G\) and \(W_w\) implies \(ae\equiv bc\equiv0\pmod3\). On the other hand, the identities for \(A_9^3\) and \(B_9^3\), together with \(B_9A_9=\omega A_9B_9\), give \([G^3]=[X^{3b}Z^{-3a}]\). Since \(G^3\) also commutes with \(W_w=X^cZ^e\),
the Weyl commutation relation gives
\(\omega^{-3(ac+be)}=1\), and hence
\(ac+be\equiv0\pmod3\). Meanwhile, admissibility of \(v\) means that at least one of
\(a,b\) is nonzero modulo \(3\).
If \(a\not\equiv0\pmod3\), then \(ae\equiv0\)
forces \(e\equiv0\), and \(ac+be\equiv0\)
then forces \(c\equiv0\).
If \(b\not\equiv0\pmod3\), the same equations
give \(c\equiv e\equiv0\).
In either case, \(w\) is not unimodular,
a contradiction. Therefore \(U_9\notin\mathcal N_3\).
Since \(\mathcal{SC}_3\subseteq\mathcal N_3\),
\(U_9\) is not semi-Clifford.
\end{proof}

Moreover, since \(\Pi_9\) lies exactly in \(\mathcal{C}_3\), it follows that \(\Pi_9\in(\mathcal{LSC}_k\cap\mathcal{N}_k)/\mathcal{SC}_k\) and \(U_9\in\mathcal{LSC}_k\setminus\mathcal N_k\)
for every \(k\geq3\). However, the inclusion \(\mathcal N_3\subseteq\mathcal{LSC}_3\) and equality \(\mathcal C_3=\mathcal{LSC}_3=\mathcal{GSC}_3\) does not extend to higher levels. In Appendix~\ref{app:dim9}, we construct two gates \(Q_9\) and \(V_9\) satisfying  \(Q_9\in\mathcal N_k\setminus\mathcal{LSC}_k\) and \(V_9\in\mathcal{GSC}_k/(\mathcal{LSC}_k\cup\mathcal{N}_k)\) 
for every \(k\geq4\). 

\subsection{Third-level gates that are not semi-Clifford in arbitrary non-square-free dimensions}
\label{ssec:not-sc-non-square-free}

In this subsection, we consider the dimension \(d=\prod_{j=1}^s p_j^{\alpha_j}\) to be non-square-free, so that \(\alpha_j\ge 2\) for some \(j\). We now construct a general family of $\mathcal{C}_3$ gates that are not semi-Clifford by observing that the factor \(3\) in the previous \(9\)-dimensional construction serves as a specific instance of a zero-divisor satisfying the following properties.

\begin{lemma}
\label{lem:square-zero-divisors}
An integer \(d\geq2\) is non-square-free if and only if it has a divisor \(r\) with \(1<r<d\) and \(r^2\equiv0\pmod d\). One may take \(r:=\prod_jp_j^{\lceil\alpha_j/2\rceil}\) and define a divisor \(t:=d/r=\prod_jp_j^{\lfloor\alpha_j/2\rfloor}>1\). Then \(ra\equiv0\pmod d\) if and only if \(a\in t\mathbb Z_d\).
\end{lemma}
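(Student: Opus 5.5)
We argue the three assertions in turn, working throughout with the factorization \(d=\prod_j p_j^{\alpha_j}\) and \(p\)-adic valuations \(v_{p_j}\). We first handle the explicit choice of \(r\) and \(t\), since it gives the forward direction of the equivalence immediately.

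For the forward direction, assume \(\alpha_j\ge2\) for some \(j\), and set \(r:=\prod_jp_j^{\lceil\alpha_j/2\rceil}\). Since \(2\lceil\alpha_j/2\rceil\ge\alpha_j\) for every \(j\), each prime power \(p_j^{\alpha_j}\) divides \(r^2\), so \(r^2\equiv0\pmod d\). Clearly \(r>1\) and \(r\mid d\). Moreover \(\lceil\alpha_j/2\rceil<\alpha_j\) exactly when \(\alpha_j\ge2\). The assumed index \(j\) therefore gives \(r<d\), and equivalently \(t=d/r=\prod_jp_j^{\alpha_j-\lceil\alpha_j/2\rceil}=\prod_jp_j^{\lfloor\alpha_j/2\rfloor}>1\).

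For the converse, suppose \(d\) is square-free and \(r\mid d\) satisfies \(d\mid r^2\). For each prime \(p_j\mid d\) we have \(v_{p_j}(r^2)=2v_{p_j}(r)\ge1\), which forces \(v_{p_j}(r)\ge1\). Since \(d\) is square-free, it follows that \(d\mid r\), so \(r=d\). This contradicts \(r<d\), so no such divisor exists. (If one does not assume \(r\mid d\), the same valuation argument shows \(d\mid r\), again contradicting \(1<r<d\).)

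For the final assertion, fix \(a\in\mathbb Z\). The condition \(ra\equiv0\pmod d\) is equivalent to \((d/\gcd(r,d))\mid a\). Because \(r\mid d\), this reads \(t\mid a\), that is, \(a\in t\mathbb Z_d\); this remains well defined modulo \(d\) because \(t\mid d\). Equivalently, \(ra\equiv0\pmod d\) means \(d/r\mid a\) in \(\mathbb Z\).

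The only real content is the valuation bookkeeping in the converse. Everything else is routine divisibility, so no step presents a genuine obstacle.
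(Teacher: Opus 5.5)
Your proof is correct and takes essentially the same approach as the paper. Both arguments reduce \(d\mid r^2\) to the valuation inequalities \(2v_{p_j}(r)\geq\alpha_j\), note that a proper such \(r\) exists exactly when some \(\alpha_j\geq2\) (with the ceiling choice then working), and derive the final assertion from \(d=rt\); you simply spell out details the paper leaves implicit, such as square-freeness forcing \(r=d\) and the well-definedness of \(t\mid a\) modulo \(d\).
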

\begin{proof}
For \(r=\prod_jp_j^{\beta_j}\mid d\), the condition \(d\mid r^2\) is equivalent to \(2\beta_j\geq\alpha_j\) for every \(j\). Such an \(r\) can be proper exactly when some \(\alpha_j\geq2\), and the displayed choice then works. The remaining assertions follow from \(d=rt\) and \(d\mid r^2\).
\end{proof}
\begin{lemma}\label{lem:explicit-clifford-shears}
Let \(r,t\) be as in Lemma~\ref{lem:square-zero-divisors}. There is a conjugate pair of Clifford gates \((A,B)\) with
\[
\rho([A])=
\begin{pmatrix}
1&r\\
0&1
\end{pmatrix},
\qquad
\rho([B])=
\begin{pmatrix}
1&0\\
r&1
\end{pmatrix}.
\]
Moreover, the conjugate pair \((A,B)\) determines a unitary \(U\in\mathcal C_3\), whose Pauli kernel is \(K(U)=t\mathbb Z_d^2\). In particular, \(U\) is not semi-Clifford.
\end{lemma}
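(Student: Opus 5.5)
We will mirror the dimension-nine construction, replacing the factor $3$ by $r$. By surjectivity of $\rho$, choose Cliffords $A_0,B_0$ with $\rho([A_0])=\left(\begin{smallmatrix}1&r\\0&1\end{smallmatrix}\right)$ and $\rho([B_0])=\left(\begin{smallmatrix}1&0\\r&1\end{smallmatrix}\right)$. We then adjust them by Pauli factors and global phases to obtain a conjugate pair. Concretely, we may take $B_0:=XD_r$, where $D_r|x\rangle=\tau^{r x^2+c x}|x\rangle$ is a diagonal quadratic-phase Clifford realizing the shear $v\mapsto (v_1, v_2+rv_1)$; the extra linear phase $c$ handles the parity issue when $d$ is even. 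For $A_0$ we take a Fourier conjugate of an analogous diagonal gate, times a power of $Z$. Such gates are permitted since $\mathcal C_2$ is a group.

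\textbf{Step 1: the commutation relation.} Because $\rho([B_0A_0B_0^{-1}A_0^{-1}])$ is the commutator of the two shears, we compute
\[
\left(\begin{smallmatrix}1&0\\r&1\end{smallmatrix}\right)\left(\begin{smallmatrix}1&r\\0&1\end{smallmatrix}\right)\left(\begin{smallmatrix}1&0\\-r&1\end{smallmatrix}\right)\left(\begin{smallmatrix}1&-r\\0&1\end{smallmatrix}\right).
\]
All terms involving $r^2$ vanish since $r^2\equiv 0$, so this product is the identity. Hence $B_0A_0=\lambda W_u A_0B_0$ for some Pauli $W_u$ and phase $\lambda$. Multiplying $A_0$ and $B_0$ by suitable Weyl operators $W_{a}$ and $W_{b}$ changes the Pauli part of this commutator. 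Since the Paulis form a normal subgroup, the correction can be tracked symplectically: we need to solve a linear equation over $\mathbb Z_d$ in which the matrices $M_A-I$ and $M_B-I$ (nilpotent, with image in $r\mathbb Z_d^2$) appear. We expect to force $B A=\omega AB$ exactly by this choice. Alternatively, we will simply verify the relation by direct computation with the explicit $A,B$, as was done for $d=9$.

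\textbf{Step 2: order $d$.} We then rescale by phases so that $A^d=B^d=I$, and check that no smaller power is scalar; the latter already follows from $BA=\omega AB$. We expect Steps 1 and 2 to be the main obstacle: a careful phase and Pauli bookkeeping that distinguishes even from odd $d$, using $\tau$. This is where the footnote about $4\mid d$ suggests subtlety. Lemma~\ref{lem:conjugate_pairs}(ii) then gives $U$ with $UXU^*=A$ and $UZU^*=B$. Since $\mathcal W(A,B)\subseteq\mathcal C_2$ (products of Cliffords with phases), Lemma~\ref{lem:conjugate_pairs}(iii) yields $U\in\mathcal C_3$.

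\textbf{Step 3: the Pauli kernel.} By Lemma~\ref{lem:pauli-kernel-criteria}(iii), $K(U)$ is the kernel of $v=(a,b)\mapsto\rho([A^aB^b])$. Using $r^2\equiv0$, this representative is
\[
\left(\begin{smallmatrix}1&ra\\0&1\end{smallmatrix}\right)\left(\begin{smallmatrix}1&0\\rb&1\end{smallmatrix}\right)=\left(\begin{smallmatrix}1&ra\\rb&1\end{smallmatrix}\right).
\]
It is the identity exactly when $ra\equiv rb\equiv 0$, that is, when $a,b\in t\mathbb Z_d$ by Lemma~\ref{lem:square-zero-divisors}. Thus $K(U)=t\mathbb Z_d^2$. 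Since $t>1$ divides $d$, Lemma~\ref{lem:pauli-kernel-criteria}(i) gives $U\notin\mathcal{SC}$.
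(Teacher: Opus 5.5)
Your outer structure matches the paper's, and your Step~3 is exactly the paper's argument. You lift the two shears to Cliffords, normalize them to a conjugate pair, apply Lemma~\ref{lem:conjugate_pairs}(ii)--(iii) to get \(U\in\mathcal C_3\), and read off \(K(U)=t\mathbb Z_d^2\) from \(\Phi_U(a,b)=\left(\begin{smallmatrix}1&ra\\rb&1\end{smallmatrix}\right)\).

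The gap is Step~1, which is the real content of the lemma and which you leave as an expectation. That the shears commute only gives \(B_0A_0=\lambda W_uA_0B_0\). Write \(a=(a_1,a_2)^{\top}\) and \(b=(b_1,b_2)^{\top}\). Replacing \(A_0,B_0\) by \(W_aA_0,W_bB_0\) changes the label to \(u+(M_B-I)a-(M_A-I)b=u+(-rb_2,\,ra_1)^{\top}\). So the Pauli correction can only succeed if \(u\in r\mathbb Z_d^2\), and you never verify this for your lifts; you do not even fix the linear coefficient \(c\). In fact it does hold, because the class of \(u\) modulo \(r\mathbb Z_d^2\) is independent of the lifts. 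But seeing that requires exhibiting one good pair of lifts, which is exactly the missing step. Even once the commutator is scalar, you must still show it can be made exactly \(\omega\), which needs the phases in \(B_0W_aB_0^*\) and \(A_0W_bA_0^*\).

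Step~2 also assumes \(A_0^d\) and \(B_0^d\) are scalars before rescaling. This does follow from \(B_0A_0=\omega A_0B_0\): these powers then commute with \(A_0\) and \(B_0\), which generate \(\B(\mathcal H)\) by Lemma~\ref{lem:conjugate_pairs}(i). You should say so.

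The idea you are missing is to make the diagonal shear \(t\)-periodic, i.e.\ a polynomial in \(Z^r\); this removes all the bookkeeping. The paper uses \(E_r|x\rangle=\omega^{rx(x-1)/2+\epsilon x}|x\rangle\), with \(\epsilon\in\{0,r/2\}\) chosen according to the parity of \(t\). In your normalization \(c=d\) works, i.e.\ \(D_r|x\rangle=(-1)^x\tau^{rx^2}|x\rangle\), because under \(x\mapsto x+t\) the phase \(\tau^{rx^2+cx}\) changes by \(\tau^{t(d+c)}\).

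With this choice, \(T:=H^*E_r^*H\) is a polynomial in \(X^r\) realizing the upper shear. Since \(d\mid r^2\), the operators \(X^r\) and \(Z^r\) commute, so \(TE_r=E_rT\) exactly. Combining this with \(TX=XT\) and \(E_rZ=ZE_r\), the choice \(A_0=Z^{-1}T\), \(B_0=XE_r\) gives \(B_0A_0=XZ^{-1}E_rT=\omega Z^{-1}TXE_r=\omega A_0B_0\) on the nose. No Pauli correction is needed, and there is no even/odd case split beyond the choice of \(\epsilon\). The \(4\mid d\) non-splitting phenomenon you worried about plays no role.
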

\begin{proof}
We define
\[
H|x\rangle=\frac1{\sqrt d}\sum_{y\in\mathbb Z_d}\omega^{xy}|y\rangle,
\qquad E_r|x\rangle=\omega^{rx(x-1)/2+\epsilon x}|x\rangle,
\qquad
\epsilon=\begin{cases}0,&t\text{ odd},\\r/2,&t\text{ even}.\end{cases}
\]
Since \(t\mid r\), \(r\) is even whenever \(t\) is even, so the exponent is integral. Its change under \(x\mapsto x+t\) is
\(rtx+rt(t-1)/2+\epsilon t\), a multiple of \(d\) in both parity cases. Thus the phases are \(t\)-periodic, in particular well-defined on \(\mathbb Z_d\), and \(E_r\) is a polynomial in \(Z^r\).

The adjacent phase ratio gives \(E_rXE_r^*=\omega^\epsilon XZ^r\), while \(E_rZE_r^*=Z\). Hence \(E_r\) is Clifford with the lower shear as its symplectic action. Set \(T=H^*E_r^*H\). The identities \(H^*ZH=X\) and \(H^*XH=Z^{-1}\) show that \(T\) has the upper shear as its action and is a polynomial in \(X^r\). Because \(d\mid r^2\), \(X^r\) and \(Z^r\) commute, so \(T\) and \(E_r\) commute. Put \(A_0=Z^{-1}T\) and \(B_0=XE_r\). Using also that \(T\) commutes with \(X\) and \(E_r\) with \(Z\),
\[
B_0A_0=XZ^{-1}E_rT=\omega Z^{-1}TXE_r=\omega A_0B_0.
\]
Now choose a unit eigenvector $u$ of $B_0$, with eigenvalue $\lambda_1\in\mathbb T$. Then \(B_0A_0^ju=\lambda_1\omega^jA_0^ju\) for \(0\leq j<d\). These are $d$ distinct eigenvalues, so the corresponding vectors form an orthonormal basis. Therefore $B_0^d=\lambda_1^dI$, and similarly we can get $A_0^d=\lambda_2^dI$ for some \(\lambda_2\in\mathbb T\). We define \(A:=\frac{1}{\lambda_2}A_0\) and \(B:=\frac{1}{\lambda_1}B_0\), forming a valid conjugate pair. Then, Lemma~\ref{lem:conjugate_pairs}(ii) gives the corresponding unitary \(U\in\mathcal C_3\), as all products \(A^aB^b\) are Clifford. Finally, Lemma~\ref{lem:pauli-kernel-criteria}(iii) gives
\[
\Phi_U(a,b)=\rho([A])^a\rho([B])^b
=\begin{pmatrix}1&ra\\rb&1\end{pmatrix}.
\]
Thus \((a,b)\in K(U)\) if and only if \(ra=rb=0\) in \(\mathbb Z_d\), equivalently \((a,b)\in t\mathbb Z_d^2\). Since \(t>1\), this kernel contains no unimodular element, and Lemma~\ref{lem:pauli-kernel-criteria}(i) proves the claim.
\end{proof}

Lemmas~\ref{lem:squarefree-semiclifford} and ~\ref{lem:explicit-clifford-shears} give the following equivalence between square-free dimensions and semi-Clifford gates.

\begin{theorem}
\label{thm:semi-clifford-equivalent}
Let \(d\geq2\). Every gate in the single-qudit Clifford hierarchy is semi-Clifford if and only if \(d\) is square-free. Moreover, when \(d\) is not square-free, a non-semi-Clifford gate already exists at the third level.
\end{theorem}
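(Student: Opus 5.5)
The theorem combines the two directions already established, so the plan is to assemble them and check that the quantifiers line up. For the ``if'' direction, assume \(d\) is square-free. Every gate of the single-qudit Clifford hierarchy lies in some level \(\mathcal C_k\), where \(k\in\mathbb N\). Lemma~\ref{lem:squarefree-semiclifford} gives \(\mathcal C_k=\mathcal{SC}_k\subseteq\mathcal{SC}\) for each \(k\), so every hierarchy gate is semi-Clifford. This uses Lemma~\ref{lem:coprime-hierarchy-factorization} to factor \(\Gamma U\Gamma^*\) into prime-dimensional hierarchy gates, and then Theorem~\ref{thm:prime-semi-clifford} on each factor. The only point to note here is the level \(k=1\). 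There Theorem~\ref{thm:prime-semi-clifford} is applied to Pauli gates, which are trivially semi-Clifford because \(W_v=W_v\cdot I\cdot I\) and \(I\) is diagonal. So no separate argument is needed.

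For the ``only if'' direction, and for the additional claim about the third level, assume \(d\) is not square-free. Lemma~\ref{lem:square-zero-divisors} then supplies a proper divisor \(r\) of \(d\) with \(d\mid r^2\), together with the complementary divisor \(t=d/r>1\). Feeding this \(r\) into Lemma~\ref{lem:explicit-clifford-shears} gives a conjugate pair of Clifford gates \((A,B)\) whose symplectic representatives are the upper and lower \(r\)-shears. By Lemma~\ref{lem:conjugate_pairs}(ii),(iii), this pair determines a gate \(U\in\mathcal C_3\) with Pauli kernel \(K(U)=t\mathbb Z_d^2\). Since \(t\) divides \(d\) and \(t>1\), the kernel contains no unimodular element. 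Lemma~\ref{lem:pauli-kernel-criteria}(i) therefore shows \(U\notin\mathcal{SC}\). This \(U\) is a hierarchy gate at level three that is not semi-Clifford, which proves both the contrapositive of ``only if'' and the ``moreover'' statement. Because \(\mathcal C_3\subseteq\mathcal C_k\) for all \(k\geq3\), the same \(U\) also witnesses failure at every higher level.

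No genuine obstacle remains, since the substantive work sits in the two cited lemmas. The step I would double-check is the hypothesis of Lemma~\ref{lem:pauli-kernel-criteria}(i) that \(U\in\mathcal C_k\), which holds here with \(k=3\). I would also confirm that the non-unimodularity argument needs only \(t\mid d\) and \(t>1\). Indeed, every coordinate of an element of \(t\mathbb Z_d^2\) lies in the proper ideal \(t\mathbb Z_d\), so the coordinates cannot generate \(\mathbb Z_d\).
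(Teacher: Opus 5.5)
Your proposal is correct and follows the paper's proof: the theorem is obtained by combining Lemma~\ref{lem:squarefree-semiclifford} for square-free \(d\) with the level-three gate of Lemma~\ref{lem:explicit-clifford-shears} (built from the divisor \(r\) of Lemma~\ref{lem:square-zero-divisors}) for non-square-free \(d\). Your additional checks are all sound: the \(k=1\) case, the inclusion \(\mathcal C_3\subseteq\mathcal C_k\) for \(k\geq 3\), and the fact that \(t\mathbb Z_d^2\) contains no unimodular element because \(t\mid d\) and \(t>1\).
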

\section{Third-level hierarchy gates are generalized semi-Clifford for arbitrary \texorpdfstring{$d$}{}}
\label{sec:gscg}
Section~\ref{ssec:not-sc-non-square-free} constructs third-level
gates that are not semi-Clifford in every non-square-free
dimension. Nevertheless, every third-level gate is generalized
semi-Clifford. We use the symplectic action from
Lemma~\ref{lem:symplectic-action-lagrangians} to find an invariant
Lagrangian in prime-power dimension. The Weyl-invariance
condition in the preliminaries then gives the required Pauli
algebra directly. Finally, CRT factorization reduces arbitrary
dimensions to the prime-power case.

\subsection{The prime-power case}
\label{ssec:prime-power-gsc}
Fix \(d=p^\alpha\), where \(p\) is prime and \(\alpha\geq1\).
We use the Lagrangians \(L_m=p^m\mathbb Z_{p^\alpha}e_1
\oplus p^{\alpha-m}\mathbb Z_{p^\alpha}e_2 \subseteq\mathbb Z_{p^\alpha}^2\) from Lemma~\ref{lem:lagrangian-classification}.

\begin{lemma}[Level-three generalized semi-Clifford theorem in prime-power dimension]
\label{lem:prime_power_gsc}
When \(d=p^\alpha\), every \(U\in\mathcal C_3\) is generalized semi-Clifford. Moreover, there exist \(0\leq r,s\leq\lfloor\alpha/2\rfloor\) and Clifford gates \(C_1,C_2\) such that \((C_2UC_1)\mathfrak A_{L_r}(C_2UC_1)^*=\mathfrak A_{L_s}\).
\end{lemma}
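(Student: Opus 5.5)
The plan is to find a Lagrangian submodule that is fixed by the whole symplectic image $\mathcal S_U$ of Lemma~\ref{lem:pauli-kernel-criteria}(iii). From that fixed Lagrangian I would build the required Weyl-invariant algebra on the adjoint side, and then transfer back to $U$ using Lemma~\ref{lem:pauli-conjugation-equivalence}.

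\textbf{Step 1: a fixed Lagrangian.} Let $U\in\mathcal C_3$. By Lemma~\ref{lem:pauli-kernel-criteria}(iii), $\Phi_U(v)=\rho([UW_vU^*])$ is a homomorphism from $\mathbb Z_d^2$ to $\operatorname{Sp}_2(\mathbb Z_d)$. Its image $\mathcal S_U\cong\mathbb Z_d^2/K(U)$ is therefore a finite abelian $p$-group. By Lemma~\ref{lem:symplectic-action-lagrangians}, $\mathcal S_U$ acts on $\operatorname{Lag}(\mathbb Z_{p^\alpha}^2)$. By Lemma~\ref{lem:lagrangian-classification}(iii), this set has cardinality $1+p+\cdots+p^\alpha\equiv 1\pmod p$. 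Every orbit of a $p$-group has size a power of $p$, so the orbit decomposition forces at least one orbit of size one. This gives a Lagrangian $\hat L$ with $S(\hat L)=\hat L$ for all $S\in\mathcal S_U$.

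\textbf{Step 2: Weyl invariance for $U^*$.} For each $v$, the gate $C_v:=UW_vU^*$ is Clifford with $\rho([C_v])\in\mathcal S_U$. As in the proof of Lemma~\ref{lem:clif_prop}(ii), this gives $C_v\mathfrak A_{\hat L}C_v^*=\mathfrak A_{\rho([C_v])\hat L}=\mathfrak A_{\hat L}$. Rewriting, $W_v(U^*\mathfrak A_{\hat L}U)W_v^*=U^*\mathfrak A_{\hat L}U$ for every $v\in\mathbb Z_d^2$. Hence $U^*\in\mathcal{GSC}$ via Definition~\ref{def:gsc}, with Lagrangian $\hat L$. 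Note that we never need $U^*\in\mathcal C_3$; only the Weyl-invariance condition is used. Lemma~\ref{lem:pauli-conjugation-equivalence}(ii) then provides a Lagrangian $L'$ with $U^*\mathfrak A_{\hat L}U=\mathfrak A_{L'}$, that is, $U\mathfrak A_{L'}U^*=\mathfrak A_{\hat L}$. So $U\in\mathcal{GSC}$ (consistent with closure under adjoints in Corollary~\ref{cor:lsc-clifford-invariant}).

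\textbf{Step 3: normalization.} By Lemma~\ref{lem:lagrangian-classification}(i), there are $S_1,S_2\in\operatorname{Sp}_2(\mathbb Z_{p^\alpha})$ with $L'=S_1(L_r)$ and $\hat L=S_2(L_s)$, where $0\le r,s\le\lfloor\alpha/2\rfloor$. By surjectivity of $\rho$, choose Cliffords $C_1$ with $\rho([C_1])=S_1$ and $C_2$ with $\rho([C_2])=S_2^{-1}$. Then $C_1\mathfrak A_{L_r}C_1^*=\mathfrak A_{L'}$ and $C_2\mathfrak A_{\hat L}C_2^*=\mathfrak A_{L_s}$. Chaining these with $U\mathfrak A_{L'}U^*=\mathfrak A_{\hat L}$ yields $(C_2UC_1)\mathfrak A_{L_r}(C_2UC_1)^*=\mathfrak A_{L_s}$.

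The only genuinely nontrivial point is Step 1, the existence of the common fixed Lagrangian. It rests on combining two facts: $\mathcal S_U$ is a $p$-group, which is exactly where the level-three hypothesis enters, and the Lagrangian count is $\equiv1\pmod p$. Once that fixed point is in hand, Steps 2 and 3 are bookkeeping with results already established.
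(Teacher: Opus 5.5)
Your proposal is correct and follows essentially the same route as the paper. It uses the fixed Lagrangian coming from the $p$-group action of $\mathcal S_U$ on $\operatorname{Lag}(\mathbb Z_{p^\alpha}^2)$, whose cardinality is $\equiv 1 \pmod p$, then Weyl invariance of $U^*\mathfrak A_{\hat L}U$, then normalization via Lemma~\ref{lem:lagrangian-classification}(i) and surjectivity of $\rho$. The only cosmetic difference is that you read off $U\in\mathcal{GSC}$ directly from Lemma~\ref{lem:pauli-conjugation-equivalence}(ii), whereas the paper cites adjoint closure from Corollary~\ref{cor:lsc-clifford-invariant}, which rests on the same lemma.
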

\begin{proof}
Let \(U\in\mathcal C_3\), put \(V_v:=UW_vU^*\), and let \(\mathcal S_U\) be its symplectic image, same as Lemma~\ref{lem:pauli-kernel-criteria}(iii), which is a finite \(p\)-group by the same lemma. Lemma~\ref{lem:symplectic-action-lagrangians} gives its action on \(\operatorname{Lag}(\mathbb Z_{p^\alpha}^2)\). Since \(\mathcal S_U\) is a finite \(p\)-group, the orbit--stabilizer theorem implies that every orbit
has \(p\)-power cardinality; see \cite[pp.~26--27]{HoltonSheehan1993}. On the other hand, Lemma~\ref{lem:lagrangian-classification}(iii) gives \(|\operatorname{Lag}(\mathbb Z_{p^\alpha}^2)|
=1+p+\cdots+p^\alpha\equiv1\pmod p\). The orbits partition this set, and every orbit with more than one element has cardinality divisible by \(p\). Hence there is a singleton orbit, corresponding to a Lagrangian fixed by every element of
\(\mathcal S_U\). Since each \(V_v\) is Clifford, Lemma~\ref{lem:clif_prop}(ii) gives
\(V_v\mathfrak A_LV_v^*=\mathfrak A_L\) for every \(v\).
Consequently, \(W_v(U^*\mathfrak A_LU)W_v^*=U^*(V_v\mathfrak A_LV_v^*)U =U^*\mathfrak A_LU\). This is precisely the defining condition for \(U^*\in\mathcal{GSC}\). By the adjoint invariance in Lemma~\ref{cor:lsc-clifford-invariant}, \(U\in\mathcal{GSC}\).

For the standard-form assertion, first by Lemma~\ref{lem:pauli-conjugation-equivalence}, there is a Lagrangian \(\widehat L\) with \(U^*\mathfrak A_LU=\mathfrak A_{\widehat L}\).  Lemma~\ref{lem:lagrangian-classification}(i) supplies symplectic maps \(S_1,S_2\) and indices \(r,s\) such that \(S_1(L_r)=\widehat L\) and \(S_2(L)=L_s\). By the surjectivity of \(\rho\), there exist \(C_1,C_2\in\mathcal{C}_2\) such that their symplectic representations are \(S_1, S_2\) respectively. Applying Corollary~\ref{lem:clif_prop}(i) to these two maps yields \((C_2UC_1)\mathfrak A_{L_r}(C_2UC_1)^*=C_2(U\mathfrak A_{\widehat L}U^*)C_2^*=\mathfrak A_{L_s}\)
\end{proof}

\begin{rmk}
\label{rmk:higher-level-difficult}
The symplectic action in Lemma~\ref{lem:symplectic-action-lagrangians} remains available at every hierarchy level, but the preceding proof does not automatically extend beyond level three. For \(U\in\mathcal C_{k+1}\) with \(k\geq3\), the conjugates \(UW_vU^*\in\mathcal C_k\) need not be Clifford. Although their projective classes still form an abelian group, the Clifford--symplectic homomorphism \(\rho\) is not defined on these elements in general. Thus, the proof no longer supplies a symplectic \(p\)-group action to which the fixed-point argument can be applied.
\end{rmk}

\subsection{Arbitrary dimension}
\label{ssec:arbitrary-dimension-gsc}
The Chinese-remainder factorization of the hierarchy and of Pauli algebras now extends the result to every \(d\geq2\).

\begin{theorem}[Level-three generalized semi-Clifford theorem]
\label{thm:level-three-gsc-arbitrary}
For every \(d\geq2\), \(\mathcal C_3\subseteq\mathcal{GSC}\).
\end{theorem}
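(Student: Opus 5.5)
The plan is to reduce to the prime-power case through the CRT factorization. Let \(U\in\mathcal C_3\). By Lemma~\ref{lem:coprime-hierarchy-factorization} with \(k=3\), we can write \(\Gamma U\Gamma^*=\bigotimes_{j=1}^s U_j\) with \(U_j\in\mathcal C_3(q_j)\), where \(q_j=p_j^{\alpha_j}\). Lemma~\ref{lem:prime_power_gsc} applies on each factor \(\mathcal H_j\). Together with Lemma~\ref{lem:pauli-conjugation-equivalence}(ii) (which is dimension-independent, so it holds in dimension \(q_j\)), it yields Lagrangian submodules \(M_j,\widehat M_j\subseteq\mathbb Z_{q_j}^2\) with \(U_j\mathfrak A_{M_j}U_j^*=\mathfrak A_{\widehat M_j}\).

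Next I glue the local Lagrangians. Put \(L:=\Theta^{-1}\bigl(\prod_j M_j\bigr)\) and \(\widehat L:=\Theta^{-1}\bigl(\prod_j\widehat M_j\bigr)\). Since \(\Theta\) is a bijection \(\mathbb Z_d^2\to\prod_j\mathbb Z_{q_j}^2\) compatible with the module structures (each component is reduction mod \(q_j\) followed by the invertible map \((a,b)\mapsto(a,r_jb)\)), these preimages are submodules. By Lemma~\ref{lem:crt-lag}, their CRT components are exactly the \(M_j\) (resp. \(\widehat M_j\)), so \(L\) and \(\widehat L\) are Lagrangian.

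The algebra identity comes next. From \(\Gamma W_v\Gamma^*=\bigotimes_j W^{(j)}_{\Theta_j(v)}\), where \(W^{(j)}\) denotes the Weyl operators on \(\mathcal H_j\), we get \(\Gamma\mathfrak A_L\Gamma^*=\operatorname{span}\{\bigotimes_j W^{(j)}_{x_j}:x_j\in M_j\}=\bigotimes_j\mathfrak A_{M_j}\). The middle equality uses \(\Theta(L)=\prod_j M_j\). Hence
\[
\Gamma U\mathfrak A_LU^*\Gamma^*=\Bigl(\bigotimes_j U_j\Bigr)\Bigl(\bigotimes_j\mathfrak A_{M_j}\Bigr)\Bigl(\bigotimes_j U_j\Bigr)^*=\bigotimes_j U_j\mathfrak A_{M_j}U_j^*=\bigotimes_j\mathfrak A_{\widehat M_j}=\Gamma\mathfrak A_{\widehat L}\Gamma^*,
\]
so \(U\mathfrak A_LU^*=\mathfrak A_{\widehat L}\). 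Lemma~\ref{lem:pauli-conjugation-equivalence}(ii) then gives \(U\in\mathcal{GSC}\).

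The only step needing care is the tensor-product identity for the spans. Both sides are spans of elementary tensors of Weyl operators indexed by \(\prod_j M_j\), so equality reduces to the surjectivity \(\Theta(L)=\prod_j M_j\), which holds by construction. I expect no genuine obstacle beyond bookkeeping. The substantive work, the fixed-Lagrangian argument, was already done in Lemma~\ref{lem:prime_power_gsc}. If that lemma's statement is read as only giving the \(\mathcal{GSC}\) membership, I would first invoke Lemma~\ref{lem:pauli-conjugation-equivalence}(ii) in each prime-power dimension to extract the pair \((M_j,\widehat M_j)\) explicitly, and only then glue.
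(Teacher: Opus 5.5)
Your proposal is correct and follows the paper's proof essentially step for step: you factor $U$ via Lemma~\ref{lem:coprime-hierarchy-factorization}, obtain local Lagrangian pairs from Lemma~\ref{lem:prime_power_gsc}, glue them through $\Theta^{-1}$ and Lemma~\ref{lem:crt-lag}, and factor the Pauli algebras under $\Gamma$. Two of your steps are left implicit in the paper: invoking Lemma~\ref{lem:pauli-conjugation-equivalence}(ii) in each dimension $q_j$ to extract the pairs $(M_j,\widehat M_j)$, and checking that $\Theta(L)=\prod_j M_j$ yields the tensor-product identity for the spans.
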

\begin{proof}
Let \(U\in\mathcal C_3\). By Lemma~\ref{lem:coprime-hierarchy-factorization}, \(\Gamma U\Gamma^*=\bigotimes_j U_j\), where \(U_j\in\mathcal C_3(q_j)\). Lemma~\ref{lem:prime_power_gsc} gives Lagrangians \(L_j,\widehat L_j\subseteq\mathbb Z_{q_j}^2\) with \(U_j\mathfrak A_{\widehat L_j}U_j^*=\mathfrak A_{L_j}\).
Let \(L=\Theta^{-1}\Bigl(\prod_j L_j\Bigr)\) and \(\widehat L=\Theta^{-1}\Bigl(\prod_j\widehat L_j\Bigr)\). By Lemma~\ref{lem:crt-lag}, both are Lagrangian, and the corresponding Pauli algebras factor under \(\Gamma\). Therefore \(\Gamma(U\mathfrak A_{\widehat L}U^*)\Gamma^*
    =\bigotimes_j(U_j\mathfrak A_{\widehat L_j}U_j^*)
    =\bigotimes_j\mathfrak A_{L_j}
    =\Gamma\mathfrak A_L\Gamma^*.\)
Thus \(U\mathfrak A_{\widehat L}U^*=\mathfrak A_L\), proving \(U\in\mathcal{GSC}\).
\end{proof}

\bibliographystyle{amsalpha}
\bibliography{references.bib}

\appendix
\appendixpage
\section{\texorpdfstring{\((\mathcal{SC}_k)_{k\geq1}\), \((\mathcal N_k)_{k\geq1}\), \((\mathcal{LSC}_k)_{k\geq1}\), and \((\mathcal{GSC}_k)_{k\geq1}\) are strictly increasing}{}}
\label{app:strict_nest}

The recursive definition gives $\mathcal C_{k-1}\subseteq\mathcal C_k$ for $k\geq2$. We prove that $\mathcal C_k$ is strictly bigger than $\mathcal C_{k-1}$ by constructing a diagonal gate of at level $k$, for every $d\geq2$. Since a diagonal gate commutes with $Z$, the conjugate pair in Lemma~\ref{lem:conjugate_pairs} suggests focusing on its conjugation of $X$. The key observation is that this conjugation takes a discrete derivative of the phase function, as in the proof of Lemma~2 in \cite[Sec.~III]{Cui_2017}.
We reverse this operation to construct successive gates by discrete integration. More precisely, let $D_g|x\rangle=e^{2\pi i g(x)}|x\rangle$. To construct a diagonal gate $D_{g'}$ satisfying \(D_{g'}XD_{g'}^*=e^{2\pi i\alpha}XD_g\), we seek a phase function $f$ such that \(\Delta g'(x):=g'(x+1)-g'(x)
\equiv g(x)+\alpha\pmod{1}\), where the indices are understood modulo $d$.
A sufficient condition is the exact real-valued identity $\Delta g'(x)=g(x)+\alpha$, which we solve by cumulative summation \(g'(x)=g'(0)+\sum_{t=0}^{x-1}\bigl(g(t)+\alpha\bigr)\). For this solution to be consistent with cyclic indexing, we require $g'(d)=g'(0)$. Equivalently, the differences must sum to zero, that is, \(0=\sum_{x=0}^{d-1}\Delta g'(x) =\sum_{x=0}^{d-1}g(x)+d\alpha\). We therefore choose $\alpha=-\mu(g)$, where $\mu(g)=d^{-1}\sum_{x=0}^{d-1}g(x)$.
Subtracting the mean thus ensures cyclic consistency, while changing the target gate only by a global phase. This motivates the following recursive construction. For a real-valued function on \(\{0,\ldots,d-1\}\), put \(\mu(h)=d^{-1}\sum_{x=0}^{d-1}h(x)\). We define
\[
F_1(x):=x/d,\qquad
F_k(x):=\sum_{t=0}^{x-1}\bigl(F_{k-1}(t)-\mu(F_{k-1})\bigr)
\quad(k\geq2),
\qquad
D_k|x\rangle:=e^{2\pi iF_k(x)}|x\rangle.
\]

\begin{lemma}
\label{lem:conjugation}
We have \(D_1=Z\). For \(k\geq2\), set \(c_k=e^{-2\pi i\mu(F_{k-1})}\). Then \(D_kZD_k^*=Z\), \(D_kXD_k^*=X(c_kD_{k-1})\), and thus \(D_k\in\mathcal C_k\setminus\mathcal C_{k-1}\).
\end{lemma}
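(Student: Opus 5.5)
The plan is to prove the three assertions in the order stated: first \(D_1=Z\), then the two conjugation identities by direct computation on basis states, and finally the level claim by induction on \(k\).

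For \(k=1\), \(F_1(x)=x/d\) gives \(e^{2\pi iF_1(x)}=\omega^x\), so \(D_1=Z\). For \(k\geq2\), \(D_k\) is diagonal, so it commutes with \(Z\) and \(D_kZD_k^*=Z\).

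For the \(X\)-identity I evaluate on \(\ket{x}\). We have \(D_kXD_k^*\ket{x}=e^{2\pi i(F_k(x+1)-F_k(x))}\ket{x+1}\), where \(x+1\) is taken mod \(d\) but \(F_k\) is the function on representatives \(\{0,\dots,d-1\}\). For \(0\le x\le d-2\), the definition gives
\[
F_k(x+1)-F_k(x)=F_{k-1}(x)-\mu(F_{k-1}).
\]
The wrap-around case \(x=d-1\) is where I expect the only real subtlety. There I use \(F_k(0)=0\) and \(F_k(d)=\sum_{t=0}^{d-1}\bigl(F_{k-1}(t)-\mu(F_{k-1})\bigr)=0\). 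Hence \(F_k(0)-F_k(d-1)=F_k(d)-F_k(d-1)\), which again equals \(F_{k-1}(d-1)-\mu(F_{k-1})\). Mean subtraction is exactly what makes this consistent, so the same formula holds for every \(x\in\mathbb Z_d\). It follows that \(D_kXD_k^*\ket{x}=c_k e^{2\pi iF_{k-1}(x)}\ket{x+1}=X(c_kD_{k-1})\ket{x}\), because \(X D_{k-1}\ket{x}=e^{2\pi iF_{k-1}(x)}\ket{x+1}\). This proves \(D_kXD_k^*=X(c_kD_{k-1})\).

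For the level statement I induct on \(k\), proving \(D_k\in\mathcal C_k\setminus\mathcal C_{k-1}\), with \(\mathcal C_0\) understood as empty or the scalars; for \(k=1\) this is just \(Z\in\mathcal P\). The step uses two facts.

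\emph{Membership in \(\mathcal C_k\).} Suppose \(D_{k-1}\in\mathcal C_{k-1}\). Since \(D_k\) is diagonal it commutes with \(Z^b\), so for a Weyl operator \(X^aZ^b\) we get \(D_kX^aZ^bD_k^*=(D_kXD_k^*)^aZ^b=(c_kXD_{k-1})^aZ^b\). I need this product to lie in \(\mathcal C_{k-1}\). Rather than arguing directly that \(\mathcal C_{k-1}\) is closed under these products, I use the conjugate-pair description, Lemma~\ref{lem:conjugate_pairs}(iii). The pair \((A,B)=(c_kXD_{k-1},Z)\) is a conjugate pair: it satisfies \(BA=\omega AB\) because \(D_{k-1}\) commutes with \(Z\), and \(A\) has order \(d\) up to phase as a conjugate of \(X\). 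So it suffices to show \(\mathcal W(A,B)\subseteq\mathcal C_{k-1}\). Each \(A^a\) equals \(X^a\) times a diagonal gate \(D'\) built from shifted copies of \(D_{k-1}\). Its multiplication by \(Z^b\) keeps it in the form (Pauli)\(\cdot\)(diagonal). By the same recursion one level down, such a diagonal \(D'\) lies in \(\mathcal D_{k-1}\), since \(\mathcal D_{k-1}\) is a group by \cite{Cui_2017}. Pauli times \(\mathcal C_{k-1}\) stays in \(\mathcal C_{k-1}\) by Lemma~\ref{lem:clif_prop}(i).

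\emph{Non-membership in \(\mathcal C_{k-1}\).} If \(D_k\in\mathcal C_{k-1}\), then \(D_kXD_k^*\in\mathcal C_{k-2}\). Multiplying by the Pauli \(X^*\) and the scalar \(c_k^{-1}\), which preserves the level by Lemma~\ref{lem:clif_prop}(i) for \(k-2\geq2\), gives \(D_{k-1}\in\mathcal C_{k-2}\), contradicting the induction hypothesis. The small cases \(k=2,3\) I check directly: \(D_2\) is not Pauli because \(D_2XD_2^*=cXZ\) has a nontrivial \(Z\)-part.

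The main obstacle is the membership step, specifically verifying that \(A^aZ^b\) stays in level \(k-1\). I would first try the induction hypothesis strengthened to ``\(D_{k-1}\) and all its translates \(X^jD_{k-1}X^{-j}\) lie in the group \(\mathcal D_{k-1}\).'' The translates satisfy \(X^jD_{k-1}X^{-j}=\text{phase}\cdot D_{k-1}\cdot(\text{lower-level diagonal})\), so the strengthened hypothesis would propagate through the induction.
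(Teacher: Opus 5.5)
Your proposal is correct and follows essentially the paper's route. The mean-subtracted difference identity handles the wrap-around, \(D_kX^aD_k^*\) is \(X^a\) times a product of translates of \(D_{k-1}\) in the group \(\mathcal D_{k-1}\), and exactness follows by removing the scalar and the Pauli \(X\) until you reach \(D_2\notin\mathcal P\). Two simplifications: the appeal to Lemma~\ref{lem:conjugate_pairs}(iii) is tautological, since \(\mathcal W(D_kXD_k^*,Z)=D_k\mathcal P D_k^*\); and no strengthened induction is needed, because each translate \(X^{-r}D_{k-1}X^{r}\) is diagonal and lies in \(\mathcal C_{k-1}\) by Lemma~\ref{lem:clif_prop}(i) (for \(k-1=1\) it is just \(\omega^{r}Z\)).
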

\begin{proof}
By construction,
\(F_k(x+1)-F_k(x)=F_{k-1}(x)-\mu(F_{k-1})\).
This includes the cyclic edge \(x=d-1\), because the right-hand side sums to zero. The conjugation identities follow, as does
\[
D_kX^aD_k^*=X^aE_{k,a},\qquad
E_{k,a}=c_k^a\prod_{r=0}^{a-1}X^{-r}D_{k-1}X^r
\quad(0\leq a<d).
\]
Starting with \(D_1=Z\), suppose inductively that \(D_{k-1}\in\mathcal D_{k-1}\). Since diagonal gates in the same level of Clifford hierarchy form a group, we have \(E_{k,a}\in\mathcal D_{k-1}\). Therefore, \(D_k(\lambda X^aZ^b)D_k^*=\lambda X^aE_{k,a}Z^b\in\mathcal C_{k-1}\), using Pauli multiplication invariance (the group property at level one and Lemma~\ref{lem:clif_prop} at higher levels). Hence \(D_k\in\mathcal D_k\).

To prove exactness, the cyclic adjacent ratio of the entries of \(D_2\) is \(e^{2\pi i(x/d-(d-1)/(2d))}\), which varies with \(x\). A diagonal Pauli has a constant adjacent ratio, so \(D_2\notin\mathcal C_1\). For \(k\geq3\), if \(D_k\in\mathcal C_{k-1}\), then \(X(c_kD_{k-1})\in\mathcal C_{k-2}\). Removing the scalar and Pauli factor would give \(D_{k-1}\in\mathcal C_{k-2}\), contradicting induction.
\end{proof}

Obviously, for fixed \(k\), \(D_k\) in Lemma~\ref{lem:conjugation} also belongs to \(\mathcal{SC}_k\setminus\mathcal{SC}_{k-1}\), \(\mathcal{N}_k\setminus\mathcal{N}_{k-1}\), \(\mathcal{LSC}_k\setminus\mathcal{LSC}_{k-1}\) and \(\mathcal{GSC}_k\setminus\mathcal{GSC}_{k-1}\). This shows that \((\mathcal{SC}_k)_{k\geq1}\), \((\mathcal N_k)_{k\geq1}\), \((\mathcal{LSC}_k)_{k\geq1}\), and \((\mathcal{GSC}_k)_{k\geq1}\) are strictly increasing. Moreover, the same examples provide strictness for any number \(n\geq1\) of qudits. M We define \(C_k^{(n,d)}\) to be the \(k\)-th level of Clifford hierarchy gates in \(n\) qudits of dimension \(d\). Indeed, \(U\in\mathcal C_k^{(1,d)}\) if and only if \(U\otimes I_d^{\otimes(n-1)}\in\mathcal C_k^{(n,d)}\). At level one, this follows from the uniqueness of Pauli tensor factors up to scalar phases. Inductively, test against tensor-product Pauli operators and apply the equivalence at level \(k-1\); the Pauli factors on the remaining sites can be removed without changing the level. Consequently, for every \(k\geq2\), \(D_k\otimes I_d^{\otimes(n-1)}\in\mathcal C_k^{(n,d)}\setminus\mathcal C_{k-1}^{(n,d)}\). 
\section{\texorpdfstring{Relations between \(\mathcal{SC}\), \(\mathcal N\), \(\mathcal{LSC}\), and \(\mathcal{GSC}\) in dimension nine}{}}
\label{app:dim9}
Throughout this appendix, \(d=9\), \(\eta=e^{2\pi i/27}\), \(\omega=\eta^3\), and \(\zeta=\eta^9\). We first record the diagonal Clifford characterization needed below. A diagonal unitary \(T\) is Clifford if and only if, up to a scalar, \(T\ket{x}=\omega^{ex^2+fx}\ket{x}\) for some \(e,f\in\mathbb Z_9\).
Indeed, if \(T\ket{x}=t_x\ket{x}\) is Clifford, then
\(TXT^*=\mu XZ^r\) for some \(\mu\in\mathbb T\) and \(r\in\mathbb Z_9\),
so \(t_{x+1}/t_x=\mu\omega^{rx}\). Multiplying these ratios around the nine-cycle gives \(\mu^9=1\). Their successive products yield a quadratic exponent, since \(2\) is invertible modulo nine. Conversely, quadratic phases conjugate \(X\) to a Pauli and fix \(Z\), hence are Clifford. Likewise, as presented in Lemma~\ref{lem:clif_prop}, a permutation gate is Clifford exactly when its permutation is affine with unit slope, as established in the preliminaries.

\begin{lemma}[Diagonal third-level gates in dimension nine]
\label{lem:diagonal-third-level-nine}
A diagonal unitary \(D\) belongs to \(\mathcal C_3\) if and only if
there are unique \(\lambda\in\mathbb T\) and \(a,b,c\in\{0,\ldots,8\}\) such that \(D\ket{x}=\lambda\eta^{ax^3+3bx^2+3cx}\ket{x}\) for every \(x\in\mathbb Z_9\).
\end{lemma}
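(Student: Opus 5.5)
A diagonal unitary \(D\) is determined by its phase function \(t_x\). The plan is to reduce membership in \(\mathcal C_3\) to a condition on the discrete derivative of \(t\), and to use the diagonal Clifford characterization recorded just above.

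\emph{Reduction to one conjugate.} Since \(D\) commutes with \(Z\), we have \(DX^aZ^bD^*=(DX^aD^*)Z^b\), and \(DX^aD^*\) is a product of conjugates of \(X\). Because \(\mathcal C_2\) is a group, \(D\in\mathcal C_3\) if and only if \(DXD^*\in\mathcal C_2\). Write \(DXD^*=XT\), where \(T\ket{x}=\tfrac{t_{x-1}}{t_x}\ket{x}\) is diagonal; after reindexing, \(T\) is the diagonal with entries given by the inverse adjacent ratios. Since \(X\) is Clifford, \(D\in\mathcal C_3\) if and only if \(T\) is a diagonal Clifford. By the characterization above, this holds if and only if \(t_{x+1}/t_x=\nu\,\omega^{ex^2+fx}\) for some \(\nu\in\mathbb T\) and \(e,f\in\mathbb Z_9\). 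Here the exponent should be read on representatives \(x\in\{0,\dots,8\}\) including the cyclic edge; it is well defined modulo \(9\).

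\emph{Sufficiency.} For \(D\ket{x}=\lambda\eta^{ax^3+3bx^2+3cx}\), first check that this is well defined on \(\mathbb Z_9\). We need \(a(x+9)^3\equiv ax^3 \pmod{27}\) and \(3b(x+9)^2\equiv 3bx^2\pmod{27}\), and both hold since \((x+9)^3-x^3=27x^2+243x+729\) and \(3b(18x+81)\equiv 0\pmod{27}\). The adjacent ratio exponent is
\[a(3x^2+3x+1)+3b(2x+1)+3c,\]
which equals \(\eta^{a+3b+3c}\,\omega^{ax^2+(a+2b)x}\). This is of the required quadratic form, so \(D\in\mathcal C_3\).

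\emph{Necessity.} Given the ratios \(\nu\omega^{ex^2+fx}\), I will solve the discrete integration. Choose \(a\equiv e\pmod 9\) and \(b\) with \(a+2b\equiv f\pmod 9\); this is possible since \(2\) is invertible. Then \(D':=\eta^{ax^3+3bx^2}\) has adjacent ratios \(\eta^{a+3b}\omega^{ex^2+fx}\). Therefore \(D\bar{D'}\) has constant adjacent ratio \(\nu\eta^{-a-3b}\), so its entries are \(\lambda\kappa^x\) for \(x=0,\dots,8\). The cyclic consistency (product of the nine ratios equals \(1\)) forces \(\kappa^9=1\), i.e. \(\kappa=\omega^{c}=\eta^{3c}\). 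This yields the form with \(a,b,c\) taken mod \(9\).

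\emph{Uniqueness: the main subtlety.} The concern is the step from residues mod \(9\) to representatives in \(\{0,\dots,8\}\): changing \(a\) by \(9\) changes \(\eta^{ax^3}\) by \(\eta^{9x^3}=\zeta^{x^3}=\zeta^{x}\), since \(x^3\equiv x\pmod 3\), and this is absorbed into \(c\). So the representation mod \(9\) is not unique, but fixing \(a\in\{0,\dots,8\}\) pins it down. Suppose \(\eta^{ax^3+3bx^2+3cx}=\mu\,\eta^{a'x^3+3b'x^2+3c'x}\) for all \(x\) with \(a,a',b,b',c,c'\in\{0,\dots,8\}\). Setting \(x=0\) gives \(\mu=1\), so the polynomial \(\delta(x)=(a-a')x^3+3(b-b')x^2+3(c-c')x\) vanishes mod \(27\) for all \(x\). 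Its adjacent ratio mod \(9\) is \((a-a')x^2+(a-a'+2(b-b'))x+\text{const}\), and this must vanish identically on \(\mathbb Z_9\). Since the monomials \(1,x,x^2\) are independent as functions on \(\mathbb Z_9\) (the Vandermonde determinant at \(0,1,2\) is \(2\), a unit), this gives \(a\equiv a'\) and then \(b\equiv b'\pmod 9\), hence \(a=a'\) and \(b=b'\). Finally, \(x=1\) gives \(3(c-c')\equiv0\pmod{27}\), so \(c=c'\). The only potential trap is the lift of \(x\in\mathbb Z_9\) versus exponents mod \(27\), and the well-definedness check above handles it.
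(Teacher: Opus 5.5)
Your proof is correct and follows essentially the same route as the paper: you reduce membership in $\mathcal C_3$ to $DXD^*=XT$ with $T$ a diagonal Clifford, use the first-difference formula for sufficiency, and integrate via $\eta^{ax^3+3bx^2}$ with a constant-ratio residual $\omega^{cx}$ for necessity; uniqueness comes from the values of the normalized adjacent ratios, which you express as the unit Vandermonde determinant at $0,1,2$. One cosmetic fix: with $DXD^*=XT$ the diagonal entries of $T$ are $t_{x+1}/t_x$, not $t_{x-1}/t_x$, but this is harmless because the quadratic-phase criterion is invariant under inversion and index shifts, and the rest of your argument uses the correct ratio $t_{x+1}/t_x$.
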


\begin{proof}
For integers \(a,b,c\), the polynomial \(h(x)=ax^3+3bx^2+3cx\) satisfies
\(h(x+9)\equiv h(x)\pmod{27}\), so the formula is well-defined. Its first difference is \(h(x+1)-h(x)=3ax^2+3(a+2b)x+a+3b+3c\). Thus \(DZD^*=Z\) and \(DXD^*=\eta^{a+3b+3c}XT\), where \(T\ket{x}=\omega^{ax^2+(a+2b)x}\ket{x}\) is Clifford. Since the Pauli group is generated by \(X,Z\) and scalar phases, and \(\mathcal C_2\) is a group, every conjugated Pauli is Clifford. Hence \(D\in\mathcal C_3\).

Conversely, if \(D\in\mathcal C_3\), then \(E=X^*DXD^*\) is a diagonal Clifford, so \(E\ket{x}:=\mu\omega^{ex^2+fx}\ket{x}\). Choose \(a,b\in\{0,\ldots,8\}\) with \(a\equiv e\) and \(a+2b\equiv f\pmod9\), and set
\(F\ket{x}:=\eta^{ax^3+3bx^2}\ket{x}\). The first-difference formula shows that the adjacent diagonal-entry ratios of \(DF^*\) are constant. Their product is one, so that constant is \(\omega^c\) for a unique \(c\in\{0,\ldots,8\}\). Therefore \(DF^*=\lambda Z^c\), giving the required form.

For uniqueness, \(\lambda=\langle0|D|0\rangle\). The normalized adjacent ratios determine \(e,f\) uniquely: their values at \(x=1,2\) determine the coefficients of \(ex^2+fx\), since \(2\) is a unit in \(\mathbb Z_9\). Thus \(a,b\), and then \(c\), are uniquely determined.
\end{proof}

\subsection{Relations at level three}
\label{ssec:sc_3}
We define \(\Pi_9\ket{x}=\ket{p(x)}\), where \(p(x)=-3x^2+x+3\pmod9\). It cycles \(0,3,6\) and fixes the other six labels, with inverse
\(p^{-1}(x)=3x^2+x-3\pmod9\). Direct substitution gives \(\Pi_9X^aZ^b\Pi_9^*\ket{x} =\omega^{3bx^2+bx-3b}\ket{(1+3a)x+a-3a^2}\).
The permutation is affine with unit slope and the phase is quadratic, so every conjugate is Clifford. Since \(\Pi_9\) is a permutation gate, \(\Pi_9\in\mathcal N_3\). For a conjugate to be Pauli, its permutation must be a translation, which forces \(3\mid a\). The second finite difference of its phase exponent is \(6b\pmod9\), so \(3\mid b\) is also necessary.
Conversely, these divisibilities reduce the conjugate to \(X^aZ^b\).
Consequently, \(K(\Pi_9)=3\mathbb Z_9^2\), and thus \(\Pi_9\in(\mathcal N_3\cap\mathcal{LSC}_3)\setminus\mathcal{SC}_3.\) The membership assertions follow from Lemma~\ref{lem:pauli-kernel-criteria}, since \(3\mathbb Z_9^2\) is Lagrangian and contains no unimodular element.

\begin{proof}[Proof of Proposition~\ref{prop:c3-lsc-gsc-d9}]
Fix \(U\in\mathcal C_3\) and put \(\mathcal S=\mathcal S_U=\langle S_a,S_b\rangle\), where \(S_a=\Phi_U(1,0)\) and \(S_b=\Phi_U(0,1)\).
By Lemma~\ref{lem:pauli-kernel-criteria}, \(\mathcal S\) is abelian, its order divides \(81\), and \(|K(U)|\,|\mathcal S|=81\). It therefore suffices to prove \(|\mathcal S|\leq9\). Let \(\mu:\mathcal S\to\overline{\mathcal S}\) denote reduction
modulo three.

If \(\overline{\mathcal S}=\{I\}\), both generators have the form
\(I+3M\) and their cubes are the identity. Since they commute,
\(|\mathcal S|\leq9\).

Otherwise, \(|\overline{\mathcal S}|\) divides both \(81\) and \(|\operatorname{SL}_2(\mathbb F_3)|=24\), so \(|\overline{\mathcal S}|=3\).
Choose \(S\in\mathcal S\) with \(\overline S\neq I\). This nonidentity matrix of order three is unipotent. After a change of basis in \(\operatorname{GL}_2(\mathbb F_3)\), lifted to \(\operatorname{GL}_2(\mathbb Z_9)\), we may assume \(\overline S=\left(\begin{smallmatrix}1&1\\0&1\end{smallmatrix}\right)\). Every \(T\in\ker\mu\) has a unique expression \(T=I+3M\) with \(M\in\operatorname{M}_2(\mathbb F_3)\), where \(3M\) is interpreted using any lift to \(\operatorname{M}_2(\mathbb Z_9)\). Since \(T\) commutes with \(S\), we have
\(M\overline S=\overline SM\), hence \(M=\left(\begin{smallmatrix}u&v\\0&u\end{smallmatrix}\right)\).
The identity \(\det T=1\) gives \(\operatorname{tr}M=2u=0\) in \(\mathbb F_3\), so \(u=0\).
Thus \(|\ker\mu|\leq3\), and \(|\mathcal S|=|\overline{\mathcal S}|\,|\ker\mu|\leq9\).

In either case, \(|K(U)|\geq9\), so Lemma~\ref{lem:pauli-kernel-criteria}(iii) gives
\(U\in\mathcal{LSC}_3\). Together with \(\mathcal{LSC}_3\subseteq\mathcal{GSC}_3
\subseteq\mathcal C_3\), this proves (i). Part (ii) follows immediately from \(\mathcal N_3\subseteq\mathcal C_3\).
\end{proof}

\subsection{Separations at higher levels}
\label{app:dim9-separations}
We construct \(Q_9\in\mathcal N_4\setminus\mathcal{LSC}\) and \(V_9\in\mathcal{GSC}_4\setminus(\mathcal N\cup\mathcal{LSC})\). Since the excluded classes have no hierarchy restriction, these examples give the corresponding separations for every \(k\geq4\).

\subsubsection{A permutation--diagonal gate outside \(\mathcal{LSC}\)}
Set \(T_9\ket{x}=\eta^{x^4+x^2}\ket{x}\). This is well-defined modulo nine: replacing \(x\) by \(x+9\) changes the exponent by \(9(x^3+2x)\pmod{27}\), which vanishes. We define diagonal gates \(E_a\ket{x}=\eta^{h_a(x)}\ket{x}\), where \(h_a(x)=4ax^3+6a^2x^2+(4a^3+2a)x+a^4+a^2\). Then \(T_9X^aT_9^*=X^aE_a\). The quadratic and linear coefficients of \(h_a\) are divisible by three, since \(a^3\equiv a\pmod3\). Lemma~\ref{lem:diagonal-third-level-nine} therefore gives \(E_a\in\mathcal D_3\). Here the cubic coefficient can be reduced modulo nine by adjusting the linear coefficient, using \(9x^3\equiv9x\pmod{27}\). Thus \(T_9\in\mathcal D_4\).

Put \(Q_9=\Pi_9T_9\). For \(y=p^{-1}(x)\), we have \(y\equiv x\pmod3\), \(y^2\equiv x^2\pmod9\), and \(y^3\equiv x^3\pmod{27}\). Since \(y-x\equiv3(x^2-1)\pmod9\), we have \(h_a(p^{-1}(x))-h_a(x) \equiv3(4a^3+2a)(x^2-1)\pmod{27}\). The added term preserves the form in Lemma~\ref{lem:diagonal-third-level-nine}, so
\(\Pi_9E_a\Pi_9^*\in\mathcal D_3\). Consequently,
\[
Q_9X^aZ^bQ_9^*
=(\Pi_9X^aZ^b\Pi_9^*)(\Pi_9E_a\Pi_9^*)\in\mathcal C_3,
\]
by Clifford invariance. Therefore \(Q_9\in\mathcal N_4\).

For this conjugate to be Pauli, its permutation must be a translation, forcing \(a=3a_0\) with \(a_0\in\{0,1,2\}\). Its action then becomes \(Q_9X^{3a_0}Z^bQ_9^*\ket{x} =\omega^{g(p^{-1}(x))}\ket{x+3a_0}\), where \(g(x)=4a_0x^3+(2a_0+b)x+3a_0^2\pmod9\). The second finite difference of \(g\circ p^{-1}\) is \(6a_0x+6b\pmod9\). If the phase is affine, its values at \(x=0,1\) force \(3\mid b\) and \(a_0=0\). These conditions also suffice.
Hence \(K(Q_9)=\{(0,0),(0,3),(0,6)\}\), which contains no Lagrangian because every Lagrangian in \(\mathbb Z_9^2\) has order nine. Thus \(Q_9\notin\mathcal{LSC}\).

\subsubsection{A generalized semi-Clifford gate outside both intermediate classes}
Let \(L_{\mathrm{nf}}=3\mathbb Z_9^2\). We use the joint eigenbasis
\[
\ket{y,z}_{\mathrm{ft}}
=\frac1{\sqrt3}\sum_{t=0}^2\zeta^{-zt}\ket{y+3t},
\qquad y,z\in\{0,1,2\}.
\]
It satisfies \(X^3\ket{y,z}_{\mathrm{ft}}=\zeta^z\ket{y,z}_{\mathrm{ft}}\) and \(Z^3\ket{y,z}_{\mathrm{ft}}=\zeta^y\ket{y,z}_{\mathrm{ft}}\). The nine joint eigenvalue pairs are distinct, so \(\mathfrak A_{L_{\mathrm{nf}}}\) is precisely the algebra diagonal in this basis. We define
\[
V_9\ket{y,z}_{\mathrm{ft}}
:=\eta^{y^3}\zeta^{yz^2}\ket{y,z-y^2}_{\mathrm{ft}}.
\]
The output labels are taken modulo three, whereas \(y^3\) in the exponent of \(\eta\) uses \(y\in\{0,1,2\}\). This operator permutes the basis vectors up to phases, so it is unitary and normalizes \(\mathfrak A_{L_{\mathrm{nf}}}\). Therefore \(V_9\in\mathcal{GSC}\).

We first prove \(V_9\in\mathcal C_4\). For a polynomial function \(f:\mathbb F_3^2\to\mathbb F_3\), define
\(D_f\ket{y,z}_{\mathrm{ft}}=\zeta^{f(y,z)}\ket{y,z}_{\mathrm{ft}}\). If \(f\) is affine, then \(D_f\) is a Pauli, since the phases \(\zeta^y\) and \(\zeta^z\) represent \(Z^3\) and \(X^3\).
A Pauli \(W=X^aZ^b\) translates the labels by \((y,z)\mapsto(y+\bar a,z-\bar b)\), up to phases, where bars denote reduction modulo three. Consequently, \(D_fWD_f^*=WD_{f(y+\bar a,z-\bar b)-f(y,z)}\). Finite differences lower the total degree. Thus, a polynomial of degree at most two gives a Clifford \(D_f\), and one of degree at most three gives \(D_f\in\mathcal C_3\).

Let \(P\ket{x}=\eta^{x^3}\ket{x}\). Lemma~\ref{lem:diagonal-third-level-nine} gives \(P\in\mathcal D_3\),
and expansion of \((y+3t)^3\) modulo 27 gives \(P\ket{y,z}_{\mathrm{ft}}=\eta^{y^3}\ket{y,z-y^2}_{\mathrm{ft}}\).
Hence \(V_9=PD_{yz^2}\). For \(W=X^aZ^b\), write \(D_{yz^2}WD_{yz^2}^*=WD_f\), where \(f(y,z)=(y+\bar a)(z-\bar b)^2-yz^2\). Conjugation by \(P\) replaces \(z\) by \(z+y^2\). Using \(y^3=y\) on \(\mathbb F_3\), we obtain \(f(y,z+y^2)=-\bar a y^2z+q(y,z)\), where \(q\) has degree at most two. Therefore, \(V_9WV_9^*=(PWP^*)D_qD_{-\bar a y^2z}\in\mathcal C_3\), as the first two factors are Clifford and the last lies in \(\mathcal C_3\). This proves \(V_9\in\mathcal C_4\).

Next, suppose \(V_9\in\mathcal N\). By Lemma~\ref{lem:unimodular-direction-criteria}(ii), there are unimodular labels \((a,b)\) and \((c,e)\) such that \(G=V_9X^aZ^bV_9^*\) commutes with \(W=X^cZ^e\). Up to a scalar independent of \(y,z\), we have \(G^3\ket{y,z}_{\mathrm{ft}} =\zeta^{a(z+y^2)+by}\ket{y,z}_{\mathrm{ft}}\). Since \(W\) translates the labels by \((c,-e)\) modulo three, commutation with \(G^3\) requires \(a(2cy+c^2-e)+bc=0\) in \(\mathbb F_3\) for every \(y\). If \(a\not\equiv0\pmod3\), the linear and constant coefficients give \(c\equiv e\equiv0\pmod3\), contradicting admissibility. Thus \(a\equiv0\) and \(b\not\equiv0\); the same identity then forces \(c\equiv0\), so \(e\not\equiv0\).

Write \(a=3a_0\) and \(c=3c_0\), with \(a_0,c_0\in\{0,1,2\}\). The defining action of \(V_9\) gives
\[
\begin{aligned}
G\ket{y,z}_{\mathrm{ft}}
&=\lambda_y\zeta^{(a_0+by)z}\ket{y,z-b}_{\mathrm{ft}},\\
W\ket{y,z}_{\mathrm{ft}}
&=\mu_y\zeta^{c_0z}\ket{y,z-e}_{\mathrm{ft}},
\end{aligned}
\]
where \(\lambda_y,\mu_y\) are phases independent of \(z\). Commutation requires \(e(a_0+by)=c_0b\) in \(\mathbb F_3\) for every \(y\). Subtracting the equations for \(y=1,0\) gives \(eb=0\), contradicting \(b,e\not\equiv0\pmod3\). Therefore \(V_9\notin\mathcal N\).

Finally, \(\mathcal{SC}\subseteq\mathcal N\) and Lemma~\ref{lem:unimodular-direction-criteria}(i) imply that
\(K(V_9)\) contains no unimodular element. Thus \(K(V_9)\subseteq3\mathbb Z_9^2\). For \(t,u\in\mathbb Z_3\), direct conjugation gives
\[
V_9X^{3t}Z^{3u}V_9^*\ket{y,z}_{\mathrm{ft}}
=\zeta^{t(z+y^2)+uy}\ket{y,z}_{\mathrm{ft}}.
\]
A Pauli diagonal in this basis commutes with \(X^3,Z^3\), so its label lies in \(L_{\mathrm{nf}}^\perp=L_{\mathrm{nf}}\). Its phase exponent is therefore affine in \(y,z\).
The displayed exponent is affine exactly when \(t=0\), giving \(K(V_9)=\{(0,0),(0,3),(0,6)\}\).
This kernel has order three and contains no Lagrangian, so \(V_9\notin\mathcal{LSC}\).
Both \(Q_9\) and \(V_9\) have exact level four, because Proposition~\ref{prop:c3-lsc-gsc-d9} places every third-level gate in \(\mathcal{LSC}\).

\end{document}